\newtheorem{cor}{Corollary}
\newtheorem{theorem}{Theorem}
\newtheorem{remark}{Remark}
\begin{document}
\title{Physical-layer Security for Indoor Visible Light Communications: Secrecy Capacity Analysis}

\author{Jin-Yuan Wang, \emph{Member, IEEE}, Cheng Liu, Jun-Bo Wang, \emph{Member, IEEE}, Yongpeng Wu, \emph{Senior Member, IEEE}, Min Lin, \emph{Member, IEEE}, and Julian Cheng, \emph{Senior Member, IEEE}
\thanks{This work was supported in part by
National Natural Science Foundation of China under Grants 61701254 and 61571115,
Natural Science Foundation of Jiangsu Province under Grant BK20170901,
Key International Cooperation Research Project under Grant 61720106003,
the open research fund of National Mobile Communications Research Laboratory, Southeast University under Grant 2017D06,
the open research fund of Key Lab of Broadband Wireless Communication and Sensor Network Technology (Nanjing University of Posts and Telecommunications), Ministry of Education under Grants JZNY201701 and JZNY201706,
and NUPTSF under Grant NY216009.}
\thanks{Jin-Yuan Wang and Min Lin are with Key Lab of Broadband Wireless Communication and Sensor Network Technology, Nanjing University of Posts and Telecommunications, Nanjing 210003, China, and also with National Mobile Communications Research Laboratory, Southeast University, Nanjing 210096, China (e-mail: jywang@njupt.edu.cn, linmin@njupt.edu.cn).}
\thanks{Cheng Liu and Jun-Bo Wang are with National Mobile Communications Research Laboratory, Southeast University, Nanjing 210096, China (e-mail: 220160880@seu.edu.cn, jbwang@seu.edu.cn).}
\thanks{Yongpeng Wu is with Department of Electronic Engineering, Shanghai Jiao Tong University, Minhang 200240, China (e-mail: yongpeng.wu@sjtu.edu.cn).}
\thanks{Julian Cheng is with School of Engineering, The University of British Columbia, Kelowna, BC, Canada V1V 1V7 (e-mail: julian.cheng@ubc.ca).}
\thanks{Corresponding author: Jun-Bo Wang (e-mail: jbwang@seu.edu.cn)}
}

\maketitle
\begin{abstract}
This paper investigates the physical-layer security for an indoor visible light communication (VLC) network consisting of a transmitter, a legitimate receiver and an eavesdropper.
Both the main channel and the wiretapping channel have non-negative inputs,
which are corrupted by additive white Gaussian noises.
Considering the illumination requirement and the physical characteristics of lighting source,
the input is also constrained in both its average and peak optical intensities.
Two scenarios are investigated: one is only with an average optical intensity constraint,
and the other is with both average and peak optical intensity constraints.
Based on information theory,
closed-form expressions of the upper and lower bounds on secrecy capacity for the two scenarios are derived.
Numerical results show that the upper and lower bounds on secrecy capacity are tight,
which validates the derived closed-form expressions.
Moreover, the asymptotic behaviors in the high signal-to-noise ratio (SNR) regime are analyzed from the theoretical aspects.
At high SNR, when only considering the average optical intensity constraint,
a small performance gap exists between the asymptotic upper and lower bounds on secrecy capacity.
When considering both average and peak optical intensity constraints,
the asymptotic upper and lower bounds on secrecy capacity coincide with each other.
These conclusions are also confirmed by numerical results.
\end{abstract}

\begin{keywords}
Gaussian noise,
physical-layer security,
secrecy capacity,
visible light communications.
\end{keywords}

\IEEEpeerreviewmaketitle

\section{Introduction}
\label{section1}
With the widespread use of light-emitting diodes (LEDs) for commercial lighting applications,
visible light communication (VLC) has attracted increasing attention in recent years.
Due to the combination of communication and illumination,
VLC is regarded as one of the most important wireless communication technologies for future indoor access \cite{BIB01}.

In the last decade, the point-to-point (P2P) VLC has achieved rapid development in many fields,
especially in channel modelling \cite{BIB02}, modulation \cite{BIB03}, coding \cite{BIB04}, equalization \cite{BIB05},
channel estimation \cite{BIB06}, indoor positioning \cite{BIB06_1},
channel capacity analysis \cite{BIB07,BIB08,BIB09,BIB10} and transceiver design \cite{BIB11}.
At present, the research focus of VLC is being changed from P2P communications to network aspects.
In VLC networks, data privacy is becoming a main concern for users.
Although it is propagated via the line-of-sight path,
the VLC signal is broadcasted to all users illuminated by the LEDs.
Such a broadcast feature provides convenience for data transmission,
but it also offers an opportunity for unintended users to eavesdrop the information, which imposes
a security risk to legitimate users.
Therefore, information security becomes an urgent issue to be addressed.
Traditional security schemes are performed at upper-layers of the network stack by using access control,
password protection and end-to-end encryption \cite{BIB11_1}.
The safety of traditional security schemes is built on the limited storage capacity and computational power of the eavesdroppers.
Recently, physical-layer (PHY) security,
which exploits the channel characteristics to hide information from eavesdroppers and does not rely on the upper-layer encryption,
has been proposed as an efficient supplement to traditional security schemes.

Secure transmission is important for radio frequency wireless communications (RFWC).
The PHY security was first investigated in 1949 by Shannon,
who proposed the concept of perfect secrecy over noiseless channels \cite{BIB12}.
Under the noisy channels, Wyner analyzed the secrecy capacity via the wiretap channel \cite{BIB13}.
In \cite{BIB14}, the secrecy capacity of the single-input single-output (SISO) Gaussian wiretap channel was derived.
Under the non-degraded wiretap channel, a single-letter characterization of the secrecy capacity was derived in \cite{BIB15}.
Recently, the secrecy performance analysis over the SISO scenario was extended to that over the multi-input multi-output (MIMO) scenario.
For MIMO wiretap channels with confidential messages, the authors in \cite{BIB16} analyzed the secrecy capacity region.
For artificial noisy MIMO channels, the secrecy capacity was studied in \cite{BIB17} by using the ordered eigenvalues of Wishart matrices.
The authors in \cite{BIB18} obtained the secrecy capacity for MIMO channels with finite memory.
With the help of a cooperative jammer, a lower bound of the secrecy capacity for the MIMO channels was derived in \cite{BIB19}.

Although much work has been done to investigate the secrecy capacity for RFWC,
the developed theory is not directly applicable to VLC.
The main differences between RFWC and VLC are highlighted as follows.
First, the transmit signal in RFWC can be bipolar or unipolar,
while the signal in VLC must be unipolar because the optical intensity is typically used to carry information.
Moreover, the average power in RFWC is the mean square value of a signal,
but the average power in VLC is the mean value of the signal \cite{BIB19_1}.
Also, a lower average power is usually preferred for RFWC,
but VLC has a predefined average intensity according to the dimming target,
which is not an objective function but a constraint \cite{BIB19_1}.
Therefore, the aforementioned features should be considered for practical VLC.
In \cite{BIB20}, the secrecy capacity was analyzed for direct current biased VLC,
where a uniform input distribution is used to derive the lower bound of secrecy capacity.
In \cite{BIBA}, the secrecy capacity of multiple-input single-output (MISO) VLC channels was investigated,
where the input distribution was chosen as a truncated generalized normal (TGN) distribution.
To obtain the optimal and robust beamforming, the authors in \cite{BIBB} also employed the TGN distribution for the input.
Due to the constraints of the input signal in VLC, the uniform and TGN input distributions are generally not optimal \cite{BIB07,BIB09}.
By using the variational method, an improved input distribution can be obtained \cite{BIB07,BIB09}.
By employing a discrete input distribution, the authors in \cite{BIBC} derived
an upper bound on the secrecy capacity for the MISO VLC channels. However, the theoretical expression of the secrecy capacity is not obtained.
In \cite{BIB20_add1}, the secrecy outage probability (SOP) was analyzed for a hybrid VLC-RFWC system with energy harvesting.
In \cite{BIB20_add2}, the SOP and the average secrecy capacity were discussed for VLC with spatially random terminals.
Note that the dimming requirement for indoor VLC was not considered in \cite{BIB20_add1,BIB20_add2}.
In our previous work \cite{BIB20_add3}, three lower bounds on the secrecy capacity were obtained. However, no upper bound has been obtained.
Moreover, the peak optical intensity constraint is not considered.
To the best of the authors' knowledge, the secrecy capacity for VLC has not been systematically investigated.

In this paper, the secrecy capacity for an indoor VLC system with a transmitter, a legitimate receiver, and an eavesdropper is investigated.
The main contributions are summarized as follows:
\begin{enumerate}
  \item The secrecy capacity for indoor VLC with only an average optical intensity constraint is analyzed.
        By using the existing channel capacity results, the entropy-power inequality (EPI) and the variational method,
        two lower bounds on secrecy capacity are obtained.
        Applying the dual expression of the secrecy capacity, the upper bound on the secrecy capacity is obtained.
        Numerical results validate the derived closed-form expressions.
  \item The secrecy capacity for indoor VLC with both average and peak optical intensity constraints is investigated.
        In practical VLC, the peak optical intensity of the LED is also limited.
        By adding a peak optical intensity constraint on the input signal, the lower and upper bounds on the secrecy capacity are further derived, which are in closed forms. The accuracy of the derived closed-form expressions are confirmed by numerical results.
  \item The asymptotic behaviors at high signal-to-noise ratio (SNR) are analyzed.
        Through theoretical analysis, it is shown that the asymptotic lower and upper bounds do not coincide but with a small gap when only considering the average optical intensity constraint.
        When considering both average and peak optical intensity constraints, the asymptotic lower and upper bounds coincide, and thus the secrecy capacity can be obtained precisely.
\end{enumerate}

The reminder of this paper is organized as follows.
Section \ref{section2} describes the system model.
With different constraints, the secrecy capacity bounds and the asymptotic behavior are analyzed in Section \ref{section3} and Section \ref{section4}, respectively.
Numerical results are presented in Section \ref{section5}.
Conclusions and future directions are presented in Section \ref{section6}.

\emph{Notations}:
Throughout this paper, italicized symbols denote scalar values;
$N\left( {0,{\sigma ^2}} \right)$ stands for a Gaussian distribution with zero mean and variance ${\sigma ^2}$;
$E( \cdot )$ denotes the expectation operator; ${\rm var}(\cdot)$ denotes the variance of a variable;
${f_X}(x)$ denotes the probability density function (PDF) of $X$.
We use ${\cal H}( \cdot )$ for the entropy, $D\left( { \cdot \left\|  \cdot  \right.} \right)$ for the relative entropy, and
$I( \cdot;\cdot )$ for the mutual information.
We use $\ln ( \cdot )$ for the natural logarithm and ${\cal Q}( \cdot )$ for the Gaussian Q-function.

\begin{figure}
\centering
\includegraphics[width=7cm]{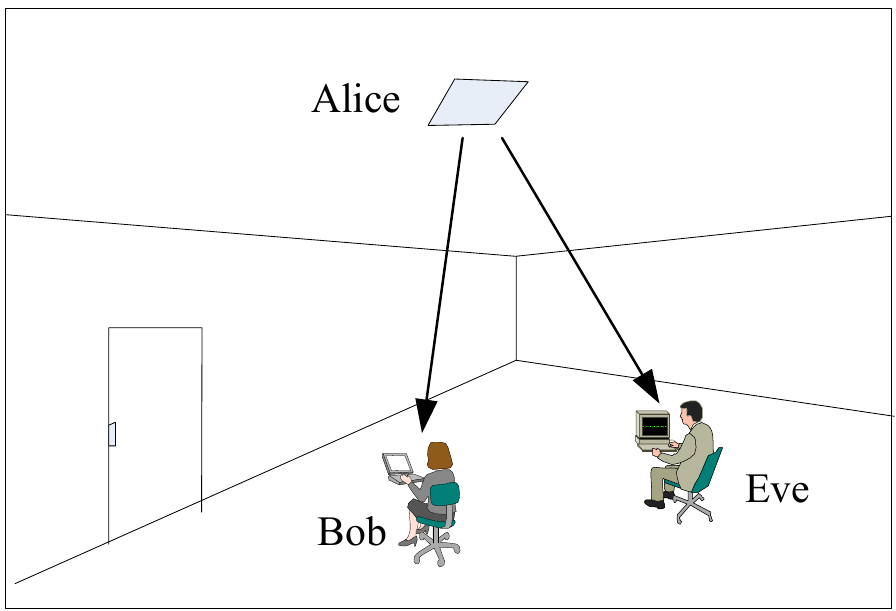}
\caption{An VLC network with one transmitter, one legitimate receiver and one eavesdropper.}
\label{fig1}
\end{figure}

\section{System Model}
\label{section2}
As shown in Fig. \ref{fig1}, we consider an indoor VLC network consisting of a transmitter (i.e., Alice), a legitimate receiver (i.e., Bob),
and an eavesdropper (i.e., Eve).
Alice is deployed on the ceiling, while Bob and Eve are placed on the floor.
When Alice transmits data bits to Bob, Eve as a passive eavesdropper can also receive the signals intended for Bob.
In the network, Alice is equipped with a single LED to transmit optical intensity signals,
while Bob and Eve are equipped with one photodiode (PD) individually to perform the optical-to-electrical conversions.
The received signals at Bob and Eve can be expressed, respectively, as
\begin{equation}
\left\{ \begin{array}{l}
{Y_{\rm{B}}} = {H_{\rm{B}}}X + {Z_{\rm{B}}}\\
{Y_{\rm{E}}} = {H_{\rm{E}}}X + {Z_{\rm{E}}}
\end{array} \right.,
\label{eq1}
\end{equation}
where $X$ is the transmit optical intensity signal; ${H_{\rm{B}}}$ and ${H_{\rm{E}}}$ denote the channel gains of the main channel and the eavesdropping channel, respectively;
${Z_{\rm{B}}} \sim N(0,\sigma _{\rm{B}}^2)$ and ${Z_{\rm{E}}} \sim N(0,\sigma _{\rm{E}}^2)$ stand for the additive white Gaussian noises at Bob and Eve,
where $\sigma _{\rm{B}}^2$ and $\sigma _{\rm{E}}^2$ denote the variances of the noises at Bob and Eve, respectively.

Because the intensity modulation and direct detection is employed for VLC,
$X$ is restricted to be nonnegative such that
\begin{equation}
X \ge 0.
\label{eq2}
\end{equation}

In practical VLC systems, the peak optical intensity of the LED is also limited.
Therefore, the peak optical intensity constraint is given by
\begin{equation}
X \le A,
\label{eq3}
\end{equation}
where $A$ is the peak optical intensity of the LED.

For a practical LED, its average optical intensity is constrained by the nominal optical intensity.
In order to satisfy the illumination requirements in VLC,
the average optical intensity cannot change with time.
Mathematically, the average optical intensity constraint is given by
\begin{equation}
E(X) = \xi P,
\label{eq4}
\end{equation}
where $\xi  \in (0,1]$ is the dimming target,
$P \in (0,A]$ is the nominal optical intensity of the LED.

In indoor VLC, the channel gain ${H_k}$ ($k = {\rm{B}}\;{\rm{or}}\;{\rm{E}}$) can be expressed as \cite{BIB20_1}
\begin{equation}
{H_k} \!\!=\!\! \left\{ \begin{array}{l}
\!\!\!\!\!\frac{{(m + 1){A_r}}}{{2\pi D_k^2}}{T_s}g{\cos ^m}({\varphi _k})\cos ({\psi _k}),\;{\rm{if}}\;0 \le {\psi _k} \le \Psi \\
\;\;\;\;\;\;\;\;\;\;\;\;\;\;\;\;\;\;\;0,\;\;\;\;\;\;\;\;\;\;\;\;\;\;\;\;\;\;\;\;{\rm{if}}\;{\psi _k} \ge \Psi
\end{array} \right.\!\!\!,
\label{eq5}
\end{equation}
where $m$ is the order of the Lambertian emission;
${A_r}$ is the physical area of the PD;
${T_s}$ and $g$ are the optical filter gain and the concentrator gain of the PD.
$\Psi$ is the field of view (FOV) of the PD;
${D_k}$, ${\varphi _k}$ and ${\psi _k}$ are respectively the distance,
the irradiance angle and the incidence angle from Alice to Bob ($k = {\rm{B}}$) or Eve ($k = {\rm{E}}$),
as shown in Fig. \ref{fig2}.
Obviously, when the positions of Alice, Bob and Eve are fixed, the channel gains $H_{\rm B}$ and $H_{\rm E}$ are constants.

\begin{figure}
\centering
\includegraphics[width=7.5cm]{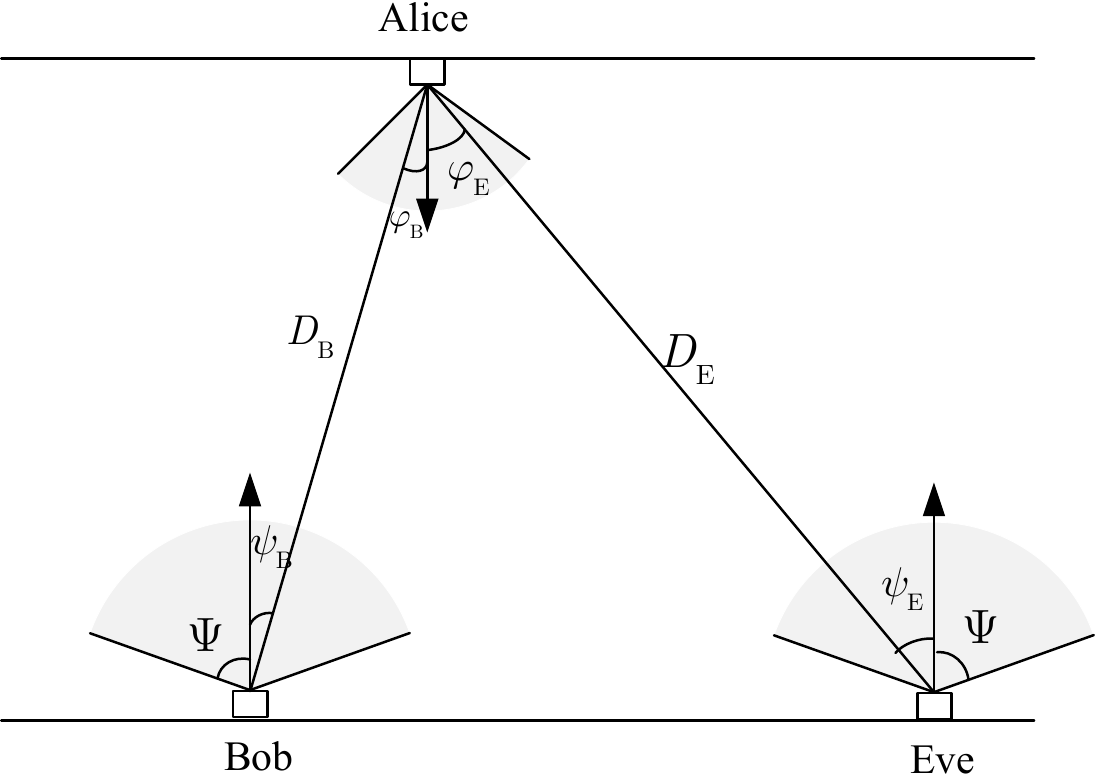}
\caption{The distance and angle relationship among Alice, Bob and Eve.}
\label{fig2}
\end{figure}

\section{Secrecy Capacity for VLC with Only an Average Optical Intensity Constraint}
\label{section3}
In this section, we focus on the VLC with only an average optical intensity constraint.
Therefore, we only consider the constraints (\ref{eq2}) and (\ref{eq4}).
If the main channel is worse than the eavesdropping channel (i.e., $H_{\rm B}/\sigma_{\rm B} < H_{\rm E}/\sigma_{\rm E}$), then the main channel is stochastically degraded with respect to the eavesdropping channel, and the secrecy capacity is essentially zero. Alternatively, if $H_{\rm B}/\sigma_{\rm B} \geq H_{\rm E}/\sigma_{\rm E}$, the secrecy capacity\footnote{In this paper, the natural logarithms are employed, and thus the secrecy capacity is in nats/transmission.} can be expressed as \cite{BIB20,BIB20_add4}
\begin{eqnarray}
&&C_{\rm s} = \mathop {\max }\limits_{{f_X}(x)} \left[ {I(X;{Y_{\rm{B}}}) - I(X;{Y_{\rm{E}}})} \right] \nonumber\\
{\rm{s.t.}}&& \int_0^\infty  {{f_X}(x){\rm d}x}  = 1\nonumber\\
&&E(X) =  \int_0^\infty  {x{f_X}(x){\rm d}x}  = \xi P,
\label{eq6}
\end{eqnarray}
where $C_{\rm s}$ denotes the secrecy capacity, and ${f_X}(x)$ denotes the PDF of the input signal.
In this section, the lower and upper bounds of the secrecy capacity will be derived.
Moreover, the asymptotic behavior of the secrecy capacity is also analyzed.

\subsection{Lower Bound on Secrecy Capacity}
\label{section3_1}
By using the channel capacity bounds derived in \cite{BIB22},
a lower bound on the secrecy capacity is derived as (\ref{eq7}) in the following theorem.
Moreover, by using the EPI and the variational method,
another lower bound on the secrecy capacity can be derived as (\ref{eq8}) in the following theorem.

\begin{theorem}
The secrecy capacity for the channel in (\ref{eq1}) with constraints (\ref{eq2}) and (\ref{eq4}) is lower-bounded by each of the following two bounds (\ref{eq7}) and (\ref{eq8}), i.e.,
\begin{eqnarray}
C_{\rm s} \!\!&\ge&\!\! \ln \left[ {\frac{{{\sigma _{\rm{E}}} {\sqrt {2\pi e\left( {1 + \frac{{H_{\rm{B}}^2{\xi ^2}{P^2}e}}{{2\pi \sigma _{\rm{B}}^2}}} \right)} } }}{{\beta {e^{ - \frac{{{\delta ^2}}}{{2\sigma _{\rm{E}}^2}}}} + \sqrt {2\pi } {\sigma _{\rm{E}}}{\cal Q}\left( {\frac{\delta }{{{\sigma _{\rm{E}}}}}} \right)}}} \right] - \frac{1}{2}{\cal Q}\left( {\frac{\delta }{{{\sigma _{\rm{E}}}}}} \right) \nonumber\\
&-& \!\!\frac{\delta }{{2\sqrt {2\pi } {\sigma _{\rm{E}}}}}{e^{ - \frac{{{\delta ^2}}}{{2\sigma _{\rm{E}}^2}}}} - \frac{{{\delta ^2}}}{{2\sigma _{\rm{E}}^2}}{\cal Q}\left( { - \frac{{\delta  + {H_{\rm{E}}}\xi P}}{{{\sigma _{\rm{E}}}}}} \right)\nonumber \\
 &-&\!\!  \frac{{\delta  + {H_{\rm{E}}}\xi P}}{\beta } - \frac{{{\sigma _{\rm{E}}}}}{{\sqrt {2\pi } \beta }}{e^{ - \frac{{{\delta ^2}}}{{2\sigma _{\rm{E}}^2}}}},
 \label{eq7}
\end{eqnarray}
and
\begin{equation}
C_{\rm s} \ge \frac{1}{2}\ln \left( {\frac{{\sigma _{\rm{E}}^2}}{{2\pi \sigma _{\rm{B}}^2}} \cdot \frac{{e{\xi ^2}{P^2}H_{\rm{B}}^2 + 2\pi \sigma _{\rm{B}}^2}}{{H_{\rm{E}}^2{\xi ^2}{P^2} + \sigma _{\rm{E}}^2}}} \right),
 \label{eq8}
\end{equation}
where $\beta >0$ and $\delta \ge 0$ in (\ref{eq7}) are free parameters.
Suboptimal but useful choices for $\beta$ and $\delta$ are given, respectively, by (\ref{eq9}) and (\ref{eq10}) as shown at the top of the next page.
\begin{table*}\normalsize
\begin{eqnarray}
\beta \!=\! \frac{1}{2}\!\!\left(\! {\delta  \!+\! {H_{\rm{E}}}\xi P \!+\! \frac{{{\sigma _{\rm{E}}}}}{{\sqrt {2\pi } }}{e^{ - \frac{{{\delta ^2}}}{{2\sigma _{\rm{E}}^2}}}}} \!\right)\!+\! \frac{1}{2}\sqrt {\!{{\left( {\delta \!+\! {H_{\rm{E}}}\xi P \!+\! \frac{{{\sigma _{\rm{E}}}}}{{\sqrt {2\pi } }}{e^{ - \frac{{{\delta ^2}}}{{2\sigma _{\rm{E}}^2}}}}} \right)}^2} \!+\! 4\!\left( {\delta  \!+\! {H_{\rm{E}}}\xi P \!+\! \frac{{{\sigma _{\rm{E}}}}}{{\sqrt {2\pi } }}{e^{ - \frac{{{\delta ^2}}}{{2\sigma _{\rm{E}}^2}}}}} \right)\!\!\sqrt {2\pi } {\sigma _{\rm{E}}}{e^{\frac{{{\delta ^2}}}{{2\sigma _{\rm{E}}^2}}}}{\cal Q}\left( {\frac{\delta }{{{\sigma _{\rm{E}}}}}} \right)},
  \label{eq9}
\end{eqnarray}

\begin{equation}
\delta  = {\sigma _{\rm{E}}}\ln \left( {1 + \frac{{{H_{\rm{E}}}\xi P}}{{{\sigma _{\rm{E}}}}}} \right).
 \label{eq10}
\end{equation}
\hrulefill
\end{table*}
\label{them1}
\end{theorem}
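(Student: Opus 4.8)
The plan is to exploit the variational structure of (\ref{eq6}): since $C_{\rm s}$ is a maximization over all feasible input PDFs, any single admissible $f_X(x)$ yields a valid bound $C_{\rm s}\ge I(X;Y_{\rm B})-I(X;Y_{\rm E})$. I would take $X$ to be the maximum-differential-entropy distribution among nonnegative variables with mean $\xi P$, which the variational method identifies as the exponential law $f_X(x)=\frac{1}{\xi P}e^{-x/(\xi P)}$ on $x\ge0$, so that the differential entropy is $h(X)=\ln(e\xi P)$ and ${\rm var}(X)=\xi^2P^2$. Both (\ref{eq7}) and (\ref{eq8}) then follow by bounding the two mutual-information terms evaluated at this \emph{same} input; using a common input makes it legitimate to lower-bound $I(X;Y_{\rm B})$ and upper-bound $I(X;Y_{\rm E})$ simultaneously.

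For (\ref{eq8}) I would write $I(X;Y_{\rm B})-I(X;Y_{\rm E})=h(Y_{\rm B})-h(Y_{\rm E})-h(Z_{\rm B})+h(Z_{\rm E})$, using independence of the noises from $X$. The term $h(Y_{\rm B})=h(H_{\rm B}X+Z_{\rm B})$ is lower-bounded by the EPI, $e^{2h(Y_{\rm B})}\ge e^{2h(H_{\rm B}X)}+e^{2h(Z_{\rm B})}=H_{\rm B}^2e^2\xi^2P^2+2\pi e\sigma_{\rm B}^2$, while $h(Y_{\rm E})$ is upper-bounded by the Gaussian maximum-entropy inequality $h(Y_{\rm E})\le\frac12\ln[2\pi e(H_{\rm E}^2\xi^2P^2+\sigma_{\rm E}^2)]$ since ${\rm var}(Y_{\rm E})=H_{\rm E}^2\xi^2P^2+\sigma_{\rm E}^2$. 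Substituting $h(Z_{\rm B})$ and $h(Z_{\rm E})$ and cancelling the $2\pi e$ and $e$ factors collapses the expression directly into (\ref{eq8}); this part is essentially algebra.

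For (\ref{eq7}) I would reuse the same EPI lower bound on the main channel, $I(X;Y_{\rm B})\ge\frac12\ln(1+\frac{eH_{\rm B}^2\xi^2P^2}{2\pi\sigma_{\rm B}^2})$, but replace the crude Gaussian bound on $I(X;Y_{\rm E})$ by the sharper duality bound of \cite{BIB22}, $I(X;Y_{\rm E})\le -h(Z_{\rm E})-E[\ln R(Y_{\rm E})]$, which holds for any reference output density $R$ because $D(f_{Y_{\rm E}}\|R)\ge0$. Thus the two lower bounds differ only in how $I(X;Y_{\rm E})$ is controlled. I would take the two-piece reference density $R(y)=g(y)/{\cal N}$ with ${\cal N}=\beta e^{-\delta^2/(2\sigma_{\rm E}^2)}+\sqrt{2\pi}\sigma_{\rm E}{\cal Q}(\delta/\sigma_{\rm E})$ and
\[
g(y)=\left\{\begin{array}{ll} e^{-y^2/(2\sigma_{\rm E}^2)}, & y\ge\delta,\\ e^{-\delta^2/(2\sigma_{\rm E}^2)+(y-\delta)/\beta}, & y<\delta,\end{array}\right.
\]
a Gaussian tail spliced continuously to an exponential at the threshold $\delta$ with scale $\beta$; here ${\cal N}=\int g$ is exactly the denominator in (\ref{eq7}). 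Assembling $C_{\rm s}\ge[\text{Bob bound}]+h(Z_{\rm E})+E[\ln R(Y_{\rm E})]$ and merging $\frac12\ln(1+\cdots)+\frac12\ln(2\pi e\sigma_{\rm E}^2)$ into the leading logarithm produces the first term of (\ref{eq7}).

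The crux is the explicit evaluation of $E[\ln R(Y_{\rm E})]$ with $Y_{\rm E}=H_{\rm E}X+Z_{\rm E}$. I would split at $y=\delta$: the branch $y\ge\delta$ contributes $-\frac{1}{2\sigma_{\rm E}^2}E[Y_{\rm E}^2\mathbf 1\{Y_{\rm E}\ge\delta\}]$, which after Gaussian second-moment and tail integrals yields the group of terms in ${\cal Q}(\delta/\sigma_{\rm E})$, $\frac{\delta}{2\sqrt{2\pi}\sigma_{\rm E}}e^{-\delta^2/(2\sigma_{\rm E}^2)}$ and $\frac{\delta^2}{2\sigma_{\rm E}^2}{\cal Q}(\cdot)$; the branch $y<\delta$ contributes an affine-in-$Y_{\rm E}$ expectation in which the input enters only through $E[Y_{\rm E}]=H_{\rm E}\xi P$, giving the $\frac{\delta+H_{\rm E}\xi P}{\beta}$ and $\frac{\sigma_{\rm E}}{\sqrt{2\pi}\beta}e^{-\delta^2/(2\sigma_{\rm E}^2)}$ terms. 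This piecewise bookkeeping -- tracking every Gaussian-tail and exponential-integral contribution so that they assemble into precisely the stated closed form -- is the step I expect to be most error-prone. Finally I would optimize over the free scale: differentiating the right-hand side of (\ref{eq7}) in $\beta$ and setting the derivative to zero gives a quadratic in $\beta$ whose positive root is (\ref{eq9}); rather than solving the coupled stationarity in $\delta$, I would fix the convenient suboptimal value $\delta=\sigma_{\rm E}\ln(1+H_{\rm E}\xi P/\sigma_{\rm E})$ of (\ref{eq10}), which keeps everything in closed form.
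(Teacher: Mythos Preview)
Your route for (\ref{eq8}) is exactly the paper's: choose the exponential input by the variational argument, then apply the EPI to $h(Y_{\rm B})$ and the Gaussian maximum-entropy bound to $h(Y_{\rm E})$. For (\ref{eq7}) the paper proceeds slightly differently, invoking $C_{\rm s}\ge C_{\rm B}-C_{\rm E}$ (difference of the two channel capacities) and then importing the ready-made lower bound on $C_{\rm B}$ and duality upper bound on $C_{\rm E}$ from \cite{BIB22}. Your variant---fixing the exponential input once and bounding $I(X;Y_{\rm B})$ and $I(X;Y_{\rm E})$ at that input---is equally valid and collapses to the same closed form, because the duality bound in \cite{BIB22} uses only $X\ge 0$ and $E[X]=\xi P$, not the shape of the input law.

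There is, however, a concrete error in your reference density that would make the computation fail. You put the Gaussian piece on $\{y\ge\delta\}$ and the exponential piece on $\{y<\delta\}$; this is backwards. Since $X\ge 0$, the output $Y_{\rm E}=H_{\rm E}X+Z_{\rm E}$ concentrates on the positive axis, and with your $g$ the dominant contribution to $-E[\ln R(Y_{\rm E})]$ is $\tfrac{1}{2\sigma_{\rm E}^2}E\!\left[Y_{\rm E}^2\,\mathbf 1\{Y_{\rm E}\ge\delta\}\right]$, which for the exponential input is of order $H_{\rm E}^2\xi^2P^2/\sigma_{\rm E}^2$; the resulting upper bound on $I(X;Y_{\rm E})$ diverges and cannot produce (\ref{eq7}). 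The density actually used in \cite{BIB22} splices the Gaussian tail on the \emph{negative} side, $y<-\delta$, to an exponential $e^{-(y+\delta)/\beta}$ decaying to the right on $y\ge-\delta$. With that orientation the exponential branch gives $\tfrac{1}{\beta}E\!\left[(Y_{\rm E}+\delta)\mathbf 1\{Y_{\rm E}\ge-\delta\}\right]$, which after bounding via $X\ge 0$ and $E[X]=\xi P$ yields the $(\delta+H_{\rm E}\xi P)/\beta$ and $\tfrac{\sigma_{\rm E}}{\sqrt{2\pi}\beta}e^{-\delta^2/(2\sigma_{\rm E}^2)}$ terms, while the Gaussian branch on the rare event $\{Y_{\rm E}<-\delta\}$ is controlled by $E[Z_{\rm E}^2\mathbf 1\{Z_{\rm E}<-\delta\}]$ and produces the $\tfrac12{\cal Q}(\delta/\sigma_{\rm E})$ and $\tfrac{\delta}{2\sqrt{2\pi}\sigma_{\rm E}}e^{-\delta^2/(2\sigma_{\rm E}^2)}$ corrections. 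Your bookkeeping narrative already anticipates exactly this split, so once the orientation of $R$ is corrected the remainder of your plan goes through.
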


\begin{proof}
See Appendix \ref{appa} and Appendix \ref{appb}.
\end{proof}

\begin{remark}
When $\sqrt{e} H_{\rm B}> H_{\rm E}$, it is obvious that the lower bound (\ref{eq8}) is a monotonically increasing function with respect to the dimming target $\xi$.
In this case, the performance of (\ref{eq8}) can be improved by increasing the value of $\xi$.
Moreover, it is challenging to analyze the monotonicity of the lower bound (\ref{eq7}) with respect to $\xi$,
which will be given by using a numerical method in Section \ref{section5}.
\end{remark}

\subsection{Upper Bound on Secrecy Capacity}
\label{section3_2}
To obtain the upper bound on the secrecy capacity,
the dual expression of the secrecy capacity is employed \cite{BIB21}.
For an arbitrary conditional PDF ${g_{{Y_{\rm{B}}}|{Y_{\rm E}}}}({y_{\rm{B}}}|{y_{\rm{E}}})$,
the following equality always holds \cite{BIB21_1}
\begin{eqnarray}
&&\!\!\!\!\!\!\!\!\!\!I(X;{Y_{\rm{B}}}|{Y_{\rm{E}}}) \!+\! {E_{X{Y_{\rm{E}}}}}\!\left\{ {D\left( {\left. {{f_{{Y_{\rm{B}}}|{Y_{\rm E}}}}({y_{\rm{B}}}|{Y_{\rm{E}}})} \right\|{g_{{Y_{\rm{B}}}|{Y_{\rm E}}}}({y_{\rm{B}}}|{Y_{\rm{E}}})} \right)} \right\} \nonumber \\
&&\!\!\!\!\!\!\!\!\!\! = {E_{X{Y_{\rm{E}}}}}\left\{ {D\left( {\left. {{f_{{Y_{\rm{B}}}|X{Y_{\rm E}}}}({y_{\rm{B}}}|X,{Y_{\rm{E}}})} \right\|{g_{{Y_{\rm{B}}}|{Y_{\rm E}}}}({y_{\rm{B}}}|{Y_{\rm{E}}})} \right)} \right\}.
 \label{eq11}
\end{eqnarray}

According to the non-negativity property of the relative entropy, we have
\begin{eqnarray}
&&\!\!\!\!\!\!\!\!\!\!\!\!\!\!I(X;{Y_{\rm{B}}}|{Y_{\rm{E}}})\nonumber\\
&&\!\!\!\!\!\!\!\!\le {E_{X{Y_{\rm{E}}}}}\!\!\left\{ {D\!\left( {\left. {{f_{{Y_{\rm{B}}}|X{Y_{\rm E}}}}({y_{\rm{B}}}|X,{Y_{\rm{E}}})} \right\|{g_{{Y_{\rm{B}}}|{Y_{\rm E}}}}({y_{\rm{B}}}|{Y_{\rm{E}}})} \right)}\! \right\}\!\!.
 \label{eq13}
\end{eqnarray}
Note that selecting any ${g_{{Y_{\rm{B}}}|{Y_{\rm E}}}}({y_{\rm{B}}}|{y_{\rm{E}}})$ will result in an upper bound of $I(X;{Y_{\rm{B}}}|{Y_{\rm{E}}})$. Therefore, we have
\begin{eqnarray}
&&\!\!\!\!\!\!\!\!\!\!\!\!\!\!\!\!\!\!\!I(X;{Y_{\rm{B}}}|{Y_{\rm{E}}})= \nonumber\\
&&\!\!\!\!\!\!\!\!\!\!\!\!\!\!\!\!\!\!\!\!\! \mathop {\min }\limits_{{g_{{Y_{\rm{B}}}|{Y_{\rm E}}}}\!({y_{\rm{B}}}|{Y_{\rm{E}}})} \!\!\!{E_{X\!{Y_{\rm{E}}}}}\!\!\left\{\! {D\!\!\left(\! {\left. {{f_{{Y_{\rm{B}}}|X\!{Y_{\rm E}}}}({y_{\rm{B}}}|X,\!{Y_{\rm{E}}})} \right\|\!{g_{{Y_{\rm{B}}}|{Y_{\rm E}}}}\!({y_{\rm{B}}}|{Y_{\rm{E}}})}\! \right)}\! \right\}\!\!.
 \label{eq14}
\end{eqnarray}
Note that, to achieve the secrecy capacity, there exists a unique input PDF ${f_{{X^*}}}(x)$ that maximizes $I(X;{Y_{\rm{B}}}|{Y_{\rm{E}}})$ subject to the constraints in (\ref{eq6}). Therefore, we have
\begin{eqnarray}
C_{\rm s} &=& \mathop {\max }\limits_{{f_X}(x)} I(X;Y_{\rm B}| Y_{\rm E})\nonumber\\
 &=& I(X^*;Y_{\rm B}| Y_{\rm E}),
 \label{eq15}
\end{eqnarray}
where ${X^*}$ denotes the optimal input, and its corresponding PDF is ${f_{{X^*}}}(x)$.

According to (\ref{eq14}) and (\ref{eq15}), we have
\begin{equation}
C_{\rm s} \!\!\le\!\! {E_{{X^*}{Y_{\rm{E}}}}}\!\left\{\! {D\!\left( {\left. {{f_{{Y_{\rm{B}}}|X{Y_{\rm E}}}}({y_{\rm{B}}}|X,{Y_{\rm{E}}})} \right\|{g_{{Y_{\rm{B}}}|{Y_{\rm E}}}}({y_{\rm{B}}}|{Y_{\rm{E}}})} \right)}\! \right\}\!.
 \label{eq16}
\end{equation}
It can be seen from (\ref{eq16}) that selecting any ${g_{{Y_{\rm{B}}}|{Y_{\rm E}}}}({y_{\rm{B}}}|{y_{\rm{E}}})$ will lead to an upper bound of the secrecy capacity.
To obtain a good upper bound, a clever choice of ${g_{{Y_{\rm{B}}}|{Y_{\rm E}}}}({y_{\rm{B}}}|{y_{\rm{E}}})$ should be found.
By using the principle of dual expression of the secrecy capacity,
an upper bound of the secrecy capacity in (\ref{eq6}) is derived in the following theorem.

\begin{theorem}
The secrecy capacity for the channel in (\ref{eq1}) with constraints (\ref{eq2}) and (\ref{eq4}) is upper-bounded by
\begin{equation}
C_{\rm s} \!\!\le\!\!\! \left\{ \begin{array}{l}\!\!\!\!\!
\ln\!\! \left[\! {\frac{{4e\left(\! {\sqrt {\frac{1}{{2\pi }}} {\sigma _{\rm{B}}} \!+\! \frac{{{H_{\rm{B}}}\xi P}}{2}} \!\right)}}{{\sqrt {2\pi e\sigma _{\rm{B}}^2\left(\! {1 \!+\! \frac{{H_{\rm{E}}^2\sigma _{\rm{B}}^2}}{{H_{\rm{B}}^2\sigma _{\rm{E}}^2}}} \!\right)} }}}\! \right]\!,{\kern 1pt} {\rm{if}}{\kern 1pt} \sqrt {\!\!\frac{{\frac{{H_{\rm{E}}^2}}{{H_{\rm{B}}^2}}\sigma _{\rm{B}}^2 \!+\! \sigma _{\rm{E}}^2}}{{2\pi }}} \! \ge\!\! \frac{{{H_{\rm{E}}}}}{{{H_{\rm{B}}}}}\!\!\left(\! {\frac{{{\sigma _{\rm{B}}}}}{{\sqrt {2\pi } }} \!+\! \frac{{{H_{\rm{B}}}\xi P}}{2}}\! \right)\\
\!\!\!\! \ln \left( {\frac{{2\sqrt e {H_{\rm{B}}}{\sigma _{\rm{E}}}}}{{\pi {H_{\rm{E}}}{\sigma _{\rm{B}}}}}} \right),{\kern 1pt} {\rm{if}}{\kern 1pt} \sqrt {\!\!\frac{{\frac{{H_{\rm{E}}^2}}{{H_{\rm{B}}^2}}\sigma _{\rm{B}}^2 \!+\! \sigma _{\rm{E}}^2}}{{2\pi }}} \! \le\!\! \frac{{{H_{\rm{E}}}}}{{{H_{\rm{B}}}}}\!\!\left(\! {\frac{{{\sigma _{\rm{B}}}}}{{\sqrt {2\pi } }} \!+\! \frac{{{H_{\rm{B}}}\xi P}}{2}} \!\right).
\end{array} \right.
 \label{eq17}
\end{equation}
\label{them2}
\end{theorem}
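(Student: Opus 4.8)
The plan is to start from the dual-expression bound (\ref{eq16}), which holds for every auxiliary conditional density $g_{Y_{\rm B}|Y_{\rm E}}$, and to reduce the problem to a single well-chosen $g$. First I would evaluate the relative entropy on the right-hand side. Conditioned on $X=x$, the output $Y_{\rm B}=H_{\rm B}x+Z_{\rm B}$ is $N(H_{\rm B}x,\sigma_{\rm B}^2)$ and, because $Z_{\rm B}$ is independent of $(X,Z_{\rm E})$, it is conditionally independent of $Y_{\rm E}$; hence $f_{Y_{\rm B}|XY_{\rm E}}(\cdot|x,y_{\rm E})$ is this Gaussian and does not depend on $y_{\rm E}$. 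The relative entropy therefore collapses to $-\frac{1}{2}\ln(2\pi e\sigma_{\rm B}^2)-E[\ln g_{Y_{\rm B}|Y_{\rm E}}(Y_{\rm B}|Y_{\rm E})]$, so that $C_{\rm s}\le-\frac{1}{2}\ln(2\pi e\sigma_{\rm B}^2)-E_{X^*Y_{\rm E}}E_{Y_{\rm B}|X^*,Y_{\rm E}}[\ln g_{Y_{\rm B}|Y_{\rm E}}(Y_{\rm B}|Y_{\rm E})]$, and everything reduces to controlling one cross-entropy term. Equivalently, since $C_{\rm s}=\max_{f_X}h(Y_{\rm B}|Y_{\rm E})-\frac{1}{2}\ln(2\pi e\sigma_{\rm B}^2)$, the task is an upper bound on $h(Y_{\rm B}|Y_{\rm E})$.

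The central and hardest step is the choice of $g_{Y_{\rm B}|Y_{\rm E}}$, which must do two jobs at once. It must be centered through a multiple of the observation $y_{\rm E}$ so that conditioning on $Y_{\rm E}$ subtracts the eavesdropper's information and produces the factor $1+H_{\rm E}^2\sigma_{\rm B}^2/(H_{\rm B}^2\sigma_{\rm E}^2)$ in the denominator of (\ref{eq17}); and its profile must be one-sided (exponential/Laplacian rather than Gaussian) so that the nonnegativity $X\ge0$ together with the average constraint $E(X)=\xi P$ is retained, yielding the term $\sigma_{\rm B}/\sqrt{2\pi}+H_{\rm B}\xi P/2$. A purely shift-invariant Gaussian choice must be avoided: the shift cancels the signal exactly and loses all $\xi P$-dependence, collapsing the bound to the loose unconstrained Gaussian secrecy value $\frac{1}{2}\ln(1+H_{\rm B}^2\sigma_{\rm E}^2/(H_{\rm E}^2\sigma_{\rm B}^2))$. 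Evaluating the cross-entropy for the chosen $g$ uses only $E(X^*)=\xi P$ and the Gaussian absolute moment $E|Z_{\rm B}|=\sigma_{\rm B}\sqrt{2/\pi}$, both of which are in hand.

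With $g$ fixed up to a free scale/centering parameter, I would minimize the resulting bound over that parameter. The scale is eliminated in closed form (it equals the relevant mean-absolute deviation), and minimizing over the centering parameter produces a stationary condition. The two branches of (\ref{eq17}) then correspond to whether the minimizer lies in the interior of its admissible range or is pushed to the boundary: the interior case returns the $\xi P$-dependent first branch, while the boundary case returns the constant second branch $\ln(2\sqrt e H_{\rm B}\sigma_{\rm E}/(\pi H_{\rm E}\sigma_{\rm B}))$. Comparing the stationary value against the boundary gives exactly the stated threshold $\sqrt{(H_{\rm E}^2\sigma_{\rm B}^2/H_{\rm B}^2+\sigma_{\rm E}^2)/(2\pi)}\gtrless (H_{\rm E}/H_{\rm B})(\sigma_{\rm B}/\sqrt{2\pi}+H_{\rm B}\xi P/2)$, i.e. $\sqrt{\sigma_{\rm B}^2+H_{\rm B}^2\sigma_{\rm E}^2/H_{\rm E}^2}\gtrless \sigma_{\rm B}+\sqrt{\pi/2}\,H_{\rm B}\xi P$.

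The main obstacle I expect is this final design-and-optimize step: committing to a $g_{Y_{\rm B}|Y_{\rm E}}$ for which both the eavesdropper parameters $(H_{\rm E},\sigma_{\rm E})$ and the input constraint survive in the evaluated cross-entropy, and then showing the constrained minimization yields precisely the two-branch form and threshold in (\ref{eq17}). A useful consistency check is that this piecewise structure mirrors the two-regime point-to-point VLC capacity upper bound (noise-dominated versus signal-dominated), with $\sqrt{\sigma_{\rm B}^2+H_{\rm B}^2\sigma_{\rm E}^2/H_{\rm E}^2}$ playing the role of the effective noise scale.
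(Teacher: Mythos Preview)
Your overall strategy matches the paper's: start from the dual bound (\ref{eq16}), pick a Laplacian test density $g_{Y_{\rm B}|Y_{\rm E}}(y_{\rm B}|y_{\rm E})=\frac{1}{2s^2}e^{-|y_{\rm B}-\mu y_{\rm E}|/s^2}$ centered at $\mu y_{\rm E}$, bound the resulting expectation via triangle inequalities, and optimize over $(\mu,s)$ to obtain the two branches. Two points of execution differ from the paper and deserve attention.

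First, you reduce the conditional-entropy piece to $-\tfrac{1}{2}\ln(2\pi e\sigma_{\rm B}^2)$ via conditional independence of $Y_{\rm B}$ and $Y_{\rm E}$ given $X$, and then expect the factor $1+H_{\rm E}^2\sigma_{\rm B}^2/(H_{\rm B}^2\sigma_{\rm E}^2)$ in the denominator of (\ref{eq17}) to emerge later from the cross-entropy optimization. In the paper that factor does \emph{not} come from the cross-entropy term; it enters directly in $I_1$. The paper writes $f_{Y_{\rm B}Y_{\rm E}|X}$ as $f_{Y_{\rm B}|X}$ times a Gaussian in $y_{\rm E}-\tfrac{H_{\rm E}}{H_{\rm B}}y_{\rm B}$ with variance $\tfrac{H_{\rm E}^2}{H_{\rm B}^2}\sigma_{\rm B}^2+\sigma_{\rm E}^2$ (eq.~(\ref{eq60})), then applies the chain rule $\mathcal H(Y_{\rm B}|X,Y_{\rm E})=\mathcal H(Y_{\rm B}|X)+\mathcal H(Y_{\rm E}|X,Y_{\rm B})-\mathcal H(Y_{\rm E}|X)$ to obtain $I_1=-\tfrac{1}{2}\ln\!\big[2\pi e\sigma_{\rm B}^2\big(1+\tfrac{H_{\rm E}^2\sigma_{\rm B}^2}{H_{\rm B}^2\sigma_{\rm E}^2}\big)\big]$. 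With your reduction that factor will be absent from the final expression unless you recover it elsewhere, so if you want to land on (\ref{eq17}) exactly you should adopt the paper's joint-density factorization (effectively a degraded-channel rewriting with $Y_{\rm E}\sim \tfrac{H_{\rm E}}{H_{\rm B}}Y_{\rm B}+\text{noise}$) rather than the direct conditional-independence shortcut.

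Second, the minimization over $\mu$ in the paper is not via a stationary point. After two applications of $|a\pm b|\le|a|+|b|$ (first to split off the $t=y_{\rm E}-\tfrac{H_{\rm E}}{H_{\rm B}}y_{\rm B}$ integral, then to separate the Gaussian and signal parts of $y_{\rm B}$), one gets $I_2\le\ln(2s^2)+\tfrac{2}{s^2}I_3$ with
\[
I_3=|\mu|\sqrt{\tfrac{H_{\rm E}^2\sigma_{\rm B}^2/H_{\rm B}^2+\sigma_{\rm E}^2}{2\pi}}+\Big|1-\mu\tfrac{H_{\rm E}}{H_{\rm B}}\Big|\Big(\tfrac{\sigma_{\rm B}}{\sqrt{2\pi}}+\tfrac{H_{\rm B}\xi P}{2}\Big),
\]
which is piecewise linear in $\mu$. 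The paper compares the three linear segments $\mu<0$, $0\le\mu\le H_{\rm B}/H_{\rm E}$, $\mu>H_{\rm B}/H_{\rm E}$ directly; the two branches of (\ref{eq17}) correspond to whether the middle segment has nonnegative or negative slope, which is exactly the stated threshold. The scale $s$ is then optimized last in closed form.
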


\begin{proof}
See Appendix \ref{appc}.
\end{proof}

\begin{remark}
When $\sqrt {{{(H_{\rm{E}}^2\sigma _{\rm{B}}^2/H_{\rm{B}}^2 + \sigma _{\rm{E}}^2)} /{{\rm{(2}}\pi )}}}  \ge \frac{{{H_{\rm{E}}}}}{{{H_{\rm{B}}}}}\left( {\frac{{{\sigma _{\rm{B}}}}}{{\sqrt {2\pi } }} + \frac{{{H_{\rm{B}}}\xi P}}{2}} \right)$, it is straightforward to show that (\ref{eq17}) is a monotonically increasing function with respect to the dimming target $\xi$.
In this case, the larger the dimming target is, the larger the upper bound on secrecy capacity becomes.
When $\sqrt {{{(H_{\rm{E}}^2\sigma _{\rm{B}}^2/H_{\rm{B}}^2 + \sigma _{\rm{E}}^2)} / {{\rm{(2}}\pi )}}}  \leq \frac{{{H_{\rm{E}}}}}{{{H_{\rm{B}}}}}\left( {\frac{{{\sigma _{\rm{B}}}}}{{\sqrt {2\pi } }} + \frac{{{H_{\rm{B}}}\xi P}}{2}} \right)$, the upper bound (\ref{eq17}) is independent of the dimming target.
\end{remark}

\subsection{Asymptotic Behavior Analysis}
\label{section3_3}
In indoor VLC environment, typical illumination requirement leads to a large transmit optical intensity,
which can offer a high SNR at the receiver \cite{BIB21_2}.
Therefore, we are more interested in the behavior of the VLC system in the high SNR regime.
By analyzing \emph{Theorem \ref{them1}} and \emph{Theorem \ref{them2}},
the asymptotic behavior of the secrecy capacity bounds at high SNR is derived in the following corollary.

\begin{cor}
For the channel in (\ref{eq1}) with constraints (\ref{eq2}) and (\ref{eq4}), the asymptotic behavior of the secrecy capacity bounds at high SNR is given by
\begin{equation}
\ln \left( {\frac{{{H_{\rm{B}}}{\sigma _{\rm{E}}}}}{{{H_{\rm{E}}}{\sigma _{\rm{B}}}}}} \right) \le \mathop {\lim }\limits_{P \to \infty } C_{\rm s} \le \ln \left( {\frac{{2\sqrt e }}{\pi }} \right) + \ln \left( {\frac{{{H_{\rm{B}}}{\sigma _{\rm{E}}}}}{{{H_{\rm{E}}}{\sigma _{\rm{B}}}}}} \right).
 \label{eq17_1}
\end{equation}
\label{cor1}
\end{cor}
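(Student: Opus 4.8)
The plan is to establish the two inequalities in (\ref{eq17_1}) separately, by letting $P\to\infty$ in the lower bound (\ref{eq7}) of \emph{Theorem \ref{them1}} and in the upper bound (\ref{eq17}) of \emph{Theorem \ref{them2}}. Since both (\ref{eq7}) and (\ref{eq8}) lower-bound $C_{\rm s}$, it suffices to take the limit of whichever is asymptotically larger. A quick check shows that (\ref{eq8}) tends to $\frac{1}{2}\ln\frac{e}{2\pi}+\ln\frac{H_{\rm B}\sigma_{\rm E}}{H_{\rm E}\sigma_{\rm B}}$, which is strictly below the claimed value because $\frac{1}{2}\ln\frac{e}{2\pi}<0$; hence the lower half of (\ref{eq17_1}) must be obtained from (\ref{eq7}).

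The upper half is the routine part. In (\ref{eq17}) the right-hand side of the branching condition equals $\frac{H_{\rm E}}{H_{\rm B}}\bigl(\frac{\sigma_{\rm B}}{\sqrt{2\pi}}+\frac{H_{\rm B}\xi P}{2}\bigr)$, which grows linearly in $P$, while its left-hand side is a constant. Therefore, for all sufficiently large $P$ the second case is active and $C_{\rm s}\le\ln\frac{2\sqrt e\,H_{\rm B}\sigma_{\rm E}}{\pi H_{\rm E}\sigma_{\rm B}}$. This expression is already independent of $P$, so its limit is itself, and splitting the logarithm yields $\ln\frac{2\sqrt e}{\pi}+\ln\frac{H_{\rm B}\sigma_{\rm E}}{H_{\rm E}\sigma_{\rm B}}$, exactly the upper bound in (\ref{eq17_1}).

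For the lower half I would work term by term in (\ref{eq7}) under the prescribed choices (\ref{eq9}) and (\ref{eq10}). Writing $u=\delta/\sigma_{\rm E}=\ln(1+H_{\rm E}\xi P/\sigma_{\rm E})\to\infty$, I would first establish $\beta\sim H_{\rm E}\xi P$: in (\ref{eq9}) the bracket $a=\delta+H_{\rm E}\xi P+\frac{\sigma_{\rm E}}{\sqrt{2\pi}}e^{-u^2/2}$ satisfies $a\sim H_{\rm E}\xi P$, and the Gaussian-tail estimate ${\cal Q}(u)\sim\frac{1}{u\sqrt{2\pi}}e^{-u^2/2}$ makes the cross term $4a\sqrt{2\pi}\sigma_{\rm E}e^{u^2/2}{\cal Q}(u)\sim 4a\sigma_{\rm E}/u$ negligible against $a^2$, so the radical collapses to $a$ and $\beta\sim a\sim H_{\rm E}\xi P$. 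Next I would evaluate the leading logarithm: the numerator behaves as $e\sigma_{\rm E}H_{\rm B}\xi P/\sigma_{\rm B}$, while the denominator, using $H_{\rm E}\xi P=\sigma_{\rm E}e^{u}$, behaves as $\sigma_{\rm E}e^{u-u^2/2}$; thus the log-argument grows like $e^{u^2/2}$ and the term equals $\frac{u^2}{2}+\ln\frac{e H_{\rm B}\sigma_{\rm E}}{H_{\rm E}\sigma_{\rm B}}+o(1)$.

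The crux is that the divergent $\frac{u^2}{2}=\frac{\delta^2}{2\sigma_{\rm E}^2}$ is cancelled exactly by the term $-\frac{\delta^2}{2\sigma_{\rm E}^2}{\cal Q}\bigl(-\frac{\delta+H_{\rm E}\xi P}{\sigma_{\rm E}}\bigr)$, whose ${\cal Q}$-argument tends to $-\infty$ so that its ${\cal Q}$-factor tends to $1$. The remaining pieces are benign: $-\frac{1}{2}{\cal Q}(u)$, $-\frac{u}{2\sqrt{2\pi}}e^{-u^2/2}$, and $-\frac{\sigma_{\rm E}}{\sqrt{2\pi}\beta}e^{-u^2/2}$ all vanish, while $-\frac{\delta+H_{\rm E}\xi P}{\beta}\to-1$ since $\delta+H_{\rm E}\xi P\sim\beta$. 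Collecting the surviving constants gives $\ln\frac{e H_{\rm B}\sigma_{\rm E}}{H_{\rm E}\sigma_{\rm B}}-1=\ln\frac{H_{\rm B}\sigma_{\rm E}}{H_{\rm E}\sigma_{\rm B}}$, the lower bound in (\ref{eq17_1}). I expect the main obstacle to be precisely this $\frac{\delta^2}{2\sigma_{\rm E}^2}$ cancellation together with the $\beta\sim H_{\rm E}\xi P$ estimate, because both require matching the slowly growing $\delta\sim\sigma_{\rm E}\ln P$ against the linearly growing $H_{\rm E}\xi P$ through the sharp Gaussian-tail asymptotics; a careless error term there would either leave a spurious divergence or shift the additive constant.
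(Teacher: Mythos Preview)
Your argument is correct. The upper-bound half matches the paper's proof exactly: for large $P$ the second branch of (\ref{eq17}) is active and yields the stated constant.

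For the lower-bound half you take a genuinely different route from the paper. You keep the specific parameter choices (\ref{eq9})--(\ref{eq10}) from \emph{Theorem~\ref{them1}} and push the full expression (\ref{eq7}) through the limit, which forces you to establish $\beta\sim H_{\rm E}\xi P$ from (\ref{eq9}) via the Mills-ratio asymptotic ${\cal Q}(u)\sim\frac{1}{u\sqrt{2\pi}}e^{-u^2/2}$ and then track the exact cancellation of the diverging $\frac{\delta^2}{2\sigma_{\rm E}^2}=\frac{u^2}{2}$ against the ${\cal Q}\bigl(-\frac{\delta+H_{\rm E}\xi P}{\sigma_{\rm E}}\bigr)$ term. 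The paper instead exploits the fact that (\ref{eq7}) is a valid lower bound for \emph{any} $\beta>0$, $\delta\ge 0$, and substitutes the purpose-built choices $\delta=\sigma_{\rm E}\sqrt{\ln(H_{\rm E}\xi P/\sigma_{\rm E})}$ and $\beta=H_{\rm E}\xi P$; with these, $\frac{\delta^2}{2\sigma_{\rm E}^2}=\frac{1}{2}\ln(H_{\rm E}\xi P/\sigma_{\rm E})$ grows only logarithmically and balances directly against the leading $\ln$ term, so no delicate $u^2/2$ cancellation or asymptotic expansion of $\beta$ is needed. The paper's route is shorter and avoids the Gaussian-tail bookkeeping you flag as the main obstacle; your route has the merit of showing that the concrete bound of \emph{Theorem~\ref{them1}}, with its stated parameters, already attains the correct high-SNR limit rather than a re-parameterized variant of it.
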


\begin{proof}
See Appendix \ref{appc1}.
\end{proof}

\begin{remark}
From \emph{Corollary \ref{cor1}}, it can be found that the asymptotic upper and lower bounds on secrecy capacity do not coincide in the sense that their difference equals $\ln (2\sqrt e /\pi ) \approx 0.048$ nats/transmission instead of zero.
Although a performance gap between the asymptotic upper and lower bounds exists, the difference is so small and thus it can be ignored.
\end{remark}

\section{Secrecy Capacity for VLC with Both Average and Peak Optical Intensity Constraints}
\label{section4}
As is well known, the peak optical intensity of an LED is also limited.
In this section, we consider the constraints (\ref{eq2}), (\ref{eq3}) and (\ref{eq4}).
Similarly, if $H_{\rm B}/\sigma_{\rm B} < H_{\rm E}/\sigma_{\rm E}$, the secrecy capacity is zero. If $H_{\rm B}/\sigma_{\rm B} \geq H_{\rm E}/\sigma_{\rm E}$, the secrecy capacity can be expressed as \cite{BIB20,BIB20_add4}
\begin{eqnarray}
&&C_{\rm s} = \mathop {\max }\limits_{{f_X}(x)} \left[ {I(X;{Y_{\rm{B}}}) - I(X;{Y_{\rm{E}}})} \right] \nonumber\\
{\rm{s.t.}} && \int_0^A {{f_X}(x) {\rm d}x}  = 1 \nonumber \\
&& \int_0^A {x{f_X}(x) {\rm d}x}  = \xi P.
 \label{eq18}
\end{eqnarray}
In this section, the bounds of the secrecy capacity and their asymptotic behaviors for this case will be derived.

\subsection{Lower Bound on Secrecy Capacity}
\label{section4_1}
Let $\alpha  = \xi P/A$ denote the average to peak optical intensity ratio (APOIR). Two lower bounds on the secrecy capacity can be derived by analyzing (\ref{eq18}), and these lower bounds are shown in the following theorem.

\begin{theorem}
The secrecy capacity for the channel in (\ref{eq1}) with constraints (\ref{eq2}), (\ref{eq3}) and (\ref{eq4}) can be lower-bounded by each of the following two bounds (\ref{eq19}) and (\ref{eq20}), i.e.,
\begin{equation}
C_{\rm s} \ge \left\{ \begin{array}{l}
{C_{{\rm s},1}},\;0 < \alpha  < 0.5\\
{C_{{\rm s},2}},0.5 \le \alpha  \le 1
\end{array} \right.\!\!\!,
 \label{eq19}
\end{equation}
and
\begin{equation}
C_{\rm s} \!\ge\! \left\{ \begin{array}{l}\!\!\!\!
\frac{1}{2}\ln\! \left[ {\frac{{3\sigma _{\rm{E}}^2\left( {H_{\rm{B}}^2{A^2} + 2\pi e\sigma _{\rm{B}}^2} \right)}}{{2\pi e\sigma _{\rm{B}}^2\left( {H_{\rm{E}}^2{\xi ^2}{P^2} + 3\sigma _{\rm{E}}^2} \right)}}} \right],\;\;\alpha  = 0.5\\
\!\!\!\! \frac{1}{2}\ln\! \left\{\! {\frac{{\sigma _{\rm{E}}^2\left[ {H_{\rm{B}}^2{e^{ - 2c\xi P}}{{\left( {\frac{{{e^{cA}} - 1}}{c}} \right)}^2} + 2\pi e\sigma _{\rm{B}}^2} \right]}}{{2\pi e\sigma _{\rm{B}}^2\left[ {\frac{{H_{\rm{E}}^2A(cA \!-\! 2)}}{{c(1 \!-\! {e^{ - cA}})}} \!+\! \frac{{{\rm{2}}H_{\rm{E}}^2}}{{{c^2}}} \!-\! H_{\rm{E}}^2{\xi ^2}{P^2} \!+\! \sigma _{\rm{E}}^2} \right]}}} \!\right\},\;\\
\;\;\;\;\;\;\;\;\;\;\;\;\;\;\;\;\;\alpha  \ne 0.5\;{\rm{and}}\;\alpha  \in (0,1]
\end{array} \right.\!\!\!\!,
 \label{eq20}
\end{equation}
where $C_{{\rm s},1}$ and $C_{{\rm s},2}$ in (\ref{eq19}) can be written, respectively, as (\ref{eq21}) and (\ref{eq22}) as shown at the top of the next page,
\begin{table*}\normalsize
\begin{eqnarray}
C_{{\rm s},1} \!\!\!&=&\!\!\!\!\! \frac{1}{2}\ln \left[ {1 + H_{\rm{B}}^2{A^2}\frac{{{e^{2\alpha \tilde \mu }}{{(1 - {e^{ - \tilde \mu }})}^2}}}{{2\pi e\sigma _{\rm{B}}^2{{\tilde \mu }^2}}}} \right] - {\cal Q}\left( {\frac{\delta }{{{\sigma _{\rm{E}}}}}} \right) - \frac{\delta }{{\sqrt {2\pi } {\sigma _{\rm{E}}}}}{e^{ - \frac{{{\delta ^2}}}{{2\sigma _{\rm{E}}^2}}}} + \frac{1}{2}  - \frac{{\mu {\sigma _{\rm{E}}}}}{{{H_{\rm{E}}}A\sqrt {2\pi } }}\left( {{e^{ - \frac{{{\delta ^2}}}{{2\sigma _{\rm{E}}^2}}}} - {e^{ - \frac{{{{({H_{\rm{E}}}A + \delta )}^2}}}{{2\sigma _{\rm{E}}^2}}}}} \right)\nonumber \\
 &-&\!\!\!\!\!\! \left[ {{\cal Q}\!\left(\! { - \frac{{\delta  \!+\! {H_{\rm{E}}}\alpha A}}{{{\sigma _{\rm{E}}}}}}\! \right) \!-\! {\cal Q}\!\left(\! {\frac{{\delta  \!+\! (1 \!-\! \alpha ){H_{\rm{E}}}A}}{{{\sigma _{\rm{E}}}}}}\! \right)}\! \right]\ln\! \left[\! {\frac{{{H_{\rm{E}}}A}}{{\sqrt {2\pi } {\sigma _{\rm{E}}}\mu }}.\frac{{{e^{\frac{{\mu \delta }}{{{H_{\rm{E}}}A}}}} \!-\! {e^{ - \mu \left( {1 + \frac{\delta }{{{H_{\rm{E}}}A}}} \right)}}}}{{\left( {1 \!-\! 2{\cal Q}\left( {\frac{\delta }{{{\sigma _{\rm{E}}}}}} \right)} \right)}}}\! \right]\!-\! \mu \alpha \!\left[\! {1 \!-\! 2{\cal Q}\!\left( {\frac{{\delta  \!+\! \frac{{{H_{\rm{E}}}A}}{2}}}{{{\sigma _{\rm{E}}}}}} \right)}\! \right],
 \label{eq21}
\end{eqnarray}

\begin{eqnarray}
C_{{\rm s},2}\!=\! \frac{1}{2}\ln \left( {1 \!+\! \frac{{H_{\rm{B}}^2{A^2}}}{{2\pi e\sigma _{\rm{B}}^2}}} \right) \!-\! \left[ {1 \!-\! 2{\cal Q}\left( {\frac{{\delta  \!+\! \frac{{{H_{\rm{E}}}A}}{2}}}{{{\sigma _{\rm{E}}}}}} \right)} \right]\ln \left[ {\frac{{{H_{\rm{E}}}A + 2\delta }}{{\sqrt {2\pi } {\sigma _{\rm{E}}}\left( {1 \!-\! 2{\cal Q}\left( {\frac{\delta }{{{\sigma _{\rm{E}}}}}} \right)} \right)}}} \right]
 \!-\! {\cal Q}\left( {\frac{\delta }{{{\sigma _{\rm{E}}}}}} \right) \!-\! \frac{\delta }{{\sqrt {2\pi } {\sigma _{\rm{E}}}}}{e^{ - \frac{{{\delta ^2}}}{{2\sigma _{\rm{E}}^2}}}} \!+\! \frac{1}{2},
\label{eq22}
\end{eqnarray}
\hrulefill
\end{table*}
where $\tilde \mu $ in (\ref{eq21}) is the solution to the following equation
\begin{equation}
\alpha  = \frac{1}{{\tilde \mu }} - \frac{{{e^{ - \tilde \mu }}}}{{1 - {e^{ - \tilde \mu }}}}.
 \label{eq23}
\end{equation}
In (\ref{eq21}) and (\ref{eq22}), $\mu >0$ and $\delta >0$ are free parameters. Suboptimal but useful choices for $\mu$ and $\delta$ are given by
\begin{equation}
\left\{ \begin{array}{l}
\delta  = {\sigma _{\rm{E}}}\ln \left( {1 + \frac{{{H_{\rm{E}}}A}}{{{\sigma _{\rm{E}}}}}} \right)\\
\mu  = \tilde \mu \left( {1 - {e^{ - \alpha \frac{{{\delta ^2}}}{{2\sigma _{\rm{E}}^2}}}}} \right).
\end{array} \right.
 \label{eq24}
\end{equation}
Moreover, $c$ in (\ref{eq20}) is the solution to the following equation
\begin{equation}
\alpha  = \frac{1}{{1 - {e^{ - cA}}}} - \frac{1}{{cA}}.
 \label{eq25}
\end{equation}
\label{them3}
\end{theorem}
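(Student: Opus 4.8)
The plan is to use that the secrecy capacity in (\ref{eq18}) is a maximization over admissible input densities, so $C_{\rm s}\ge I(X;Y_{\rm B})-I(X;Y_{\rm E})$ for every $f_X$ supported on $[0,A]$ with mean $\xi P$; each of the two claimed bounds then follows by choosing one such input and estimating the two mutual informations. Since $I(X;Y_k)=h(Y_k)-\tfrac12\ln(2\pi e\sigma_k^2)$ for $k\in\{{\rm B},{\rm E}\}$, the difference collapses to $h(Y_{\rm B})-h(Y_{\rm E})+\tfrac12\ln(\sigma_{\rm E}^2/\sigma_{\rm B}^2)$, so throughout I need a lower bound on $h(Y_{\rm B})$ and an upper bound on $h(Y_{\rm E})$.

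For the simpler bound (\ref{eq20}) I would lower-bound $h(Y_{\rm B})$ by the entropy-power inequality, $h(Y_{\rm B})\ge\tfrac12\ln\!\big(H_{\rm B}^2e^{2h(X)}+2\pi e\sigma_{\rm B}^2\big)$, and upper-bound $h(Y_{\rm E})$ by the Gaussian maximum-entropy bound $h(Y_{\rm E})\le\tfrac12\ln\!\big(2\pi e(H_{\rm E}^2\,{\rm var}(X)+\sigma_{\rm E}^2)\big)$. The natural input is the maximum-entropy density on $[0,A]$ under the mean constraint, namely the exponential-type density $f_X(x)\propto e^{cx}$ with $c$ fixed by (\ref{eq25}); this degenerates to the uniform density exactly at $\alpha=0.5$, which explains the two branches of (\ref{eq20}). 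It then suffices to insert the closed forms $h(X)=\ln\!\big((e^{cA}-1)/c\big)-c\xi P$ and $E[X^2]=\tfrac{A(cA-2)}{c(1-e^{-cA})}+\tfrac{2}{c^2}$, with ${\rm var}(X)=E[X^2]-\xi^2P^2$ (and the uniform values $h(X)=\ln A$, ${\rm var}(X)=A^2/12$ for $\alpha=0.5$); these are routine integrals that reproduce (\ref{eq20}) verbatim.

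For the tighter bound (\ref{eq19}) I would retain the EPI lower bound on $h(Y_{\rm B})$ but sharpen the eavesdropper side using $h(Y_{\rm E})=-E[\ln f_{Y_{\rm E}}(Y_{\rm E})]\le-E[\ln g_{Y_{\rm E}}(Y_{\rm E})]$, which holds for any test density $g_{Y_{\rm E}}$ by non-negativity of relative entropy. Following the amplitude-constrained construction of \cite{BIB22}, I would take $g_{Y_{\rm E}}$ with a bulk on $[0,H_{\rm E}A]$ matched to the input shape plus Gaussian tails whose width is set by the free parameter $\delta$ (with an extra exponential rate $\mu$ in the $\alpha<0.5$ case), and match the input accordingly: the decreasing exponential of rate $\tilde\mu/A$ from (\ref{eq23}) when $0<\alpha<0.5$, for which $e^{2h(X)}=A^2\tilde\mu^{-2}(1-e^{-\tilde\mu})^2e^{2\alpha\tilde\mu}$ yields the leading logarithm of (\ref{eq21}), and the uniform density (the symmetric maximum-entropy choice) when $0.5\le\alpha\le1$, giving the leading term $\tfrac12\ln(1+H_{\rm B}^2A^2/(2\pi e\sigma_{\rm B}^2))$ of (\ref{eq22}). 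Integrating $-\ln g_{Y_{\rm E}}$ against the true output law of $Y_{\rm E}$ then generates the $\mathcal{Q}$-function and $e^{-\delta^2/2\sigma_{\rm E}^2}$ terms, and the suboptimal choices (\ref{eq24}) arise from approximately minimizing this upper bound over $\delta$ and $\mu$.

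I expect the eavesdropper entropy estimate in (\ref{eq19}) to be the main obstacle: one must integrate $\ln g_{Y_{\rm E}}$ against the convolution $f_{Y_{\rm E}}=f_{H_{\rm E}X}*N(0,\sigma_{\rm E}^2)$, partition the line into the three regions defined by $g_{Y_{\rm E}}$, and track the boundary contributions at $0$ and $H_{\rm E}A$; this is exactly where the $\mathcal{Q}$-function cross terms and the $1-2\mathcal{Q}(\cdot)$ normalizers of (\ref{eq21})--(\ref{eq22}) originate, and where the $\alpha<0.5$ and $\alpha\ge0.5$ cases genuinely diverge. By comparison, the EPI step and the moment computations for (\ref{eq20}) are routine, and confirming that (\ref{eq23}) and (\ref{eq25}) have unique admissible roots for each $\alpha$ is a short monotonicity argument.
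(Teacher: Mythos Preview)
Your plan for (\ref{eq20}) coincides with the paper's: EPI on the Bob side, Gaussian maximum-entropy on the Eve side, and the truncated-exponential maximum-entropy input on $[0,A]$ (degenerating to uniform at $\alpha=0.5$); this is exactly Appendix~\ref{appe}.

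For (\ref{eq19}) the paper takes a shorter route than you propose. Rather than fixing a single test input and bounding both $I(X;Y_{\rm B})$ and $I(X;Y_{\rm E})$ for that input, it first invokes the elementary inequality
\[
\max_{f_X}\bigl(I(X;Y_{\rm B})-I(X;Y_{\rm E})\bigr)\;\ge\;\max_{f_X}I(X;Y_{\rm B})-\max_{f_X}I(X;Y_{\rm E})\;=\;C_{\rm B}-C_{\rm E}
\]
and then simply imports the point-to-point capacity bounds of Lapidoth--Moser--Wigger \cite{BIB22}: the EPI-based lower bounds on $C_{\rm B}$ (eqs.~(\ref{eq70}) and (\ref{eq72})) and the duality upper bounds on $C_{\rm E}$ (eqs.~(\ref{eq71}) and (\ref{eq73})). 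All of the ${\cal Q}$-function bookkeeping in (\ref{eq21})--(\ref{eq22}) is thus inherited verbatim from \cite{BIB22}, so the proof of (\ref{eq19}) becomes a two-line citation. Your route---evaluating $-E[\ln g_{Y_{\rm E}}(Y_{\rm E})]$ against the output law induced by one specific input---is more self-contained and can only be tighter (since $I(X;Y_{\rm E})\le C_{\rm E}$ for any admissible $f_X$), but it forces you to redo those integrals and then argue that the result dominates (\ref{eq21})--(\ref{eq22}) term by term. One caution specific to your decomposition: the uniform density has mean $A/2$, so under the \emph{equality} constraint $E[X]=\xi P$ it is admissible only at $\alpha=0.5$; for $\alpha\in(0.5,1]$ you would need the reflection $X\mapsto A-X$ (reducing to $1-\alpha\le 0.5$) or else fall back on the paper's $C_{\rm B}-C_{\rm E}$ decomposition, where the \cite{BIB22} lower bound (\ref{eq72}) is quoted for the full range $\alpha\in[0.5,1]$ directly.
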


\begin{proof}
See Appendix \ref{appd} and Appendix \ref{appe}.
\end{proof}

\begin{remark}
It can be easily proved that the curve of (\ref{eq20}) is symmetric with respect to $\alpha=0.5$.
Moreover, it is challenging to analyze the relationship between lower bound (\ref{eq19}) and $\alpha$.
Alternatively, numerical results are provided in Section \ref{section5}.
\end{remark}

\subsection{Upper Bound on Secrecy Capacity}
\label{section4_2}
According to (\ref{eq16}) and referring to \emph{Theorem \ref{them2}}, we have the following theorem.

\begin{theorem}
The secrecy capacity for the channel in (\ref{eq1}) with constraints (\ref{eq2}), (\ref{eq3}) and (\ref{eq4}) is upper-bounded by
\begin{equation}
C_{\rm s} \!\le \! \frac{1}{2}\ln\! \left[\! {\frac{{\left( {\frac{{H_{\rm{E}}^{\rm{2}}}}{{H_{\rm{B}}^{\rm{2}}}}\sigma _{\rm{B}}^{\rm{2}} \!+\! \sigma _{\rm{E}}^{\rm{2}}} \right)\left( {H_{\rm{B}}^2A\xi P \!+\! \sigma _{\rm{B}}^{\rm{2}}} \right)}}{{\sigma _{\rm{B}}^2\!\left(\! {H_{\rm{E}}^2A\xi P \!+\! 2\frac{{H_{\rm{E}}^{\rm{2}}}}{{H_{\rm{B}}^{\rm{2}}}}\sigma _{\rm{B}}^{\rm{2}} \!+\! \sigma _{\rm{E}}^{\rm{2}}} \!\right)\left(\! {1 \!+\! \frac{{H_{\rm{E}}^2\sigma _{\rm{B}}^2}}{{H_{\rm{B}}^2\sigma _{\rm{E}}^2}}} \!\right)}}} \!\right].
 \label{eq26}
\end{equation}
\label{them4}
\end{theorem}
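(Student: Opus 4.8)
The plan is to follow the dual-expression machinery already set up in (\ref{eq16}), exactly as the statement signals by pointing to (\ref{eq16}) and \emph{Theorem \ref{them2}}. The starting observation is that, conditioned on $X$, the outputs $Y_{\rm B}$ and $Y_{\rm E}$ are independent (their noises $Z_{\rm B},Z_{\rm E}$ are independent), so $f_{Y_{\rm B}|XY_{\rm E}}(y_{\rm B}|x,y_{\rm E})=f_{Y_{\rm B}|X}(y_{\rm B}|x)$ is simply the Gaussian density $N(H_{\rm B}x,\sigma_{\rm B}^2)$. Hence, once $g_{Y_{\rm B}|Y_{\rm E}}$ is also taken Gaussian, the right-hand side of (\ref{eq16}) becomes the expectation of a relative entropy between two Gaussians, and any admissible $g_{Y_{\rm B}|Y_{\rm E}}$ yields a valid upper bound. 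A useful cross-check on the whole scheme is the identity $C_{\rm s}=\max_{f_X} h(Y_{\rm B}|Y_{\rm E})-\frac{1}{2}\ln(2\pi e\sigma_{\rm B}^2)$, which shows the task is equivalent to upper-bounding a conditional differential entropy.

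First I would choose $g_{Y_{\rm B}|Y_{\rm E}}(y_{\rm B}|y_{\rm E})$ to be a Gaussian conditional density whose mean is affine in $y_{\rm E}$ and whose variance is a free parameter, i.e. the conditional law of $Y_{\rm B}$ given $Y_{\rm E}$ that would arise from a jointly Gaussian test input. This is the natural counterpart, for the second-moment-limited setting of this section, of the auxiliary output law used to prove \emph{Theorem \ref{them2}}; the affine-mean/constant-variance structure is what makes the inner relative entropy collapse to the classical Gaussian formula $\ln(s/\sigma_{\rm B})+[\sigma_{\rm B}^2+(H_{\rm B}x-m(y_{\rm E}))^2]/(2s^2)-1/2$. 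Substituting $y_{\rm E}=H_{\rm E}x+z_{\rm E}$ and taking the expectation over the optimal input $X^{*}$ and over $Z_{\rm E}$, every $x$-dependence is funnelled into $E(X)$ and $E(X^2)$, together with $\sigma_{\rm E}^2$ and the chosen regression coefficient.

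The decisive new ingredient relative to \emph{Theorem \ref{them2}} is the way the peak constraint enters. Because $0\le X\le A$, multiplying $X\le A$ by $X\ge 0$ gives $X^2\le AX$, hence $E(X^2)\le A\,E(X)=A\xi P$; this single inequality (which uses both (\ref{eq2}) and (\ref{eq3})) replaces the pure first-moment bookkeeping of the average-only case and is responsible for the $A\xi P$ terms in (\ref{eq26}). I would insert $E(X^2)\le A\xi P$ into the expected relative entropy, then fix the regression coefficient and the conditional variance of $g_{Y_{\rm B}|Y_{\rm E}}$ to the values suggested by the matched jointly Gaussian test input, and finally simplify. The degradedness hypothesis $H_{\rm B}/\sigma_{\rm B}\ge H_{\rm E}/\sigma_{\rm E}$ must be invoked here to guarantee that the residual variances appearing in $g_{Y_{\rm B}|Y_{\rm E}}$ remain non-negative, so that $g_{Y_{\rm B}|Y_{\rm E}}$ is a genuine density.

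I expect the main obstacle to be the choice of the parameters of $g_{Y_{\rm B}|Y_{\rm E}}$ (equivalently, the regression coefficient and the conditional variance, or the variance of the test input). A fully free optimization over these parameters produces a valid but differently shaped bound; reproducing the exact closed form (\ref{eq26}) requires committing to the specific, \emph{Theorem \ref{them2}}-inspired values rather than the unconstrained optimum, and then checking that the algebra condenses into the stated ratio. The remaining work — expanding $(H_{\rm B}X-m(Y_{\rm E}))^2$, collecting the $\sigma_{\rm B}^2,\sigma_{\rm E}^2$ and $H_{\rm B},H_{\rm E}$ factors, and recognizing the recurring factor $1+H_{\rm E}^2\sigma_{\rm B}^2/(H_{\rm B}^2\sigma_{\rm E}^2)$ that already appears in (\ref{eq17}) — is routine bookkeeping once the parameters are pinned down.
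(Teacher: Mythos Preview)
Your high-level plan matches the paper's: invoke the dual expression (\ref{eq16}), take a Gaussian auxiliary $g_{Y_{\rm B}|Y_{\rm E}}(y_{\rm B}|y_{\rm E})=\frac{1}{\sqrt{2\pi}s}\exp\!\bigl(-\frac{(y_{\rm B}-\mu y_{\rm E})^2}{2s^2}\bigr)$, use $E(X^2)\le A\,E(X)=A\xi P$ from (\ref{eq2})--(\ref{eq4}), and optimise over $\mu$ and $s$. The divergence is in your first simplification, that $f_{Y_{\rm B}|XY_{\rm E}}=f_{Y_{\rm B}|X}=N(H_{\rm B}x,\sigma_{\rm B}^2)$ by conditional independence of the noises. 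The paper does \emph{not} work with this; it carries over the factorisation (\ref{eq60}) from Appendix~\ref{appc}, under which $f_{Y_{\rm E}|Y_{\rm B},X}$ is Gaussian with mean $\frac{H_{\rm E}}{H_{\rm B}}y_{\rm B}$ and variance $\frac{H_{\rm E}^2}{H_{\rm B}^2}\sigma_{\rm B}^2+\sigma_{\rm E}^2$. That degraded structure is exactly what yields $I_1=-\frac{1}{2}\ln\!\bigl[2\pi e\sigma_{\rm B}^2(1+\frac{H_{\rm E}^2\sigma_{\rm B}^2}{H_{\rm B}^2\sigma_{\rm E}^2})\bigr]$ in (\ref{eq84}) and the quadratic $(1-\mu\frac{H_{\rm E}}{H_{\rm B}})^2(H_{\rm B}^2X^2+\sigma_{\rm B}^2)+\mu^2(\frac{H_{\rm E}^2}{H_{\rm B}^2}\sigma_{\rm B}^2+\sigma_{\rm E}^2)$ in (\ref{eq86}). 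Under your independence assumption one gets instead $I_1=-\frac{1}{2}\ln(2\pi e\sigma_{\rm B}^2)$ and $(H_{\rm B}-\mu H_{\rm E})^2X^2+\sigma_{\rm B}^2+\mu^2\sigma_{\rm E}^2$; carrying this through the optimisation produces a valid upper bound, but \emph{not} the closed form (\ref{eq26}) --- the $2\frac{H_{\rm E}^2}{H_{\rm B}^2}\sigma_{\rm B}^2$ and $\bigl(1+\frac{H_{\rm E}^2\sigma_{\rm B}^2}{H_{\rm B}^2\sigma_{\rm E}^2}\bigr)$ factors in the denominator of (\ref{eq26}) come precisely from the joint (\ref{eq60}) and do not arise from the independent-noise calculation.

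Your concern about the parameter choice is also misplaced: the paper does not commit to ``Theorem-\ref{them2}-inspired'' values but optimises (\ref{eq88}) freely, setting $\partial/\partial\mu$ and $\partial/\partial s$ to zero to obtain the explicit optima (\ref{eq90}), and substituting them back gives (\ref{eq91}), which combined with (\ref{eq84}) is exactly (\ref{eq26}). So once you adopt the paper's joint (\ref{eq60}) in place of conditional independence, the unconstrained minimisation over $(\mu,s)$ is the route that lands on (\ref{eq26}); no ad~hoc fixing is required, and the degradedness hypothesis is not needed to keep $s^2$ positive since $s$ is a free parameter rather than a residual variance.
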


\begin{proof}
See Appendix \ref{appf}.
\end{proof}

\begin{remark}
It can be easily shown that the upper bound (\ref{eq26}) is a monotonically non-decreasing function with respect to the APOIR $\alpha$.
\label{coradd1}
\end{remark}

\subsection{Asymptotic Behavior Analysis}
\label{section4_3}
By analyzing \emph{Theorem \ref{them3}} and \emph{Theorem \ref{them4}},
the asymptotic behavior of the secrecy capacity bounds in the high SNR regime is derived in the following corollary.

\begin{cor}
For the channel in (\ref{eq1}) with constraints (\ref{eq2}), (\ref{eq3}) and (\ref{eq4}),
the asymptotic behavior of the secrecy capacity bounds at asymptotically high SNR is given by
\begin{equation}
\ln \left( {\frac{{{H_{\rm{B}}}{\sigma _{\rm{E}}}}}{{{H_{\rm{E}}}{\sigma _{\rm{B}}}}}} \right) \le \mathop {\lim }\limits_{P \to \infty } C_{\rm s} \le \ln \left( {\frac{{{H_{\rm{B}}}{\sigma _{\rm{E}}}}}{{{H_{\rm{E}}}{\sigma _{\rm{B}}}}}} \right).
\label{eq26_1}
\end{equation}
\label{cor2}
\end{cor}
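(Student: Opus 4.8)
The plan is to establish (\ref{eq26_1}) by a squeeze argument, bounding $\lim_{P\to\infty}C_{\rm s}$ from below by the limit of the lower bound in \emph{Theorem \ref{them3}} and from above by the limit of the upper bound in \emph{Theorem \ref{them4}}, and showing both equal $\ln\!\big(H_{\rm B}\sigma_{\rm E}/(H_{\rm E}\sigma_{\rm B})\big)$. Throughout I hold the dimming target $\xi$ and the APOIR $\alpha=\xi P/A$ fixed, so that $P\to\infty$ forces $A=\xi P/\alpha\to\infty$; the auxiliary constants $\tilde\mu$ and $c$ defined by (\ref{eq23}) and (\ref{eq25}) then remain fixed, as they depend on $\alpha$ alone.

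For the upper bound I would let $A\to\infty$ directly in (\ref{eq26}). Since $A\xi P=\alpha A^2\to\infty$, the terms $H_{\rm B}^2 A\xi P$ and $H_{\rm E}^2 A\xi P$ dominate the additive constants $\sigma_{\rm B}^2$ and $2\frac{H_{\rm E}^2}{H_{\rm B}^2}\sigma_{\rm B}^2+\sigma_{\rm E}^2$ in the two inner factors, so the argument of the logarithm converges to $\big(\frac{H_{\rm E}^2}{H_{\rm B}^2}\sigma_{\rm B}^2+\sigma_{\rm E}^2\big)H_{\rm B}^2\big/\big[\sigma_{\rm B}^2H_{\rm E}^2\big(1+\frac{H_{\rm E}^2\sigma_{\rm B}^2}{H_{\rm B}^2\sigma_{\rm E}^2}\big)\big]$. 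Writing both composite factors over the common quantity $H_{\rm E}^2\sigma_{\rm B}^2+H_{\rm B}^2\sigma_{\rm E}^2$ and cancelling it reduces this ratio to $H_{\rm B}^2\sigma_{\rm E}^2/(H_{\rm E}^2\sigma_{\rm B}^2)$, whence $\lim_{P\to\infty}C_{\rm s}\le\frac12\ln\!\big(H_{\rm B}^2\sigma_{\rm E}^2/(H_{\rm E}^2\sigma_{\rm B}^2)\big)=\ln\!\big(H_{\rm B}\sigma_{\rm E}/(H_{\rm E}\sigma_{\rm B})\big)$.

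For the lower bound I would use (\ref{eq19}) rather than (\ref{eq20}), since only the former is asymptotically tight (the $2\pi e$ factor in the denominator of (\ref{eq20}) does not cancel at high SNR, leaving that bound strictly below the upper bound). Substituting the parameters (\ref{eq24}), the decisive structural fact is that $\delta=\sigma_{\rm E}\ln(1+H_{\rm E}A/\sigma_{\rm E})$ grows only logarithmically, so $\delta/A\to0$ while $\delta/\sigma_{\rm E}\to\infty$, and $\mu\to\tilde\mu$. Hence every Gaussian term has a definite limit: ${\cal Q}(\delta/\sigma_{\rm E})$, ${\cal Q}\big((\delta+\tfrac{H_{\rm E}A}{2})/\sigma_{\rm E}\big)$ and ${\cal Q}\big((\delta+(1-\alpha)H_{\rm E}A)/\sigma_{\rm E}\big)\to0$, while ${\cal Q}\big(-(\delta+H_{\rm E}\alpha A)/\sigma_{\rm E}\big)\to1$, and the factors $e^{-\delta^2/(2\sigma_{\rm E}^2)}$ decay faster than any power of $A$. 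For $C_{{\rm s},2}$ ($\alpha\ge0.5$) the surviving terms collapse to $\frac12\ln\!\big(H_{\rm B}^2A^2/(2\pi e\sigma_{\rm B}^2)\big)-\ln\!\big(H_{\rm E}A/(\sqrt{2\pi}\sigma_{\rm E})\big)+\frac12$; the $A$-dependence cancels, and since $\sqrt{2\pi}/\sqrt{2\pi e}=1/\sqrt e$ contributes a $-\frac12$ that cancels the standalone $+\frac12$, the limit is exactly $\ln\!\big(H_{\rm B}\sigma_{\rm E}/(H_{\rm E}\sigma_{\rm B})\big)$.

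The hard part will be the $C_{{\rm s},1}$ case ($\alpha<0.5$), where the same constant must emerge despite several $\tilde\mu$-dependent contributions. There the leading logarithm produces an extra $\ln\!\big(e^{\alpha\tilde\mu}(1-e^{-\tilde\mu})/\tilde\mu\big)$, the $-[\,\cdot\,]\ln[\,\cdot\,]$ term tends to $-\ln\!\big(H_{\rm E}A/(\sqrt{2\pi}\sigma_{\rm E})\big)-\ln\!\big((1-e^{-\tilde\mu})/\tilde\mu\big)$, and the final term tends to $-\tilde\mu\alpha$. The delicate point is to verify the exact cancellation $\ln\!\big(e^{\alpha\tilde\mu}(1-e^{-\tilde\mu})/\tilde\mu\big)-\ln\!\big((1-e^{-\tilde\mu})/\tilde\mu\big)-\tilde\mu\alpha=\alpha\tilde\mu-\tilde\mu\alpha=0$, which removes all $\tilde\mu$-dependence and leaves the identical $A$-free constant. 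Particular care is also needed in classifying which $Q$-function arguments diverge to $+\infty$ versus $-\infty$, because $\delta$ diverges but sublinearly in $A$. Once both one-sided limits coincide at $\ln\!\big(H_{\rm B}\sigma_{\rm E}/(H_{\rm E}\sigma_{\rm B})\big)$, the squeeze theorem delivers (\ref{eq26_1}).
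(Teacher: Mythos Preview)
Your proposal is correct and follows essentially the same route as the paper's proof in Appendix~\ref{appg}: both take the high-SNR limit of the upper bound (\ref{eq26}) from \emph{Theorem~\ref{them4}} and of the lower bound (\ref{eq19}) from \emph{Theorem~\ref{them3}} (splitting into the $C_{{\rm s},1}$ and $C_{{\rm s},2}$ cases), using the parameter choices (\ref{eq24}) and the growth $\delta\sim\sigma_{\rm E}\ln A$ to reduce every ${\cal Q}$- and exponential term to its limit and then verify the exact cancellation of the $\tilde\mu$-dependent pieces. The only cosmetic difference is that the paper organizes the $C_{{\rm s},1}$ limit by subtracting $\ln A$ from the Bob and Eve parts separately (their $I_6$ and $I_7$) before differencing, whereas you track the cancellations directly; the substance is identical.
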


\begin{proof}
See Appendix \ref{appg}.
\end{proof}

\begin{remark}
From \emph{Corollary \ref{cor2}}, it can be found that the asymptotic upper and lower bounds on secrecy capacity coincide in the sense that their gap is equal to zero.
\end{remark}

\section{Numerical Results}
\label{section5}
In this section, selected numerical examples will be provided to verify the derived expressions of secrecy capacity.
Here, a practical indoor VLC system within a $10{\rm m}\times 10{\rm m}\times 3{\rm m}$ room is considered.
Alice is installed on the ceiling, whose coordinate is $(a,b,c)$.
Bob and Eve are deployed on the floor, whose coordinates are $(d,e,f)$ and $(x,y,f)$.
To facilitate the evaluation, the noise variances of Bob and Eve are assumed to be the same and normalized to be 1,
i.e., $\sigma_{\rm B}^2=\sigma_{\rm E}^2=0$ dB \cite{BIB07}.
The other simulation parameters are listed in Table \ref{tab0}.

\begin{table}[!h]
\caption{Main simulation parameters.}
\begin{center}
\begin{tabularx}{8cm}{|p{5cm}|X|X|}\hline\hline
\centering \textbf{Parameters} &\centering \textbf{Symbols} &\centering \textbf{Values}
\tabularnewline\hline
\centering Order of the Lambertian emission &\centering $m$ &\centering 6
\tabularnewline\hline
\centering Physical area of the PD &\centering $A_r$ &\centering 1cm$^2$
\tabularnewline\hline
\centering Optical filter gain of the PD &\centering $T_s$ &\centering 1
\tabularnewline\hline
\centering Concentrator gain of the PD &\centering $g$ &\centering 3
\tabularnewline\hline
\centering FOV of the PD &\centering $\Psi$ &\centering 75$^0$
\tabularnewline\hline\hline
\end{tabularx}
\end{center}
\label{tab0}
\end{table}

\subsection{Results of VLC Only with an Average Optical Intensity Constraint}
To verify the accuracy of the lower bounds (\ref{eq7}), (\ref{eq8}) and the upper bound (\ref{eq17}), Figs. \ref{fig3}-\ref{fig4_1} and Table \ref{tab1} are provided in this subsection.

\begin{figure}
\centering
\includegraphics[width=8.5cm]{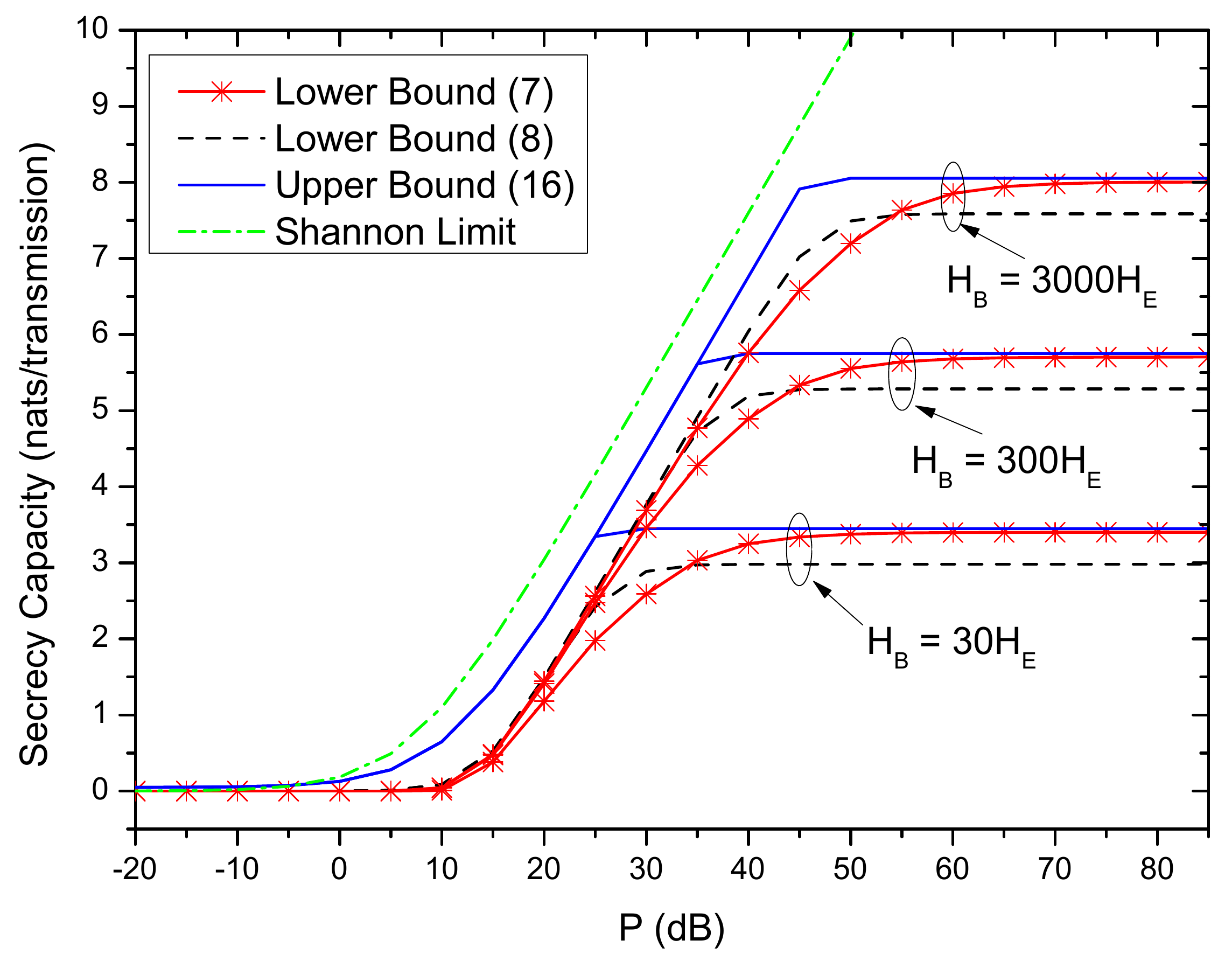}
\caption{Secrecy capacity bounds versus $P$ with different ${H_{\rm{B}}}/{H_{\rm{E}}}$ when $\xi=0.2$, $(a,b,c)=(5{\rm m}, 5{\rm m}, 3{\rm m})$ and $(d,e,f)=(5{\rm m},4.5{\rm m},0{\rm m})$.}
\label{fig3}
\end{figure}

Figure \ref{fig3} shows the secrecy capacity bounds versus $P$ with different ${H_{\rm{B}}}/{H_{\rm{E}}}$\footnote{Note that when $(x, y, f)$=(4.93m, 1.73m, 0m), ${H_{\rm{B}}}/{H_{\rm{E}}}{\rm{ = 30}}$; when $(x, y, f)$=(5.66m, 0.16m, 0m), ${H_{\rm{B}}}/{H_{\rm{E}}}{\rm{ = 300}}$; and when $(x, y, f)$=(9.7m, 9.87m, 0m), ${H_{\rm{B}}}/{H_{\rm{E}}}{\rm{ = 3000}}$.} when $\xi=0.2$, $(a,b,c)=(5{\rm m}, 5{\rm m}, 3{\rm m})$ and $(d,e,f)=(5{\rm m},4.5{\rm m},0{\rm m})$.
For comparison, the Shannon limit is also presented. Obviously, the Shannon capacity is always larger than the secrecy capacity bounds. When $P$ is small, all secrecy capacity bounds increase rapidly with the increase of $P$.
When $P$ is large, with the increase of $P$, the secrecy capacity bounds increase slowly and then tend to stable values.
Moreover, with the increase of ${H_{\rm{B}}}/{H_{\rm{E}}}$, the secrecy capacity bounds also increase.
This indicates that the larger the difference between $H_{\rm{B}}$ and $H_{\rm{E}}$ is, the better the system performance becomes.
It can also be observed that the performance of (\ref{eq8}) outperforms that of (\ref{eq7}) at low SNR.
However, at high SNR, eq. (\ref{eq7}) achieves better performance than (\ref{eq8}).
That is, the gap between (\ref{eq8}) and (\ref{eq17}) is tighter than that between (\ref{eq7}) and (\ref{eq17}) at low SNR.
At high SNR, the gap between (\ref{eq7}) and (\ref{eq17}) is tighter than that between (\ref{eq8}) and (\ref{eq17}).
Moreover, the difference between (\ref{eq7}) and (\ref{eq17}) becomes so small and it can be ignored.
Specifically, Table \ref{tab1} quantitatively shows the performance gaps between (\ref{eq7}) and (\ref{eq17}).
As can be seen, in the high SNR regime, the performance gap for each scenario is about 0.048 nats/transmission, which indicates that the asymptotic upper and lower bounds on secrecy capacity do not coincide. However, the difference is so small so that it can be ignored.
This conclusion coincides with that in \emph{Corollary \ref{cor1}}.

\begin{table}[!h]
\caption{Performance gaps between (\ref{eq7}) and (\ref{eq17}) at high SNR in Fig. \ref{fig3}.}
\begin{center}
\begin{tabularx}{8.5cm}{|p{1cm}|X|X|X|X|X|X|}\hline\hline
\centering \textbf{$P$ (dB)} &\multicolumn{3}{c|}{{\textbf{Performance gaps (nats/transmission)}}}
\tabularnewline\cline{2-4}
\centering  &\centering ${H_{\rm{B}}}=3000{H_{\rm{E}}}$ &\centering ${H_{\rm{B}}}=300{H_{\rm{E}}}$ &\centering ${H_{\rm{B}}}=30{H_{\rm{E}}}$
\tabularnewline\hline
\centering 65 &\centering 0.11118 &\centering 0.05804 &\centering 0.04971
\tabularnewline\hline
\centering 70 &\centering 0.07356 &\centering 0.05198 &\centering 0.04888
\tabularnewline\hline
\centering 75 &\centering 0.05803 &\centering 0.04971 &\centering 0.04858
\tabularnewline\hline
\centering 80 &\centering 0.05198 &\centering 0.04888 &\centering 0.04847
\tabularnewline\hline
\centering 85 &\centering 0.04871 &\centering 0.04858 &\centering 0.04844
\tabularnewline\hline\hline
\end{tabularx}
\end{center}
\label{tab1}
\end{table}

Figure \ref{fig4} shows the secrecy capacity bounds versus $\xi$ when $(a,b,c)=(5{\rm m}, 5{\rm m}, 3{\rm m})$, $(d,e,f)=(5{\rm m},4.5{\rm m},0{\rm m})$ and ${H_{\rm{B}}}/{H_{\rm{E}}}=300$.
As can be seen, all the secrecy capacity bounds are monotonically non-decreasing functions with respect to $\xi$.
When $P=35$ dB (i.e., at low SNR), the secrecy capacity bounds increase rapidly, while the increasing tendency becomes gradual when $P=65$ dB (i.e., at high SNR).
This indicates that, for VLC with only an average optical intensity constraint, the dimming target has a strong impact on secrecy capacity performance at low SNR. However, at high SNR, the effect of the dimming target becomes weak.
Moreover, at low SNR, the gap between (\ref{eq8}) and (\ref{eq17}) is smaller than that between (\ref{eq7}) and (\ref{eq17}).
At high SNR, the opposite is the case. This conclusion consists with that in Fig. \ref{fig3}.

\begin{figure}
\centering
\includegraphics[width=8.5cm]{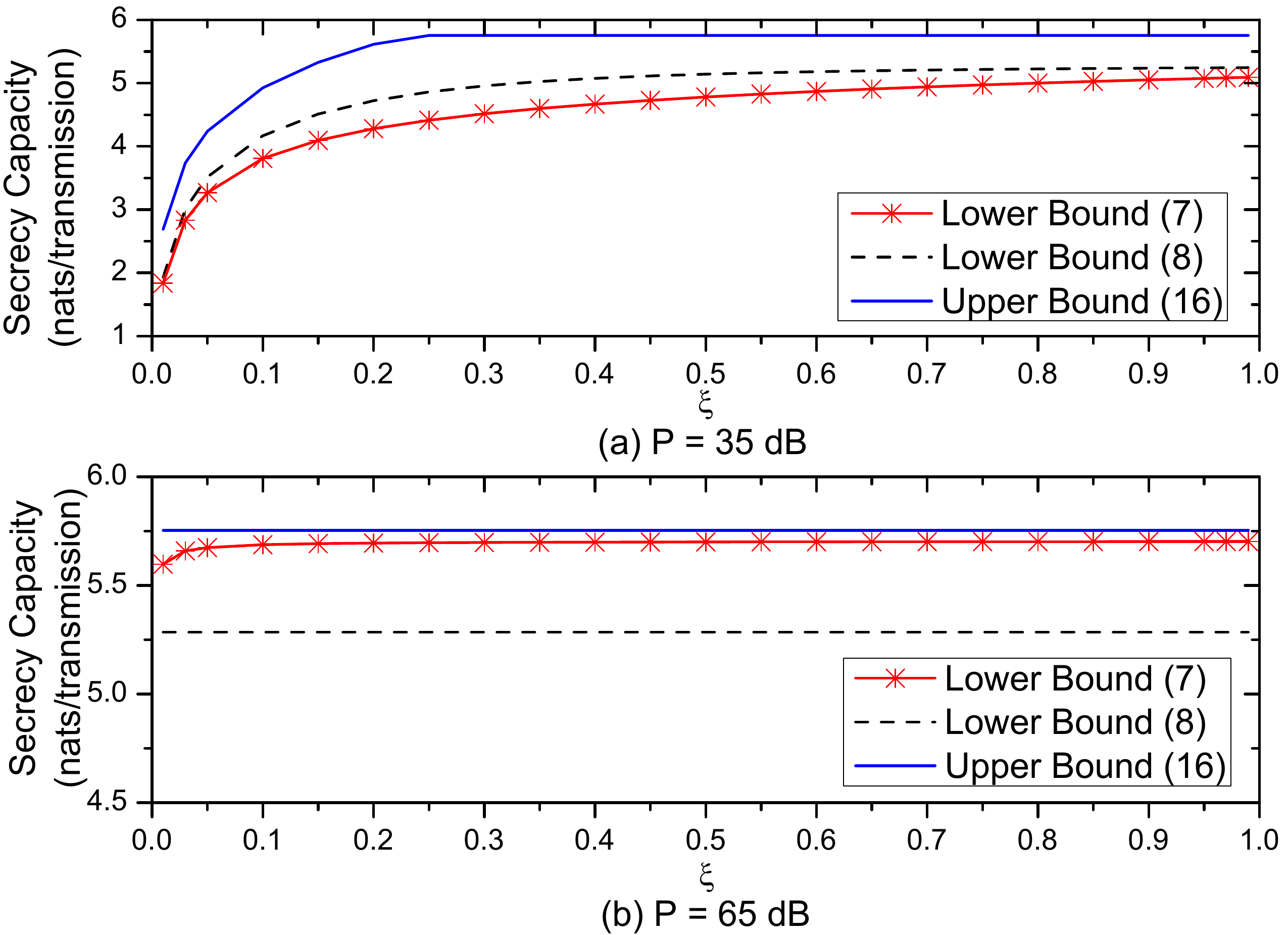}
\caption{Secrecy capacity bounds versus $\xi$ when $(a,b,c)=(5{\rm m}, 5{\rm m}, 3{\rm m})$, $(d,e,f)=(5{\rm m},4.5{\rm m},0{\rm m})$ and ${H_{\rm{B}}}/{H_{\rm{E}}}=300$.}
\label{fig4}
\end{figure}

Figure \ref{fig4_1} shows the secrecy capacity bounds versus ${H_{\rm{B}}}/{H_{\rm{E}}}$ when $\xi=0.2$.
It can be observed that the secrecy capacity bounds are zero when ${H_{\rm{B}}}/{H_{\rm{E}}}<1$, which indicates that
the information-theoretic security cannot be achieved.
When ${H_{\rm{B}}}/{H_{\rm{E}}}>1$, the secrecy capacity bounds are larger than zero.
The zone that ${H_{\rm{B}}}/{H_{\rm{E}}}>1$ is named as the available zone. In this zone, the secure transmission can be guaranteed.
Moreover, the secrecy capacity bounds increase with the increase of ${H_{\rm{B}}}/{H_{\rm{E}}}$.
Furthermore, eq. (\ref{eq8}) is tighter than (\ref{eq7}) at low SNR, while the performance of (\ref{eq7}) is better than that of (\ref{eq8}) at high SNR. Similar conclusion can be drawn from Fig. \ref{fig4}.

\begin{figure}
\centering
\includegraphics[width=9cm]{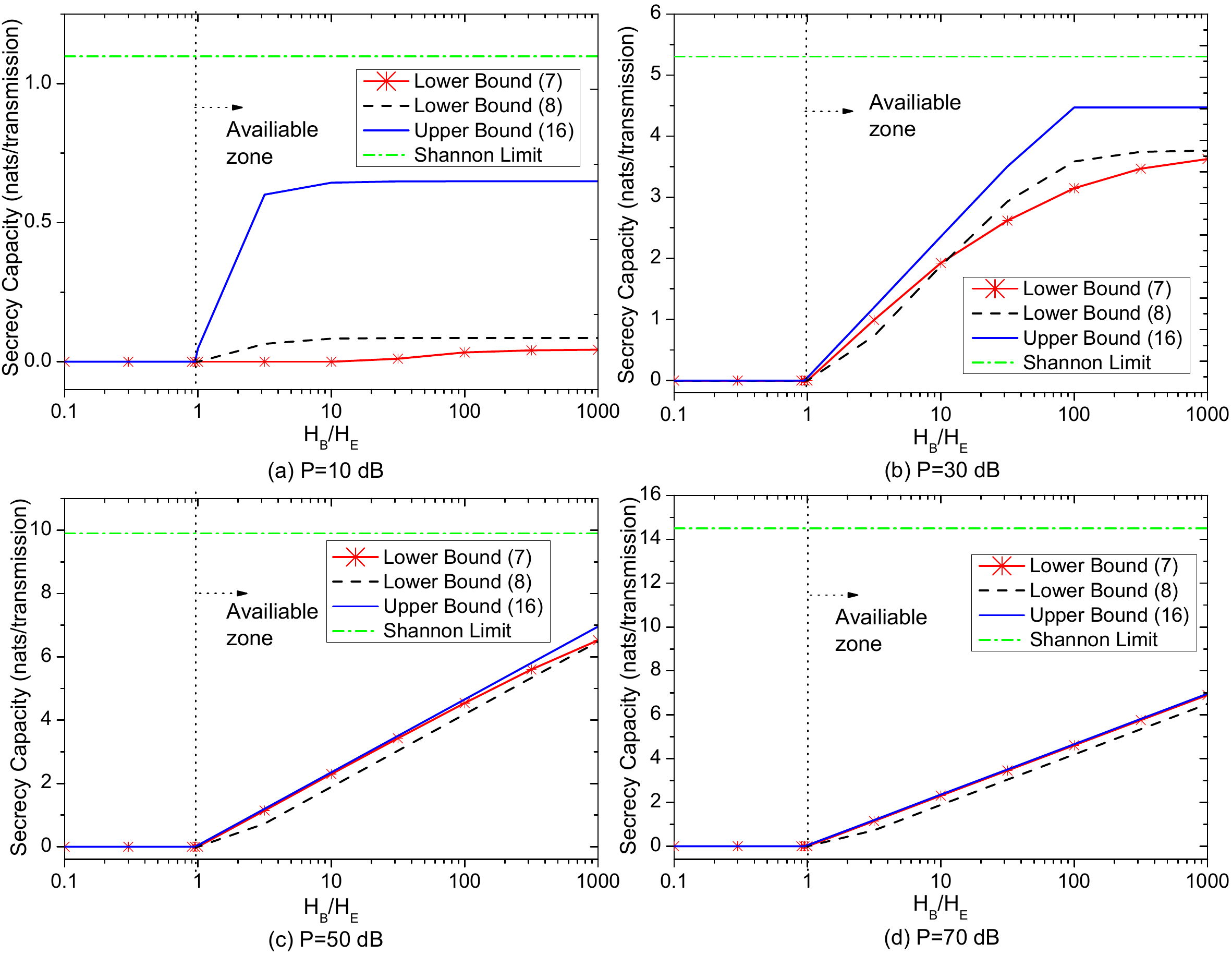}
\caption{Secrecy capacity bounds versus ${H_{\rm{B}}}/{H_{\rm{E}}}$ when $\xi=0.2$.}
\label{fig4_1}
\end{figure}

\subsection{Results of VLC with Both Average and Peak Optical Intensity Constraints}
To verify the accuracy of the lower bounds (\ref{eq19}), (\ref{eq20}) and the upper bound (\ref{eq26}), Figs. \ref{fig5}-\ref{fig6_1} and Table \ref{tab2} are provided in this subsection.

Figure \ref{fig5} shows the secrecy capacity bounds versus $A$ with different ${H_{\rm{B}}}/{H_{\rm{E}}}$ when $\xi=0.2$, $P=A$, $(a,b,c)=(5{\rm m}, 5{\rm m}, 3{\rm m})$ and $(d,e,f)=(5{\rm m},4.5{\rm m},0{\rm m})$. In this figure, the Shannon limit is also provided. Once again, the performance of Shannon limit always outperforms that of all secrecy capacity bounds.
Similar to Fig. \ref{fig3}, the secrecy capacity bounds increase and then tend to stable values with the increase of $A$.
When $A$ is small (i.e., at low SNR), the value of (\ref{eq20}) is larger than that of (\ref{eq19}).
However, when $A$ is large (i.e., at high SNR), the performance of (\ref{eq19}) is better than that of (\ref{eq20}).
At high SNR, the difference between (\ref{eq19}) and (\ref{eq26}) is very small and can be ignored.
Different from Table \ref{tab1}, the performance gaps between (\ref{eq19}) and (\ref{eq26}) in Table \ref{tab2} are almost zero, which coincides with the conclusion in \emph{Corollary \ref{cor2}}.
This indicates that the asymptotic upper and lower bounds on secrecy capacity coincide at high SNR.

\begin{figure}
\centering
\includegraphics[width=8.5cm]{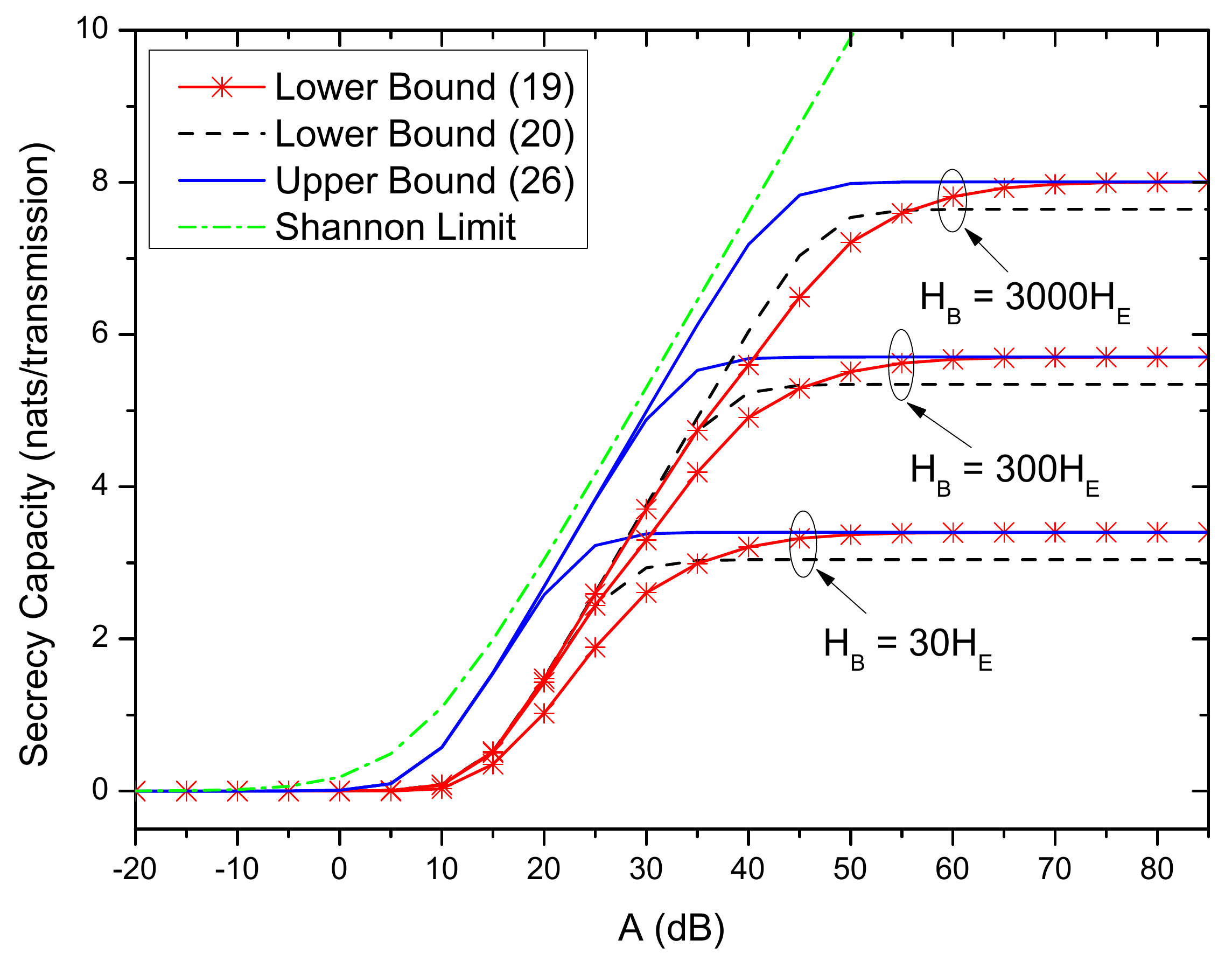}
\caption{Secrecy capacity bounds versus $A$ with different ${H_{\rm{B}}}/{H_{\rm{E}}}$ when $\xi=0.2$, $P=A$, $(a,b,c)=(5{\rm m}, 5{\rm m}, 3{\rm m})$ and $(d,e,f)=(5{\rm m},4.5{\rm m},0{\rm m})$.}
\label{fig5}
\end{figure}

\begin{table}[!h]
\caption{Performance gaps between (\ref{eq19}) and (\ref{eq26}) at high SNR.}
\begin{center}
\begin{tabularx}{8.5cm}{|p{1cm}|X|X|X|X|X|X|}\hline\hline
\centering \textbf{$A$ (dB)} &\multicolumn{3}{c|}{{\textbf{Performance gaps (nats/transmission)}}}
\tabularnewline\cline{2-4}
\centering  &\centering ${H_{\rm{B}}}=3000{H_{\rm{E}}}$ &\centering ${H_{\rm{B}}}=300{H_{\rm{E}}}$ &\centering ${H_{\rm{B}}}=30{H_{\rm{E}}}$
\tabularnewline\hline
\centering 65 &\centering 8.0331$\times 10^{-2}$ &\centering 1.1486$\times 10^{-2}$ &\centering  1.4744$\times 10^{-3}$
\tabularnewline\hline
\centering 70 &\centering 3.1006$\times 10^{-2}$ &\centering 4.1499$\times 10^{-3}$ &\centering  5.1763 $\times 10^{-4}$
\tabularnewline\hline
\centering 75 &\centering 1.1479$\times 10^{-2}$ &\centering 1.4750$\times 10^{-3}$ &\centering 1.7993$\times 10^{-4}$
\tabularnewline\hline
\centering 80 &\centering 4.1472$\times 10^{-3}$ &\centering 5.1785$\times 10^{-4}$ &\centering 6.2036$\times 10^{-5}$
\tabularnewline\hline
\centering 85 &\centering 1.4741$\times 10^{-3}$ &\centering 1.8001$\times 10^{-4}$ &\centering 2.1242$\times 10^{-5}$
\tabularnewline\hline\hline
\end{tabularx}
\end{center}
\label{tab2}
\end{table}

Figure \ref{fig6} shows the secrecy capacity bounds versus $\xi$ when ${H_{\rm{B}}}/{H_{\rm{E}}}=300$, $P=A$, $(a,b,c)=(2.5{\rm m}, 2.5{\rm m}, 3{\rm m})$ and $(d,e,f)=(2.5{\rm m},2{\rm m},0.8{\rm m})$.
Obviously, the dimming target has a strong impact on system performance.
For (\ref{eq19}) and (\ref{eq26}), when $A=35$ dB, the secrecy capacity bounds increase rapidly with $\xi$; when $A=65$ dB, the impact of $\xi$ is slight, and the secrecy capacity bounds almost achieve stable values.
However, for (\ref{eq20}), the curve is completely symmetric with respect to $\xi=0.5$,
and the maximum value is obtained when $\xi=0.5$.
It can be found from Fig. \ref{fig6}(a) that (\ref{eq20}) is better than (\ref{eq19}) at low SNR.
However, at high SNR as shown in Fig. \ref{fig6}(b), it is better to employ (\ref{eq19}) as the lower bound.
Similar results can also be seen in Fig. \ref{fig5}.

\begin{figure}
\centering
\includegraphics[width=8.5cm]{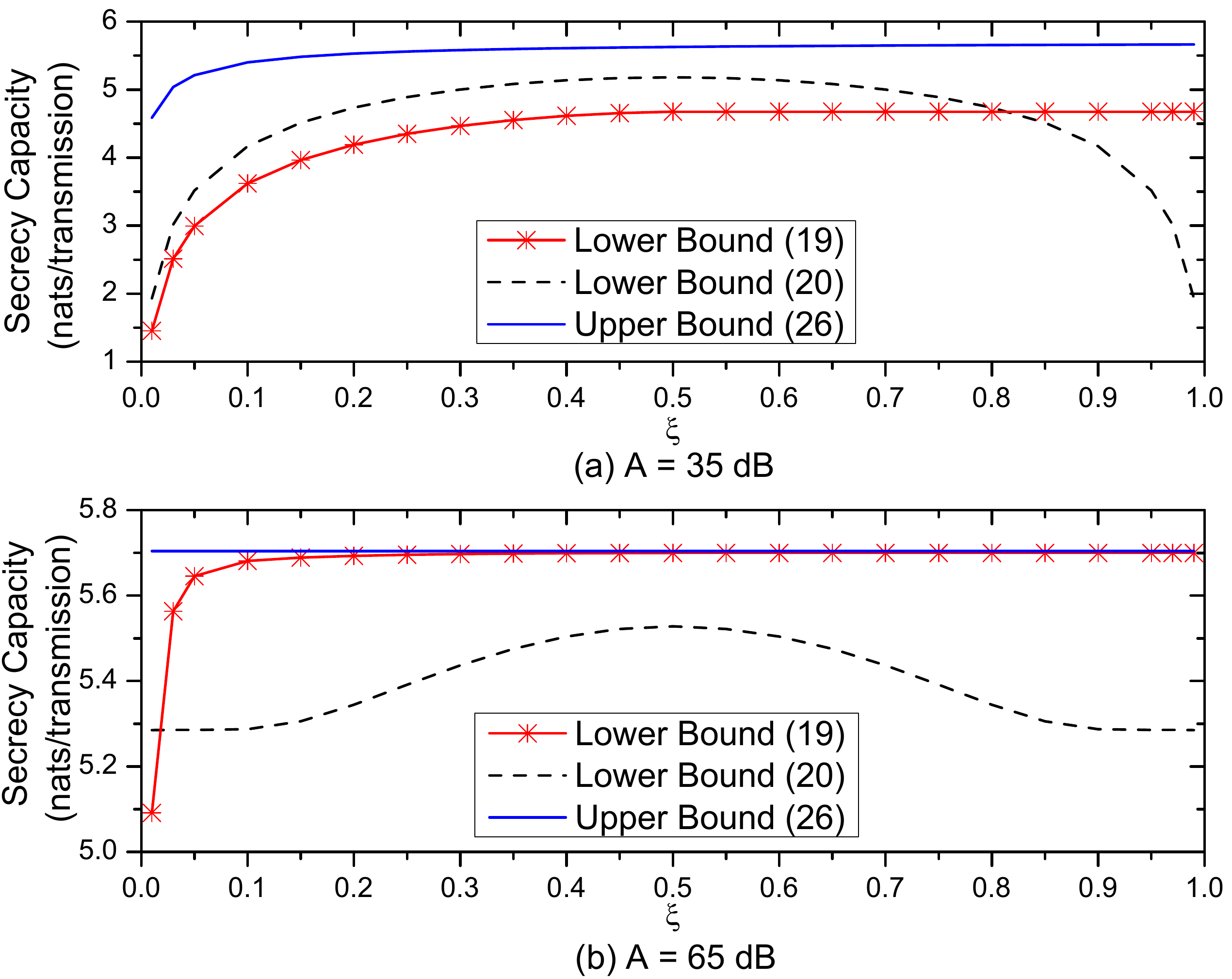}
\caption{Secrecy capacity bounds versus $\xi$ when ${H_{\rm{B}}}/{H_{\rm{E}}}=300$, $P=A$, $(a,b,c)=(2.5{\rm m}, 2.5{\rm m}, 3{\rm m})$ and $(d,e,f)=(2.5{\rm m},2{\rm m},0.8{\rm m})$.}
\label{fig6}
\end{figure}

Figure \ref{fig6_1} shows secrecy capacity bounds versus ${H_{\rm{B}}}/{H_{\rm{E}}}$ when $\xi=0.2$ and $P=A$.
Here, the Shannon limit is also provided. As can be seen, the performance of Shannon limit always outperforms that of all secrecy capacity bounds.
Obviously, with the increase of ${H_{\rm{B}}}/{H_{\rm{E}}}$, the secrecy performance improves.
At the available zone, the secure communications can be implemented.
Moreover, the values of lower bound (\ref{eq20}) are almost larger than that of (\ref{eq19}) at low SNR, and the performance of (\ref{eq19}) is better than that of (\ref{eq20}) at high SNR. This observation is the same as that in Fig. \ref{fig6}.

\begin{figure}
\centering
\includegraphics[width=9cm]{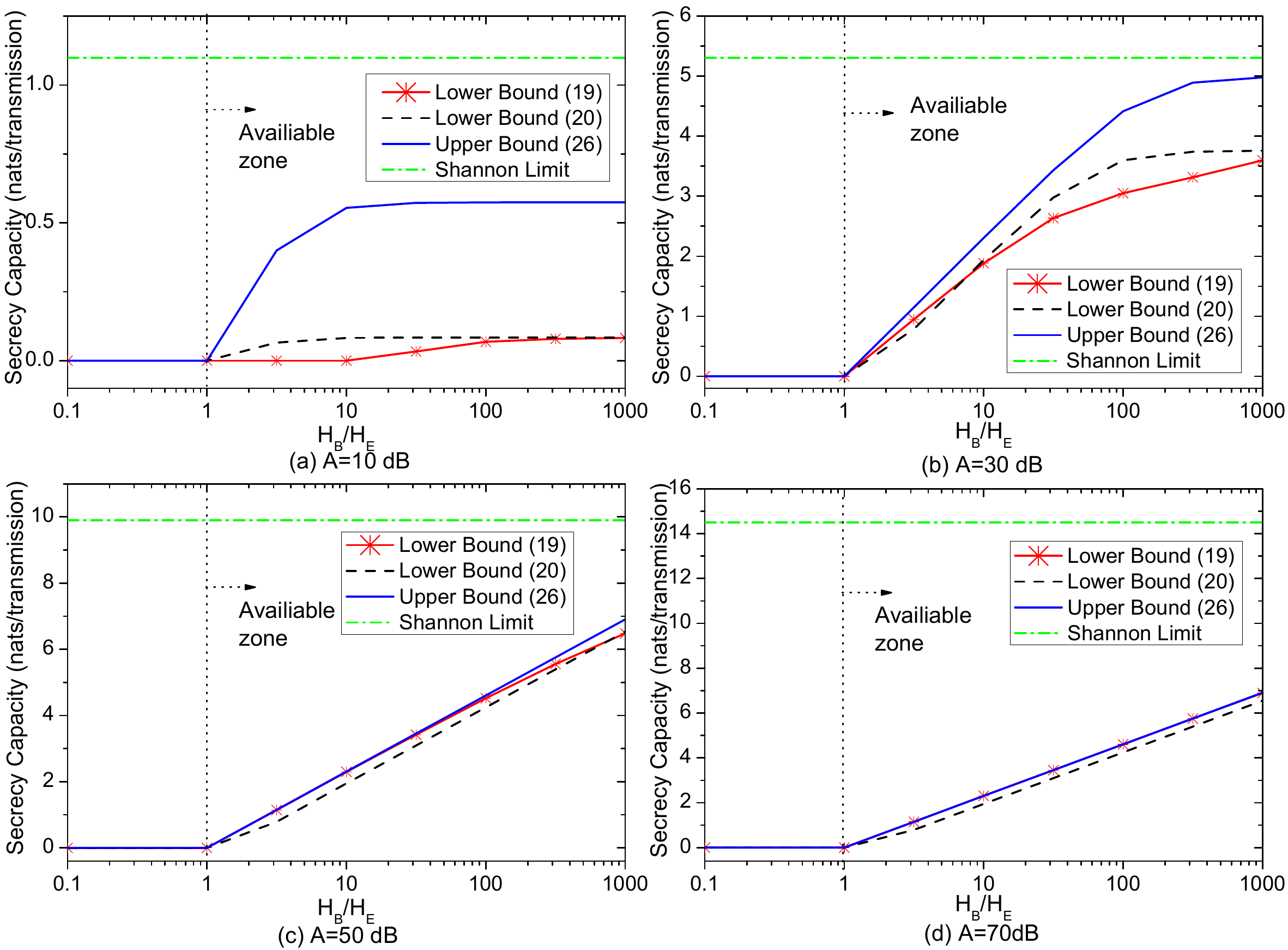}
\caption{Secrecy capacity bounds versus ${H_{\rm{B}}}/{H_{\rm{E}}}$ when $\xi=0.2$ and $P=A$.}
\label{fig6_1}
\end{figure}

\subsection{Results of Insecure Region}
As it is known, when the main channel is worse than the eavesdropping channel (i.e., $H_{\rm B}/\sigma_{\rm B} < H_{\rm E}/\sigma_{\rm E}$), the secrecy capacity is zero, and thus secure transmission cannot be guaranteed.
Here, the ``insecure region" \cite{BIBadd} is analyzed,
which represents the receive region that the secrecy capacity is zero.
If $\sigma_{\rm B}=\sigma_{\rm E}$, the indoor system is insecure when $H_{\rm B} < H_{\rm E}$.
According to (\ref{eq5}), it can be known that the insecure region is a disc with center $(a, b, f)$ and radius $\sqrt {{{(a - d)}^2} + {{(b - e)}^2}} $.

\begin{figure*}[!t]
  \centering
  \subfigure[$(d, e, f)=(5{\rm m}, 5{\rm m}, 0{\rm m})$]{
    \label{fig:subfig:a}
    \includegraphics[width=0.25\textwidth]{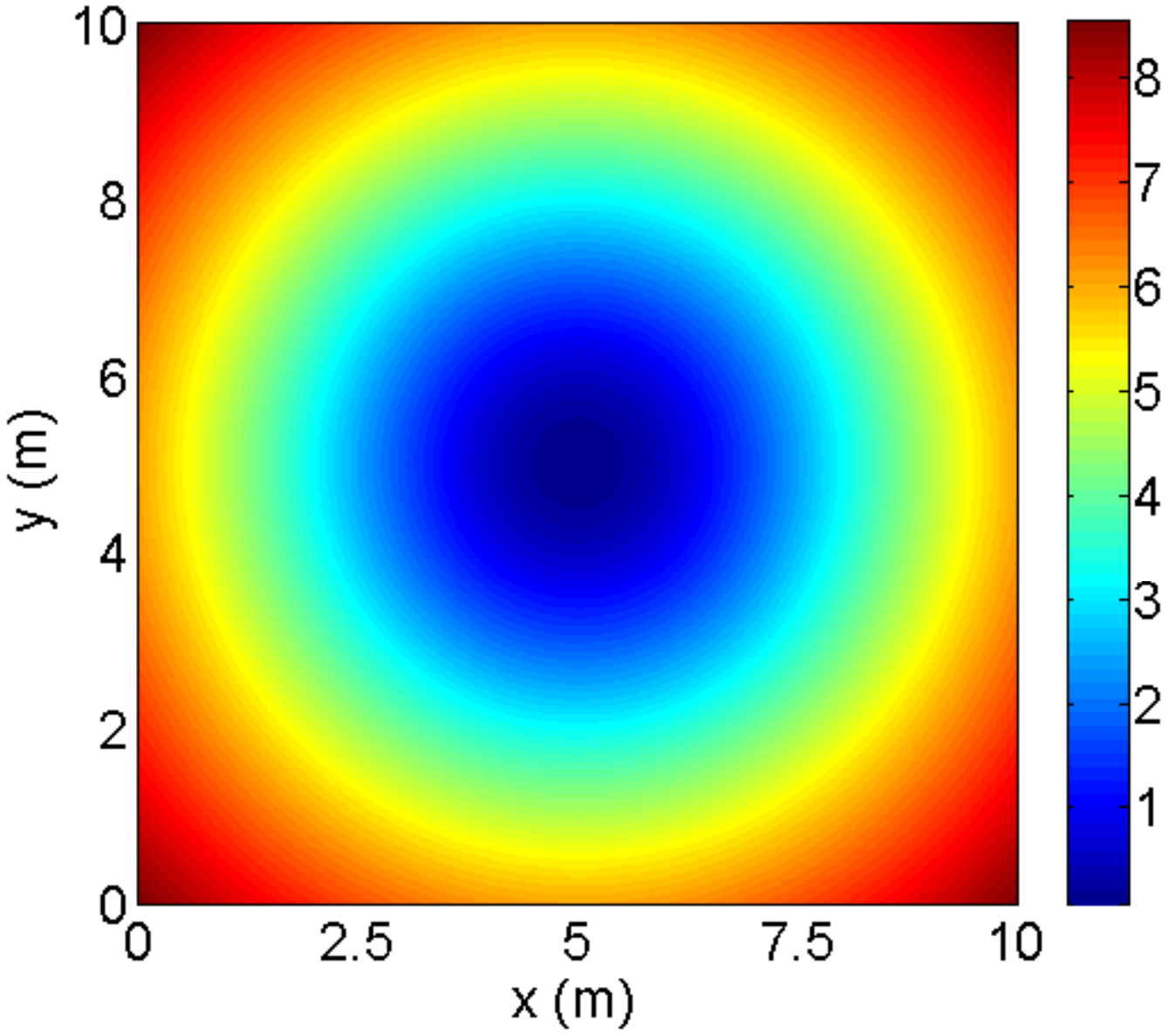}}
  \subfigure[$(d, e, f)=(3{\rm m}, 3{\rm m}, 0{\rm m})$]{
    \label{fig:subfig:b}
    \includegraphics[width=0.25\textwidth]{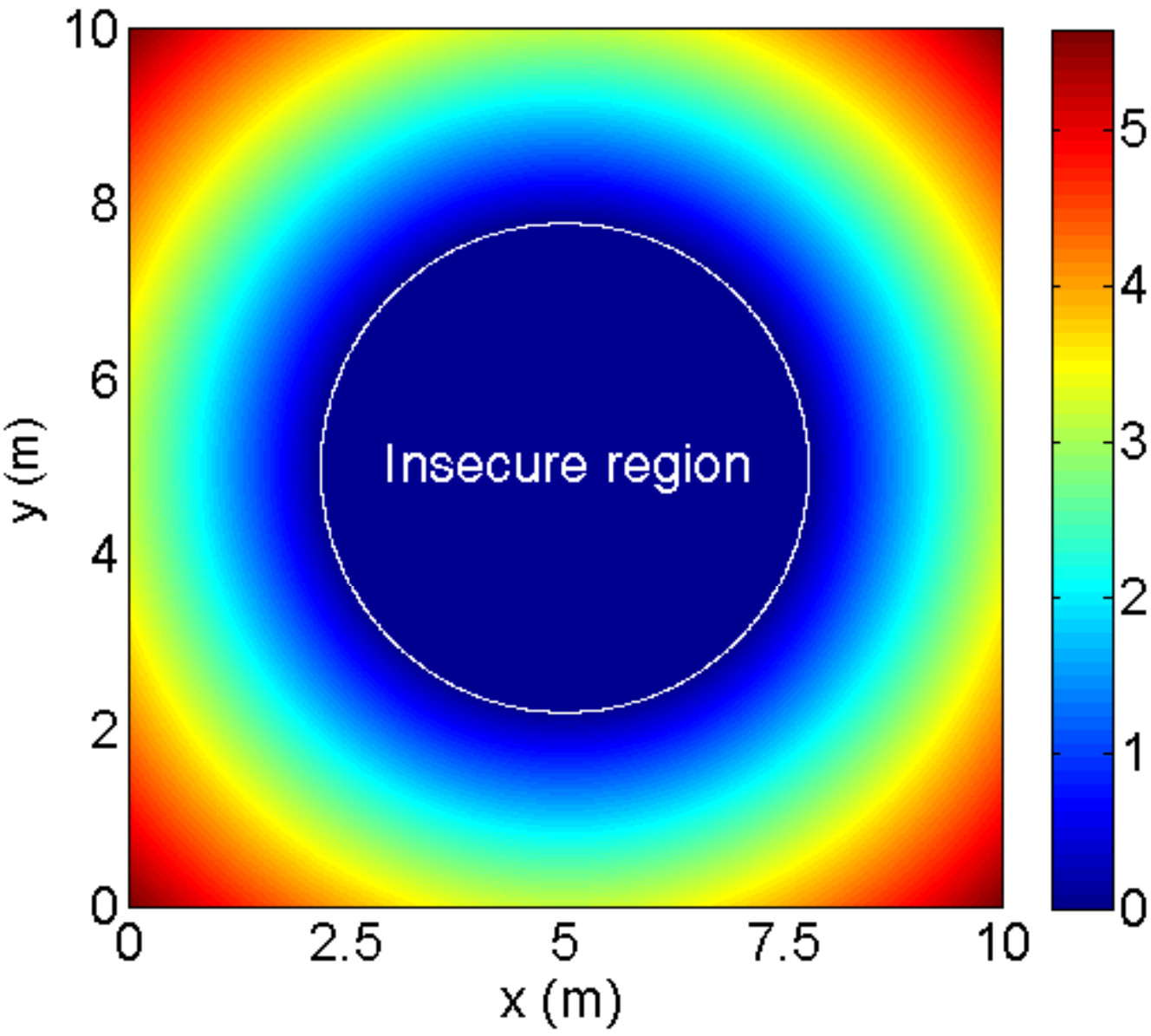}}
  \subfigure[$(d, e, f)=(0{\rm m}, 0{\rm m}, 0{\rm m})$]{
    \label{fig:subfig:b}
    \includegraphics[width=0.27\textwidth]{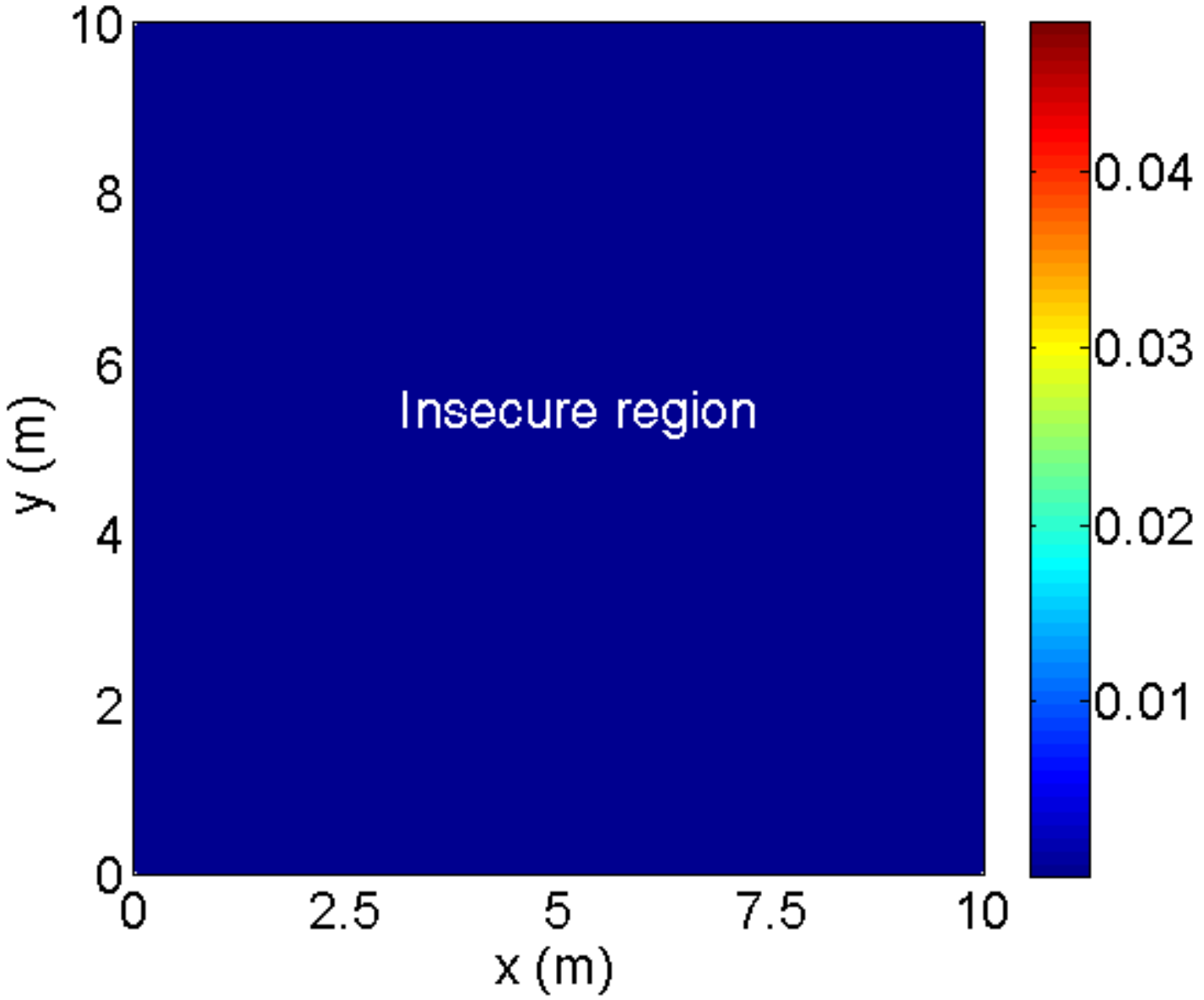}}
  \caption{Secrecy capacity bound (\ref{eq17}) versus different positions of Eve when $\xi {\rm{ = 0}}{\rm{.2}}$, $P=50$ dB and $(a, b, c)=(5{\rm m}, 5{\rm m}, 3{\rm m})$.}
  \label{figA}
\end{figure*}

\begin{figure*}[!t]
  \centering
  \subfigure[$(d, e, f)=(5{\rm m}, 5{\rm m}, 0{\rm m})$]{
    \label{fig:subfig:a}
    \includegraphics[width=0.25\textwidth]{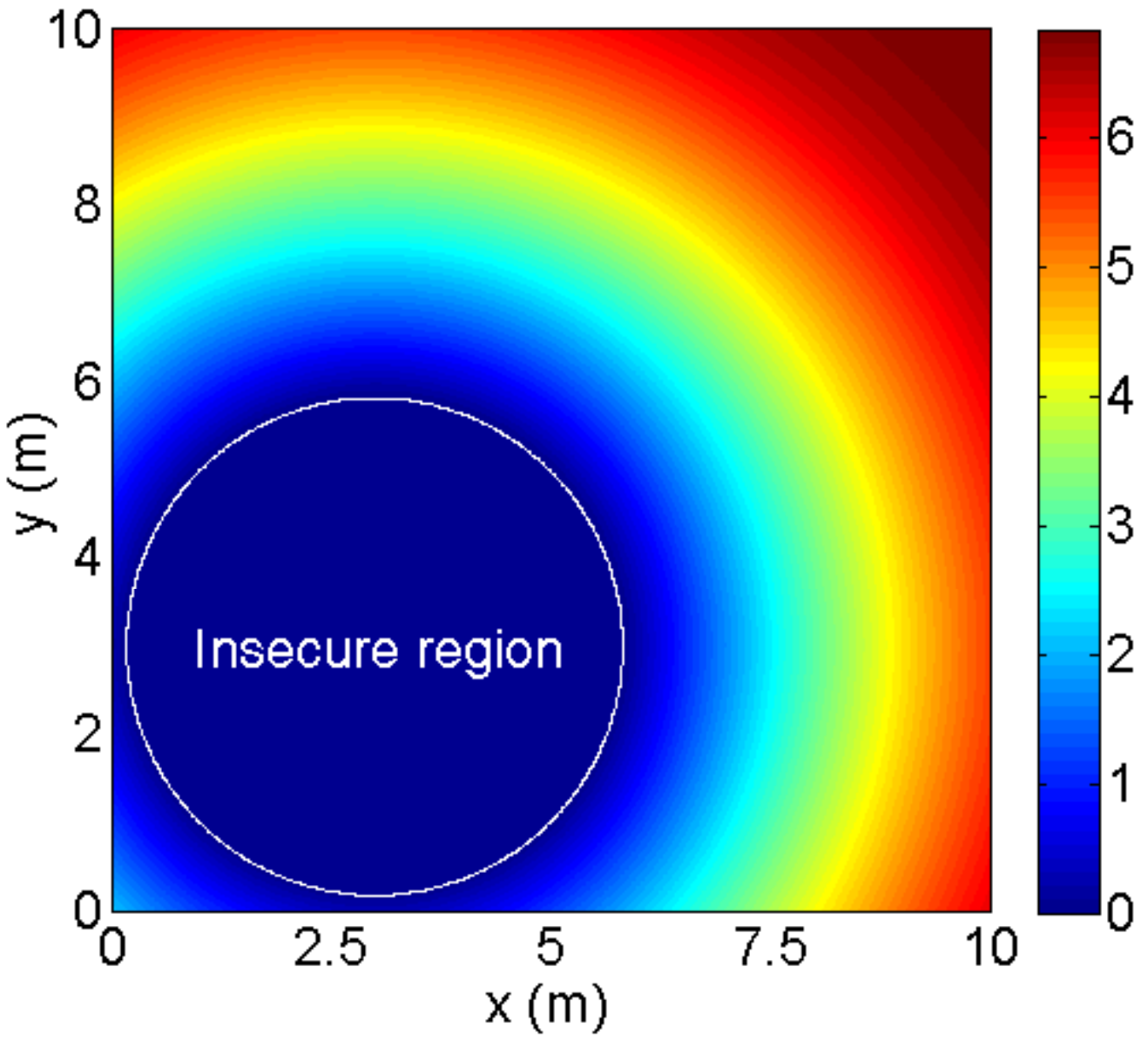}}
  \subfigure[$(d, e, f)=(3{\rm m}, 3{\rm m}, 0{\rm m})$]{
    \label{fig:subfig:b}
    \includegraphics[width=0.25\textwidth]{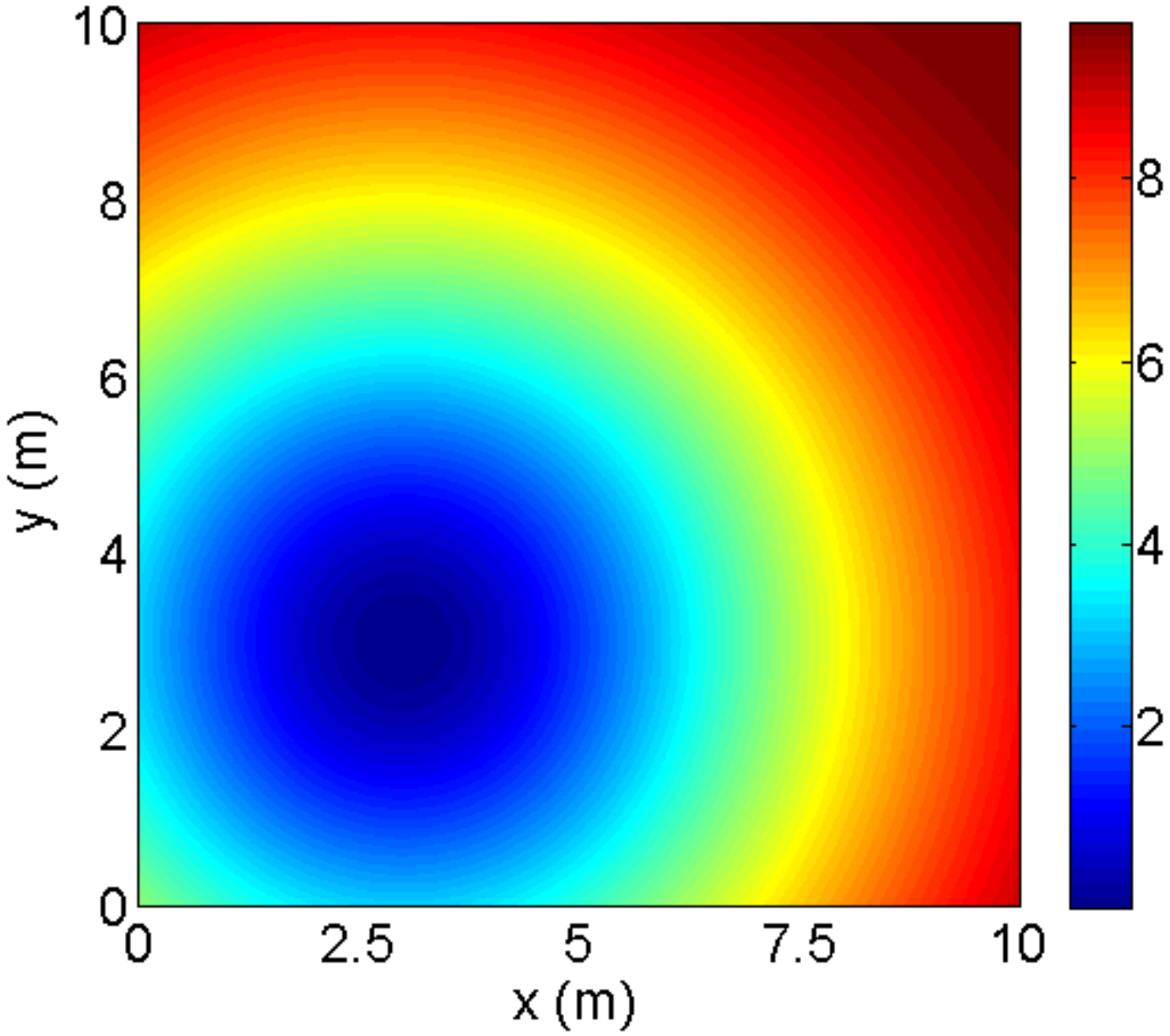}}
   \subfigure[$(d, e, f)=(0{\rm m}, 0{\rm m}, 0{\rm m})$]{
    \label{fig:subfig:b}
    \includegraphics[width=0.25\textwidth]{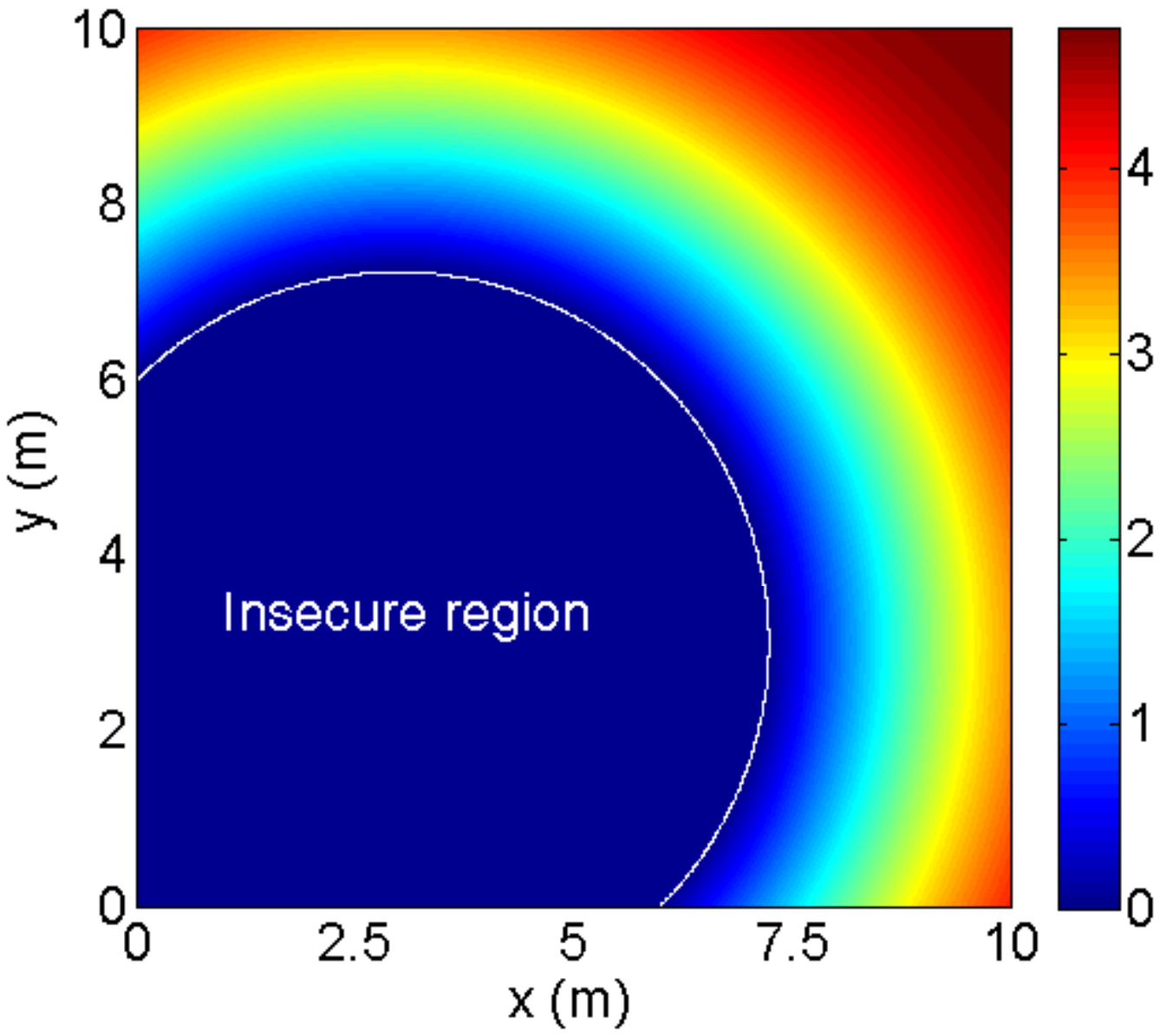}}
  \caption{Secrecy capacity bound (\ref{eq26}) versus different positions of Eve when $\xi {\rm{ = 0}}{\rm{.2}}$, $A=P=50$ dB and $(a, b, c)=(3{\rm m}, 3{\rm m}, 3{\rm m})$.}
  \label{figB}
\end{figure*}

To verify the insecure region,
Fig. \ref{figA} shows secrecy capacity bound (\ref{eq17}) for different positions of Eve when $\xi= 0.2$, $P=50$ dB and $(a, b, c)=(5{\rm m}, 5{\rm m}, 3{\rm m})$.
In Fig. \ref{figA}, the dark blue area represents the insecure region.
As shown in Fig. \ref{figA}(a),
when Bob is located underneath Alice, i.e., $(d, e, f)=(5{\rm m}, 5{\rm m}, 0{\rm m})$,
no area in the receiver plane belongs to the insecure region.
That is, the secure transmission can be guaranteed for all positions of the receiver plane.
When Bob moves to the coordinate $(3{\rm m}, 3{\rm m}, 0{\rm m})$,
the insecure region enlarges.
Moreover, when Bob moves to the corner, i.e., $(d, e, f)=(0{\rm m}, 0{\rm m}, 0{\rm m})$,
all areas become the insecure region.
By observing Fig. \ref{figA}, the insecure region is a disc with center $(5{\rm m}, 5{\rm m}, 0{\rm m})$ and radius $\sqrt {{{(5- d)}^2} + {{(5 - e)}^2}}$m.

Fig. \ref{figB} shows secrecy capacity bound (\ref{eq26}) for different positions of Eve when $\xi {\rm{ = 0}}{\rm{.2}}$, $A=P=50$ dB and $(a, b, c)=(3{\rm m}, 3{\rm m}, 3{\rm m})$.
In Fig. \ref{figB}(a), when Bob is located at $(5{\rm m}, 5{\rm m}, 0{\rm m})$,
the insecure region is large.
When Bob is located underneath Alice in Fig. \ref{figB}(b),
the insecure region seems to vanish.
When Bob moves to the corner in Fig. \ref{figB}(c),
the insecure region enlarges once again.
From Fig. \ref{figB}, the insecure region is a disc with center $(3{\rm m}, 3{\rm m}, 0{\rm m})$ and radius $\sqrt {{{(3- d)}^2} + {{(3- e)}^2}}$m.

\section{Conclusions}
\label{section6}
Unlike conventional RFWC, the indoor VLC is well modelled with optical intensity constraints imposed on the channel input.
Therefore, the PHY security in VLC is different from that in RFWC.
In this paper, we have investigated the secrecy capacity for indoor VLC.
Two scenarios are considered, i.e., one is only with an average optical intensity constraint,
and the other is with both average and peak optical intensity constraints.
Closed-form expressions of the lower and upper bounds on secrecy capacity are derived.
It is shown that the gap between the lower and upper bounds is small, which verifies the accuracy of the derived expressions.
Moreover, at high SNR, when only considering the average optical intensity constraint,
the asymptotic lower and upper bounds do not coincide but with a small performance gap (i.e., 0.048 nats/transmission).
When both average and peak optical intensity constraints are considered, the asymptotic lower and upper bounds coincide,
and thus the secrecy capacity can be obtained precisely.

After obtaining the secrecy capacity bounds for the indoor VLC,
exploring the schemes to enhance the PHY security is the natural next step.
As future research directions, it is of interest to seek new power, code, channel,
and signal detection approaches to improve the PHY security for indoor VLC.

\numberwithin{equation}{section}
\appendices
\section{Proof of lower bound (\ref{eq7}) in Theorem \ref{them1}}
\label{appa}
\renewcommand{\theequation}{A.\arabic{equation}}
For two arbitrary functions ${f_1}(x)$ and ${f_2}(x)$, we have
$\mathop {\max }\limits_x \left( {{f_1}(x) - {f_2}(x)} \right) \ge \mathop {\max }\limits_x {f_1}(x) - \mathop {\max }\limits_x {f_2}(x)$ \cite{BIB21}.
Therefore, the secrecy capacity in (\ref{eq6}) is lower-bounded by
\begin{eqnarray}
C_{\rm s} \ge \mathop {\max }\limits_{{f_X}(x)} I\left( {X;{Y_{\rm{B}}}} \right) - \mathop {\max }\limits_{{f_X}(x)} I\left( {X;{Y_{\rm{E}}}} \right)
 \buildrel \Delta \over = {C_{\rm{B}}} - {C_{\rm{E}}}.
  \label{eq28}
\end{eqnarray}
According to \cite{BIB22}, a lower bound of ${C_{\rm{B}}}$ can be easily obtained as (12) in \cite{BIB20_add3}, and an upper bound of ${C_{\rm{E}}}$ can be easily obtained as (13) in \cite{BIB20_add3}.
Then, submit (12) and (13) in \cite{BIB20_add3} into (\ref{eq28}), eq. (\ref{eq7}) can be derived.

\section{Proof of lower bound (\ref{eq8}) in Theorem \ref{them1}}
\label{appb}
\renewcommand{\theequation}{B.\arabic{equation}}
According to (29) in \cite{BIB20} and the EPI (9.181) in \cite{BIB23}, the objective function in (\ref{eq6}) can be lower-bounded by
\begin{eqnarray}
C_{\rm s}\!\!\!\!\! &\ge&\!\!\!\!\! \mathop {\max }\limits_{{f_X}\!(x)} \!\!\left\{\! {\frac{1}{2}\!\ln \!\left[\! {{e^{2\left[ {{\cal H}(X) \!+\! \ln \left( {{H_{\rm{B}}}} \right)} \right]}} \!+\! 2\pi e\sigma _{\rm{B}}^2} \!\right] \!-\! \frac{1}{2}\!\ln \!\left[ {2\pi e{\mathop{\rm var}} ( {Y_{\rm E}}) } \right]}\! \right\} \nonumber\\
&+&\!\!\!\!\! \ln \left( {\frac{{{\sigma _{\rm{E}}}}}{{{\sigma _{\rm{B}}}}}} \right).
\label{eq37}
\end{eqnarray}
Obviously, a lower bound on the secrecy capacity can be derived by dropping the maximization and choosing an arbitrary ${f_X}(x)$ under the given constraints in (\ref{eq6}). Without loss of generality, we choose an input PDF that maximizes the entropy ${\cal H}(X)$ under the constraints in (\ref{eq6}). Such an input PDF can be found by solving the following functional optimization problem
\begin{eqnarray}
&& \mathop {\min }\limits_{{f_X}(x)} {\cal J}\left[ {{f_X}(x)} \right]\triangleq \int_0^\infty  {{f_X}(x) \ln} \left[ {{f_X}(x)} \right]{\rm{d}}x \nonumber \\
{\rm{s.t.}} && \int_0^\infty  {{f_X}(x){\rm d}x}  = 1 \nonumber \\
&& \int_0^\infty  {x{f_X}(x){\rm d} x}  = \xi P.
\label{eq39}
\end{eqnarray}
Note that problem (\ref{eq39}) can be solved by using the variational method.
Referring to our previous paper \cite{BIB20_add3}, the PDF of $f_X(x)$ is obtained as
\begin{equation}
{f_X}(x) = \frac{1}{{\xi P}}{e^{ - \frac{1}{{\xi P}}x}},\;x \ge 0.
\label{eq48}
\end{equation}
Furthermore, ${\cal H}(X)$ and ${\mathop{\rm var}} ({Y_{\rm{E}}})$ can be written as
\begin{eqnarray}
\left\{ \begin{array}{l}
{\cal H}(X) = \ln \left( {e\xi P} \right) \\
{\mathop{\rm var}} ({Y_{\rm{E}}}) = H_{\rm{E}}^2{\xi ^2}{P^2} + \sigma _{\rm{E}}^2.
\end{array} \right.
\label{eq50}
\end{eqnarray}
Therefore, submitting (\ref{eq50}) into (\ref{eq37}), eq. (\ref{eq8}) can be derived.

\section{Proof of upper bound (\ref{eq17}) in Theorem \ref{them2}}
\label{appc}
\renewcommand{\theequation}{C.\arabic{equation}}
According to (\ref{eq16}), we have (\ref{eq51}) as shown at the top of the next page.
\begin{table*}\normalsize
\begin{eqnarray}
C_{\rm s} &\le& \underbrace {{E_{{X^*}}}\left\{ {\int_{ - \infty }^\infty  {\int_{ - \infty }^\infty  {{f_{{Y_{\rm{B}}}{Y_{\rm E}}|X}}({y_{\rm{B}}},{y_{\rm{E}}}|X)\ln \left[ {{f_{{Y_{\rm{B}}}|X{Y_{\rm E}}}}({y_{\rm{B}}}|X,{y_{\rm{E}}})} \right]{\rm{d}}{y_{\rm{B}}}} {\rm{d}}{y_{\rm{E}}}} } \right\}}_{{I_1}} \nonumber\\
&&\underbrace { - {E_{{X^*}}}\left\{ {\int_{ - \infty }^\infty  {\int_{ - \infty }^\infty  {{f_{{Y_{\rm{B}}}{Y_{\rm E}}|X}}({y_{\rm{B}}},{y_{\rm{E}}}|X)\ln \left[ {{g_{{Y_{\rm{B}}}|{Y_{\rm E}}}}({y_{\rm{B}}}|{y_{\rm{E}}})} \right]{\rm{d}}{y_{\rm{B}}}} {\rm{d}}{y_{\rm{E}}}} } \right\}}_{{I_2}}.
\label{eq51}
\end{eqnarray}
\hrulefill
\end{table*}
Moreover, $I_1$ in (\ref{eq51}) can be written as
\begin{eqnarray}
{I_1} =  - \left[ {{\cal H}({Y_{\rm{B}}}|{X^*}) + {\cal H}({Y_{\rm{E}}}|{X^*},{Y_{\rm{B}}}) - {\cal H}({Y_{\rm{E}}}|{X^*})} \right],
 \label{eq52}
\end{eqnarray}
where ${\cal H}({Y_{\rm{B}}}|X^*)$ is given by
\begin{eqnarray}
{\cal H}\left( {{Y_{\rm{B}}}\left| {{X^*}} \right.} \right) ={\cal H}\left( {{Y_{\rm{B}}}\left| {{X}} \right.} \right)= \frac{1}{2}{\rm{ln}}\left( {2\pi e\sigma _{\rm{B}}^2} \right).
 \label{eq53}
\end{eqnarray}
Similarly, ${\cal H}({Y_{\rm{E}}}|{X^*})$ can be derived as
\begin{equation}
{\cal H}({Y_{\rm{E}}}|{X^*}) = \frac{1}{2}{\rm{ln}}\left( {2\pi e\sigma _{\rm{E}}^2} \right).
\label{eq54}
\end{equation}
Furthermore, ${\cal H}({Y_{\rm{E}}}|{X^*},{Y_{\rm{B}}})$ can be expressed as
\begin{eqnarray}
{\cal H}({Y_{\rm{E}}}|{X^*},{Y_{\rm{B}}})=\frac{1}{2}{\rm{ln}}\left[ {2\pi e\left( {\frac{{H_{\rm{E}}^2}}{{H_{\rm{B}}^2}}\sigma _{\rm{B}}^2 + \sigma _{\rm{E}}^2} \right)} \right].
 \label{eq57}
\end{eqnarray}

Substituting (\ref{eq53}), (\ref{eq54}) and (\ref{eq57}) into (\ref{eq52}), $I_1$ can be finally written as
\begin{eqnarray}
{I_1}  =  - \frac{1}{2}{\rm{ln}}\left[ {2\pi e\sigma _{\rm{B}}^2\left( {1 + \frac{{H_{\rm{E}}^2\sigma _{\rm{B}}^2}}{{H_{\rm{B}}^2\sigma _{\rm{E}}^2}}} \right)} \right].
\label{eq58}
\end{eqnarray}
The main challenge of deriving $I_2$ in (\ref{eq51}) is that the input can be arbitrarily large without a peak optical intensity constraint.
That makes it much harder to find a bound on expression like ${E_{{X^*}}}({X^2})$. To obtain $I_2$, ${g_{{Y_{\rm{B}}}|{Y_{\rm E}}}}({y_{\rm{B}}}|{y_{\rm{E}}})$ is chosen as
\begin{equation}
{g_{{Y_{\rm{B}}}|{Y_{\rm E}}}}({y_{\rm{B}}}|{y_{\rm{E}}}) = \frac{1}{{2{s^2}}}{e^{ - \frac{{\left| {{y_{\rm{B}}} - \mu {y_{\rm{E}}}} \right|}}{{{s^2}}}}},
\label{eq59}
\end{equation}
where $\mu$ and $s$ are two free parameters to be determined.

Moreover, ${f_{{Y_{\rm{B}}}{Y_{\rm E}}|X}}({y_{\rm{B}}},{y_{\rm{E}}}|X)$ can be written as
\begin{eqnarray}
{f_{{Y_{\rm{B}}}{Y_{\rm E}}|X}}(\left. {{y_{\rm{B}}},{y_{\rm{E}}}} \right|X) \!=\! \frac{{{e^{ - \frac{{{{({y_{\rm{B}}} \!-\! {H_{\rm{B}}}X)}^2}}}{{2\sigma _{\rm{B}}^2}}}}}}{{\sqrt {2\pi } {\sigma _{\rm{B}}}}}\frac{{{e^{ - \frac{{{{\left( {{y_{\rm{E}}} \!-\! \frac{{{H_{\rm{E}}}}}{{{H_{\rm{B}}}}}{y_{\rm{B}}}} \right)}^2}}}{{2\left( {\frac{{H_{\rm{E}}^2}}{{H_{\rm{B}}^2}}\sigma _{\rm{B}}^2 \!+\! \sigma _{\rm{E}}^2} \right)}}}}}}{{\sqrt {2\pi \left( {\frac{{H_{\rm{E}}^2}}{{H_{\rm{B}}^2}}\sigma _{\rm{B}}^2 \!+\! \sigma _{\rm{E}}^2} \right)} }}.
 \label{eq60}
\end{eqnarray}
Therefore, $I_2$ can be written as (\ref{eq61}) as shown at the top of the next page.
\begin{table*}\normalsize
\begin{eqnarray}
{I_2} = \ln (2{s^2}) \!+\! \frac{1}{{{s^2}}}{E_{{X^*}}}\!\!\left[\! {\int_{ - \infty }^\infty \! {\frac{{{e^{ - \frac{{{{({y_{\rm{B}}} \!-\! {H_{\rm{B}}}X)}^2}}}{{2\sigma _{\rm{B}}^2}}}}}}{{\sqrt {2\pi } {\sigma _{\rm{B}}}}}\!\!\! \int_{ - \infty }^\infty \!\!\! {\frac{{{e^{ - \frac{{{t^2}}}{{2\left( {\frac{{H_{\rm{E}}^2}}{{H_{\rm{B}}^2}}\sigma _{\rm{B}}^2 + \sigma _{\rm{E}}^2} \right)}}}}}}{{\sqrt {2\pi \!\!\left(\! {\frac{{H_{\rm{E}}^2}}{{H_{\rm{B}}^2}}\sigma _{\rm{B}}^2 \!+\! \sigma _{\rm{E}}^2} \!\right)} }}\left|\! {\left(\! {1 \!\!-\!\! \mu \frac{{{H_{\rm{E}}}}}{{{H_{\rm{B}}}}}}\! \right){y_{\rm{B}}} \!\!-\!\! \mu t} \right|\!{\rm{d}}t{\rm{d}}{y_{\rm{B}}}} } }\!\! \right].
 \label{eq61}
\end{eqnarray}
\hrulefill
\end{table*}
Because $\left| {a - b} \right| \le \left| a \right| + \left| b \right|$ always holds, eq. (\ref{eq61}) can be upper-bounded by
\begin{eqnarray}
{I_2} &\le& \ln (2{s^2}) + \frac{{2\left| \mu  \right|}}{{{s^2}}}\sqrt {\frac{{\frac{{H_{\rm{E}}^2}}{{H_{\rm{B}}^2}}\sigma _{\rm{B}}^2 + \sigma _{\rm{E}}^2}}{{2\pi }}}  + \frac{{\left| {1 - \mu \frac{{{H_{\rm{E}}}}}{{{H_{\rm{B}}}}}} \right|}}{{{s^2}}}\nonumber\\
&\times&{E_{{X^*}}}\left[ {\int_{ - \infty }^\infty  {\frac{{{e^{ - \frac{{{n^2}}}{{2\sigma _{\rm{B}}^2}}}}}}{{\sqrt {2\pi } {\sigma _{\rm{B}}}}}\left| {n + {H_{\rm{B}}}X} \right|{\rm{d}}n} } \right].
 \label{eq62}
\end{eqnarray}
Because $\left| {a + b} \right| \le \left| a \right| + \left| b \right|$ always holds,
eq. (\ref{eq62}) can be further upper-bounded by
\begin{eqnarray}
\!\!\!\!\!\!\!\!\!\!\!\!{I_2} \!\!\!\!\!&\le&\!\!\!\!\! \ln (2{s^2}) + \frac{2}{{{s^2}}}\nonumber\\
\!\!\!\!\!\!\!\!\!\!\!\!&\times&\!\!\!\!\!\!\underbrace {\left[ {\left| \mu  \right|\!\sqrt {\!\frac{{\frac{{H_{\rm{E}}^2}}{{H_{\rm{B}}^2}}\sigma _{\rm{B}}^2 \!+\! \sigma _{\rm{E}}^2}}{{2\pi }}}  \!+\! \left| {1 \!-\! \mu \frac{{{H_{\rm{E}}}}}{{{H_{\rm{B}}}}}} \right|\!\left(\! {\frac{{{\sigma _{\rm{B}}}}}{{\sqrt {2\pi } }} \!+\! \frac{{{H_{\rm{B}}}\xi P}}{2}} \!\right)}\!\! \right]}_{{I_3}}\!\!.
 \label{eq63}
\end{eqnarray}
To obtain a tight upper bound of $I_2$, the minimum value of $I_3$ in (\ref{eq63}) should be determined.
In the following, three cases are considered:

Case 1: when $\mu  < 0$, $I_3$ is given by
\begin{eqnarray}
{I_3} \!\!\!\!\!&=&\!\!\!\!\! -\! \left[\!\! {\sqrt {\!\!\frac{{\frac{{H_{\rm{E}}^2}}{{H_{\rm{B}}^2}}\sigma _{\rm{B}}^2 \!\!+\!\! \sigma _{\rm{E}}^2}}{{2\pi }}} \!+\!\! \frac{{{H_{\rm{E}}}}}{{{H_{\rm{B}}}}}\!\!\left(\! {\frac{{{\sigma _{\rm{B}}}}}{{\sqrt {2\pi } }} \!\!+\!\! \frac{{{H_{\rm{B}}}\xi P}}{2}}\! \right)}\!\! \right]\!\mu \! +\! \frac{{{\sigma _{\rm{B}}}}}{{\sqrt {2\pi } }} \!+\! \frac{{{H_{\rm{B}}}\xi P}}{2} \nonumber \\
 &\ge&\!\!\!\!\! \frac{{{\sigma _{\rm{B}}}}}{{\sqrt {2\pi } }} + \frac{{{H_{\rm{B}}}\xi P}}{2};
 \label{eq64}
\end{eqnarray}

Case 2: when $0 \le \mu  \le {{{H_{\rm{B}}}} \mathord{\left/
 {\vphantom {{{H_{\rm{B}}}} {{H_{\rm{E}}}}}} \right.
 \kern-\nulldelimiterspace} {{H_{\rm{E}}}}}$, $I_3$ is given by
\begin{eqnarray}
{I_3} &=& \left[ {\sqrt {\frac{{\frac{{H_{\rm{E}}^2}}{{H_{\rm{B}}^2}}\sigma _{\rm{B}}^2 + \sigma _{\rm{E}}^2}}{{2\pi }}}  - \frac{{{H_{\rm{E}}}}}{{{H_{\rm{B}}}}}\left( {\frac{{{\sigma _{\rm{B}}}}}{{\sqrt {2\pi } }} + \frac{{{H_{\rm{B}}}\xi P}}{2}} \right)} \right]\mu \nonumber\\
 &+& \frac{{{\sigma _{\rm{B}}}}}{{\sqrt {2\pi } }} + \frac{{{H_{\rm{B}}}\xi P}}{2}.
\label{eq65}
\end{eqnarray}
If $\sqrt {( H_{\rm{E}}^2 \sigma _{\rm{B}}^2 / H_{\rm{B}}^2 + \sigma _{\rm{E}}^2 ) / (2\pi ) } \ge
H_{\rm{E}} ( \sigma _{\rm{B}} / {\sqrt{2\pi}} + H_{\rm{B}}\xi P/2 )/H_{\rm{B}}$,
we have ${I_3} \ge \sigma _{\rm{B}} / {\sqrt{2\pi}} + {H_{\rm{B}}}\xi P /2$.
Otherwise, we have ${I_3} \ge \sqrt {( {\sigma _{\rm{B}}^2 + H_{\rm{B}}^2\sigma _{\rm{E}}^2/H_{\rm{E}}^2} )/(2\pi )}$;

Case 3: when $\mu  > {{{H_{\rm{B}}}} \mathord{\left/
 {\vphantom {{{H_{\rm{B}}}} {{H_{\rm{E}}}}}} \right.
 \kern-\nulldelimiterspace} {{H_{\rm{E}}}}}$, $I_3$ is given by
\begin{eqnarray}
{I_3}\!\!\!\!\!\!& =&\!\!\!\!\!\!\!\left[\!\! {\sqrt {\!\!\frac{{\frac{{H_{\rm{E}}^2}}{{H_{\rm{B}}^2}}\sigma _{\rm{B}}^2 \!+\! \sigma _{\rm{E}}^2}}{{2\pi }}}  \!+\! \frac{{{H_{\rm{E}}}}}{{{H_{\rm{B}}}}}\!\!\left(\!\! {\frac{{{\sigma _{\rm{B}}}}}{{\sqrt {2\pi } }} \!+\! \frac{{{H_{\rm{B}}}\xi P}}{2}}\!\! \right)}\!\! \right]\!\mu  \!-\! \frac{{{\sigma _{\rm{B}}}}}{{\sqrt {2\pi } }} \!-\! \frac{{{H_{\rm{B}}}\xi P}}{2} \nonumber \\
 &>&\!\!\!\!\!\!\! \sqrt {\frac{{\sigma _{\rm{B}}^2 + \frac{{H_{\rm{B}}^2}}{{H_{\rm{E}}^2}}\sigma _{\rm{E}}^2}}{{2\pi }}}.
\label{eq66}
\end{eqnarray}

According to the above three cases, we have
\begin{equation}
{I_3} \!\ge\!\!\! \left\{ \begin{array}{l}\!\!\!\!
\frac{{{\sigma _{\rm{B}}}}}{{\sqrt {2\pi } }} \!\!+\!\! \frac{{{H_{\rm{B}}}\xi P}}{2},\;{\rm{if}}\;\sqrt {\!\!\frac{{\frac{{H_{\rm{E}}^2}}{{H_{\rm{B}}^2}}\sigma _{\rm{B}}^2 \!+\! \sigma _{\rm{E}}^2}}{{2\pi }}} \! \ge\! \frac{{{H_{\rm{E}}}}}{{{H_{\rm{B}}}}}\!\!\left(\!\! {\frac{{{\sigma _{\rm{B}}}}}{{\sqrt {2\pi } }} \!\!+\!\! \frac{{{H_{\rm{B}}}\xi P}}{2}}\!\! \right)\\
\!\!\!\!\sqrt {\!\!\frac{{\sigma _{\rm{B}}^2 \!+\! \frac{{H_{\rm{B}}^2}}{{H_{\rm{E}}^2}}\sigma _{\rm{E}}^2}}{{2\pi }}} ,\;{\rm{if}}\;\sqrt {\!\!\frac{{\frac{{H_{\rm{E}}^2}}{{H_{\rm{B}}^2}}\sigma _{\rm{B}}^2 \!+\! \sigma _{\rm{E}}^2}}{{2\pi }}}  \!\le\! \frac{{{H_{\rm{E}}}}}{{{H_{\rm{B}}}}}\!\!\left(\!\! {\frac{{{\sigma _{\rm{B}}}}}{{\sqrt {2\pi } }} \!\!+\!\! \frac{{{H_{\rm{B}}}\xi P}}{2}} \!\!\right)\!.
\end{array} \right.
\label{eq67}
\end{equation}

Substituting (\ref{eq67}) into (\ref{eq63}), $I_2$ is further upper-bounded by
\begin{equation}
{I_2} \!\!\le\!\!\! \left\{ \begin{array}{l}\!\!\!\!
\ln\! \left[ {4e\!\left(\! {\frac{{{\sigma _{\rm{B}}}}}{{\sqrt {2\pi } }} \!+\! \frac{{{H_{\rm{B}}}\xi P}}{2}} \!\right)}\! \right],{\kern 1pt} {\rm{if}}\;\sqrt {\!\!\frac{{\frac{{H_{\rm{E}}^2}}{{H_{\rm{B}}^2}}\sigma _{\rm{B}}^2 \!\!+\!\! \sigma _{\rm{E}}^2}}{{2\pi }}} \! \ge\! \frac{{{H_{\rm{E}}}}}{{{H_{\rm{B}}}}}\!\!\left(\! {\frac{{{\sigma _{\rm{B}}}}}{{\sqrt {2\pi } }} \!+\! \frac{{{H_{\rm{B}}}\xi P}}{2}}\! \right)\\
\!\!\!\!\ln\! \left(\!\! {4e\sqrt {\frac{{\sigma _{\rm{B}}^2 + \frac{{H_{\rm{B}}^2}}{{H_{\rm{E}}^2}}\sigma _{\rm{E}}^2}}{{2\pi }}} } \!\!\right),\;{\rm{if}}\;\sqrt {\!\!\frac{{\frac{{H_{\rm{E}}^2}}{{H_{\rm{B}}^2}}\sigma _{\rm{B}}^2 \!\!+\!\! \sigma _{\rm{E}}^2}}{{2\pi }}} \! \le\! \frac{{{H_{\rm{E}}}}}{{{H_{\rm{B}}}}}\!\!\left(\! {\frac{{{\sigma _{\rm{B}}}}}{{\sqrt {2\pi } }} \!+\! \frac{{{H_{\rm{B}}}\xi P}}{2}}\! \right).
\end{array} \right.
\label{eq68}
\end{equation}
Finally, submitting (\ref{eq58}) and (\ref{eq68}) into (\ref{eq51}), eq. (\ref{eq17}) can be derived.

\section{Proof of Asymptotic Behavior (\ref{eq17_1}) in Corollary \ref{cor1}}
\label{appc1}
\renewcommand{\theequation}{D.\arabic{equation}}
Here, the asymptotic expression of the lower bound on secrecy capacity is based on (\ref{eq7}) in \emph{Theorem \ref{them1}}.
To facilitate the analysis, eq. (\ref{eq7}) can be further written as
\begin{eqnarray}
C_{\rm s} \!\!\!\!\! &\ge&\!\!\!\!\! \underbrace{\frac{1}{2}\!\ln \!\left(\! {1 \!+\! \frac{{H_{\rm{B}}^2{\xi ^2}{P^2}e}}{{2\pi \sigma _{\rm{B}}^2}}}\! \right)}_{I_4} \!-\! \left\{\! {\ln \!\left[ {\beta {e^{ - \frac{{{\delta ^2}}}{{2\sigma _{\rm{E}}^2}}}} \!+\! \sqrt {2\pi } {\sigma _{\rm{E}}}{\cal Q}\left(\! {\frac{\delta }{{{\sigma _{\rm{E}}}}}} \!\right)}\! \right] } \right.\nonumber\\
&+&\!\!\!\!\! \frac{1}{2}{\cal Q}\left(\! {\frac{\delta }{{{\sigma _{\rm{E}}}}}} \!\right)\!+\! \frac{\delta }{{2\sqrt {2\pi } {\sigma _{\rm{E}}}}}{e^{ - \frac{{{\delta ^2}}}{{2\sigma _{\rm{E}}^2}}}} \!+\! \frac{{{\delta ^2}}}{{2\sigma _{\rm{E}}^2}}{\cal Q}\left(\! { - \frac{{\delta  \!+\! {H_{\rm{E}}}\xi P}}{{{\sigma _{\rm{E}}}}}}\! \right)\nonumber\\
&+&\!\!\!\!\!\left. {    \frac{{\delta  + {H_{\rm{E}}}\xi P}}{\beta } + \frac{{{\sigma _{\rm{E}}}}}{{\sqrt {2\pi } \beta }}{e^{ - \frac{{{\delta ^2}}}{{2\sigma _{\rm{E}}^2}}}} - \frac{1}{2}\ln \left( {2\pi e\sigma _{\rm{E}}^2} \right)} \right\}\nonumber\\
&\triangleq&\!\!\!\! I_4 -I_5.
\label{eqd1}
\end{eqnarray}
For $I_4$, we have
\begin{eqnarray}
\mathop {\lim }\limits_{P \to \infty } I_4   -\mathop {\lim }\limits_{P \to \infty }  \ln (H_{\rm E}\xi P) = \frac{1}{2}\ln \left( {\frac{{eH_{\rm{B}}^2}}{{2\pi H_{\rm E}^2\sigma _{\rm{B}}^2}}} \right).
\label{eqd2}
\end{eqnarray}
Note that selecting any $\delta  \ge 0$ and $\beta $ can result in a lower bound of (\ref{eqd1}).
Without loss of generality, we choose $\delta  = {\sigma _{\rm{E}}}\sqrt {\ln \left( {\frac{{{H_{\rm{E}}}\xi P}}{{{\sigma _{\rm{E}}}}}} \right)}$ and $\beta  = {H_{\rm{E}}}\xi P$. Therefore, we have
\begin{eqnarray}
\mathop {\lim }\limits_{P \to \infty }\!\! \left\{ {I_5 \!-\! \ln \left( {{H_{\rm{E}}}\xi P} \right)} \right\}\!= \frac{1}{2}\ln \left( {\frac{e}{{2\pi \sigma _{\rm{E}}^2}}} \right).
\label{eqd4}
\end{eqnarray}
By using (\ref{eqd2}) and (\ref{eqd4}), we have
\begin{eqnarray}
\mathop {\lim }\limits_{P \to \infty } C_{\rm s} \ge  \ln \left( {\frac{{{H_{\rm{B}}}{\sigma _{\rm{E}}}}}{{{H_{\rm{E}}}{\sigma _{\rm{B}}}}}} \right).
 \label{eqd7}
\end{eqnarray}

The asymptotic expression of the upper bound on secrecy capacity is based on (\ref{eq17}) in \emph{Theorem \ref{them2}}.
According to (\ref{eq17}), when $P \to \infty $, we have
\begin{eqnarray}
\mathop {\lim }\limits_{P \to \infty } C_{\rm s} \le \ln \left( {\frac{{2\sqrt e {H_{\rm{B}}}{\sigma _{\rm{E}}}}}{{\pi {H_{\rm{E}}}{\sigma _{\rm{B}}}}}} \right).
 \label{eqd8}
\end{eqnarray}
According to (\ref{eqd7}) and (\ref{eqd8}), eq. (\ref{eq17_1}) can be derived.

\section{Proof of lower bound (\ref{eq19}) in Theorem \ref{them3}}
\label{appd}
\renewcommand{\theequation}{E.\arabic{equation}}
According to (\ref{eq28}), the secrecy capacity in this case can also be written as
\begin{equation}
C_{\rm s} \ge {C_{\rm{B}}} - {C_{\rm{E}}}.
\label{eq69}
\end{equation}

When $\alpha  \in (0,0.5)$, ${C_{\rm{B}}}$ is lower-bounded by \cite{BIB22}
\begin{equation}
{C_{\rm{B}}} \ge \frac{1}{2}\ln \left[ {1 + H_{\rm{B}}^2{A^2}\frac{{{e^{2\alpha \tilde \mu }}}}{{2\pi e\sigma _{\rm{B}}^2}}{{\left( {\frac{{1 - {e^{ - \tilde \mu }}}}{{\tilde \mu }}} \right)}^2}} \right],
\label{eq70}
\end{equation}
where $\tilde \mu $ is the solution to (\ref{eq23}).
Moreover, ${C_{\rm{E}}}$ is upper-bounded by \cite{BIB22}
\begin{eqnarray}
{C_{\rm{E}}}\!\!\!\!\!&\le& \!\!\!\!\! {\cal Q}\left(\! {\frac{\delta }{{{\sigma _{\rm{E}}}}}}\! \right) \!+\! \frac{\delta }{{\sqrt {2\pi } {\sigma _{\rm{E}}}}}{e^{ - \frac{{{\delta ^2}}}{{2\sigma _{\rm{E}}^2}}}} \!-\! \frac{1}{2} \!+\!\mu \alpha \!\!\left[ {1 \!-\! 2{\cal Q}\!\left(\! {\frac{{\delta  \!+\! \frac{{{H_{\rm{E}}}A}}{2}}}{{{\sigma _{\rm{E}}}}}} \!\right)} \right]\nonumber \\
 &+& \!\!\!\! \left[ {{\cal Q}\left( { - \frac{{\delta  + {H_{\rm{E}}}\alpha A}}{{{\sigma _{\rm{E}}}}}} \right) - {\cal Q}\left( {\frac{{\delta  + (1 - \alpha ){H_{\rm{E}}}A}}{{{\sigma _{\rm{E}}}}}} \right)} \right]\nonumber\\
 &\times&\!\!\!\! \ln \left[ {\frac{{{H_{\rm{E}}}A}}{{\sqrt {2\pi } {\sigma _{\rm{E}}}\mu }}.\frac{{{e^{\frac{{\mu \delta }}{{{H_{\rm{E}}}A}}}} - {e^{ - \mu \left( {1 + \frac{\delta }{{{H_{\rm{E}}}A}}} \right)}}}}{{\left( {1 - 2{\cal Q}\left( {\frac{\delta }{{{\sigma _{\rm{E}}}}}} \right)} \right)}}} \right] \nonumber \\
 &+& \!\!\!\! \frac{{\mu {\sigma _{\rm{E}}}}}{{{H_{\rm{E}}}A\sqrt {2\pi } }}\!\!\left( {{e^{ - \frac{{{\delta ^2}}}{{2\sigma _{\rm{E}}^2}}}} \!-\! {e^{ - \frac{{{{({H_{\rm{E}}}A + \delta )}^2}}}{{2\sigma _{\rm{E}}^2}}}}} \right),
 \label{eq71}
\end{eqnarray}
where $\mu $ and $\delta $ is given by (\ref{eq24}).
Substituting (\ref{eq70}) and (\ref{eq71}) into (\ref{eq69}), ${C_{{\rm s},1}}$ is derived.

When $\alpha  \in [0.5,1]$, ${C_{\rm{B}}}$ is lower-bounded by \cite{BIB22}
\begin{equation}
{C_{\rm{B}}} \ge \frac{1}{2}\ln \left( {1 + \frac{{H_{\rm{B}}^2{A^2}}}{{2\pi e\sigma _{\rm{B}}^2}}} \right),
\label{eq72}
\end{equation}
and ${C_{\rm{E}}}$ is upper-bounded by \cite{BIB22}
\begin{eqnarray}
{C_{\rm{E}}} \!\!\!\!\!&\le&\!\!\!\!\! \left[ {1 \!-\! 2{\cal Q}\!\left(\! {\frac{{\delta  \!+\! \frac{{{H_{\rm{E}}}A}}{2}}}{{{\sigma _{\rm{E}}}}}} \!\right)} \right]\!\ln\! \left[ {\frac{{{H_{\rm{E}}}A + 2\delta }}{{\sqrt {2\pi } {\sigma _{\rm{E}}}\left( {1 \!-\! 2{\cal Q}\left(\! {\frac{\delta }{{{\sigma _{\rm{E}}}}}} \!\right)} \right)}}} \right]\nonumber\\
 &+&\!\!\!\!\! {\cal Q}\left(\! {\frac{\delta }{{{\sigma _{\rm{E}}}}}} \!\right) \!+\! \frac{\delta }{{\sqrt {2\pi } {\sigma _{\rm{E}}}}}{e^{ - \frac{{{\delta ^2}}}{{2\sigma _{\rm{E}}^2}}}} \!-\! \frac{1}{2}.
\label{eq73}
\end{eqnarray}
Substituting (\ref{eq72}) and (\ref{eq73}) into (\ref{eq69}), ${C_{{\rm s},2}}$ is derived.

\section{Proof of lower bound (\ref{eq20}) in Theorem \ref{them3}}
\label{appe}
\renewcommand{\theequation}{F.\arabic{equation}}
In this case, eq. (\ref{eq37}) can also be derived.
Similar to (\ref{eq39}), a lower bound on the secrecy capacity can be derived by solving the following problem
\begin{eqnarray}
&&\mathop {\min }\limits_{{f_X}(x)} {\cal J}\left[ {{f_X}(x)} \right]\triangleq \int_0^A {{f_X}(x)\ln} \left[ {{f_X}(x)} \right]{\rm{d}}x \nonumber \\
{\rm{s.t.}}&& \int_0^A {{f_X}(x){\rm d}x}  = 1\nonumber \\
&&\int_0^A {x{f_X}(x){\rm d}x}  = \xi P.
\label{eq74}
\end{eqnarray}
By employing the variational method\cite{BIB20_add3}, the input PDF is derived as
\begin{equation}
{f_X}(x) = {e^{cx + b - 1}},
\label{eq76}
\end{equation}
where $b$ and $c$ are two free parameters.

When $c=0$, submitting (\ref{eq76}) into the two constraints in (\ref{eq74}), we have
\begin{equation}
{f_X}(x) = \left\{ \begin{array}{l}
\frac{1}{A},\;x \in [0,A]\\
0,\;\;{\rm{otherwise}}
\end{array} \right.,
\label{eq77}
\end{equation}
and $A = 2\xi P$, i.e., $\alpha  = 0.5$.
Therefore, ${\cal H}(X)$ and ${\mathop{\rm var}} ({Y_{\rm{E}}})$ can be written, respectively, as
\begin{eqnarray}
\left\{ \begin{array}{l}
{\cal H}(X) = \ln (A) \\
{\mathop{\rm var}} ({Y_{\rm{E}}}) = H_{\rm{E}}^2\frac{{{\xi ^2}{P^2}}}{3} + \sigma _{\rm{E}}^2.
\end{array} \right.
\label{eq78}
\end{eqnarray}
Submitting (\ref{eq78}) into (\ref{eq37}), the lower bound on secrecy capacity for $\alpha  = 0.5$ is derived.

When $c \ne 0$, $\alpha  \ne 0.5$ and $\alpha  \in (0,1]$, submit (\ref{eq76}) into the two constraints in (\ref{eq74}), we have
\begin{equation}
{f_X}(x) = \left\{ \begin{array}{l}
\frac{{c{e^{cx}}}}{{{e^{cA}} - 1}},x \in [0,A]\\
0,\;{\rm{otherwise}}
\end{array} \right.,
\label{eq80}
\end{equation}
where $c$ is the solution to (\ref{eq25}).

Furthermore, ${\cal H}(X)$ and ${\mathop{\rm var}} ({Y_{\rm{E}}})$ can be written, respectively, as
\begin{eqnarray}
\left\{ \begin{array}{l}
{\cal H}(X) = \ln \left[ {{e^{ - c\xi P}}\left( {\frac{{{e^{cA}} - 1}}{c}} \right)} \right] \\
{\mathop{\rm var}} ({Y_{\rm{E}}}) = H_{\rm{E}}^2\left[ {\frac{{A(cA - 2)}}{{c(1 - {e^{ - cA}})}} + \frac{2}{{{c^2}}} - {\xi ^2}{P^2}} \right] + \sigma _{\rm{E}}^2.
\end{array} \right.
\label{eq81}
\end{eqnarray}
Submit (\ref{eq81}) into (\ref{eq37}),
the lower bound on secrecy capacity for $\alpha \! \ne\! 0.5$ and $\alpha  \!\in\! (0,1]$ is derived.

\section{Proof of upper bound (\ref{eq26}) in Theorem \ref{them3}}
\label{appf}
\renewcommand{\theequation}{G.\arabic{equation}}
From (\ref{eq51}), we have
\begin{equation}
C_{\rm s} \le {I_1} + {I_{\rm{2}}}.
\label{eq83}
\end{equation}

According to (\ref{eq52})-(\ref{eq58}), ${I_1}$ in this case can also be expressed as
\begin{equation}
{I_1} =  - \frac{1}{2}{\rm{ln}}\left[ {2\pi e\sigma _{\rm{B}}^2\left( {1 + \frac{{H_{\rm{E}}^2\sigma _{\rm{B}}^2}}{{H_{\rm{B}}^2\sigma _{\rm{E}}^2}}} \right)} \right].
\label{eq84}
\end{equation}

To obtain ${I_2}$, ${g_{{Y_{\rm{B}}}|{Y_E}}}({y_{\rm{B}}}|{y_{\rm{E}}})$ is chosen as
\begin{equation}
{g_{{Y_{\rm{B}}}|{Y_{\rm E}}}}({y_{\rm{B}}}|{y_{\rm{E}}}) = \frac{1}{{\sqrt {2\pi } s}}{e^{ - \frac{{{{({y_{\rm{B}}} - \mu {y_{\rm{E}}})}^2}}}{{2{s^2}}}}},
\label{eq85}
\end{equation}
where $\mu $ and $s$ are free parameters to be determined.

According to (\ref{eq85}) and (\ref{eq60}), ${I_2}$ can be derived as
\begin{eqnarray}
\!\!\!\!\!\!\!\!\!{I_2} \!\!\!\!\!&=&\!\!\!\!\!  \frac{1}{2}\ln (2\pi {s^2}) \nonumber\\
\!\!\!\!\!\!\!\!\!&+&\!\!\!\!\!\!{E_{{X^*}}}\!\!\!\left\{\!\! {\frac{{{{\left(\! {1 \!-\! \mu \frac{{{H_{\rm{E}}}}}{{{H_{\rm{B}}}}}} \!\right)\!}^2}(H_{\rm{B}}^2{X^2} \!+\! \sigma _{\rm{B}}^2) \!+\! {\mu ^2}\!\!\left(\! {\frac{{H_{\rm{E}}^2}}{{H_{\rm{B}}^2}}\sigma _{\rm{B}}^2 \!+\! \sigma _{\rm{E}}^2}\! \right)}}{{2{s^2}}}}\!\! \right\}\!\!.
 \label{eq86}
\end{eqnarray}
Owing to $0 \le X \le A$ and ${E_{{X^*}}}(X) = \xi P$, we have
\begin{eqnarray}
{E_{{X^*}}}({X^2})  \le \int_0^A {Ax{f_{{X^*}}}(x){\rm{d}}x} = A\xi P.
  \label{eq87}
\end{eqnarray}
Therefore, using (\ref{eq87}), eq. (\ref{eq86}) can be further written as
\begin{eqnarray}
{I_2} \!\!\!\!\!&\leq&\!\!\!\!\!  \frac{1}{2}\ln (2\pi {s^2}) \nonumber\\
&+&\!\!\!\!\! \frac{{{{\left(\! {1 \!-\! \mu \frac{{{H_{\rm{E}}}}}{{{H_{\rm{B}}}}}} \!\right)\!}^2}(H_{\rm{B}}^2A\xi P \!+\! \sigma _{\rm{B}}^2) \!+\! {\mu ^2}\!\!\left( {\frac{{H_{\rm{E}}^2}}{{H_{\rm{B}}^2}}\sigma _{\rm{B}}^2 \!+\! \sigma _{\rm{E}}^2} \right)}}{{2{s^2}}}.
  \label{eq88}
\end{eqnarray}
To maximize the term on the right hand side of (\ref{eq88}),
taking the first partial derivative with respect to $\mu $ and $s$, and letting them to be zero, we have
\begin{equation}
\left\{ \begin{array}{l}
\mu {\rm{ = }}\frac{{\frac{{{H_{\rm{E}}}}}{{{H_{\rm{B}}}}}\left( {H_{\rm{B}}^2A\xi P + \sigma _{\rm{B}}^{\rm{2}}} \right)}}{{H_{\rm{E}}^{\rm{2}}A\xi P + 2\frac{{H_{\rm{E}}^{\rm{2}}}}{{H_{\rm{B}}^{\rm{2}}}}\sigma _{\rm{B}}^{\rm{2}} + \sigma _{\rm{E}}^{\rm{2}}}}\\
{s^2} = \frac{{\left( {\frac{{H_{\rm{E}}^{\rm{2}}}}{{H_{\rm{B}}^{\rm{2}}}}\sigma _{\rm{B}}^{\rm{2}} + \sigma _{\rm{E}}^{\rm{2}}} \right)\left( {H_{\rm{B}}^2A\xi P + \sigma _{\rm{B}}^{\rm{2}}} \right)}}{{H_{\rm{E}}^2A\xi P + 2\frac{{H_{\rm{E}}^{\rm{2}}}}{{H_{\rm{B}}^{\rm{2}}}}\sigma _{\rm{B}}^{\rm{2}} + \sigma _{\rm{E}}^{\rm{2}}}}.
\end{array} \right.
\label{eq90}
\end{equation}
In this case, eq. (\ref{eq88}) can be written as
\begin{equation}
{I_2} \leq  \frac{1}{2}\ln \left[ {2\pi e\frac{{\left( {\frac{{H_{\rm{E}}^{\rm{2}}}}{{H_{\rm{B}}^{\rm{2}}}}\sigma _{\rm{B}}^{\rm{2}} + \sigma _{\rm{E}}^{\rm{2}}} \right)\left( {H_{\rm{B}}^2A\xi P + \sigma _{\rm{B}}^{\rm{2}}} \right)}}{{H_{\rm{E}}^2A\xi P + 2\frac{{H_{\rm{E}}^{\rm{2}}}}{{H_{\rm{B}}^{\rm{2}}}}\sigma _{\rm{B}}^{\rm{2}} + \sigma _{\rm{E}}^{\rm{2}}}}} \right].
\label{eq91}
\end{equation}
Substituting (\ref{eq84}) and (\ref{eq91}) into (\ref{eq83}), eq. (\ref{eq26}) can be derived.

\section{Proof of Asymptotic Behavior (\ref{eq26_1}) in Corollary \ref{cor2}}
\label{appg}
\renewcommand{\theequation}{H.\arabic{equation}}
In this section, the asymptotic expression of the lower bound on secrecy capacity is based on (\ref{eq19}) in \emph{Theorem \ref{them3}}.

When $0 < \alpha  < 0.5$ in (\ref{eq19}), we have
\begin{eqnarray}
\mathop {\lim }\limits_{A \to \infty } {C_{{\rm s},1}} \!\!\!\!&=&\!\!\!\! \underbrace {\mathop {\lim }\limits_{A \to \infty } \frac{1}{2}\ln \left[ {1 + H_{\rm{B}}^2{A^2}\frac{{{e^{2\alpha \tilde \mu }}{{(1 - {e^{ - \tilde \mu }})}^2}}}{{2\pi e\sigma _{\rm{B}}^2{{\tilde \mu }^2}}}} \right]}_{I_6} \nonumber\\
&-& \mathop {\lim }\limits_{A \to \infty } \left\{ {{\cal Q}\left( {\frac{\delta }{{{\sigma _{\rm{E}}}}}} \right) + \frac{\delta }{{\sqrt {2\pi } {\sigma _{\rm{E}}}}}{e^{ - \frac{{{\delta ^2}}}{{2\sigma _{\rm{E}}^2}}}} - \frac{1}{2}} \right.\nonumber \\
 &+&\!\!\!\! \left[\! {{\cal Q}\left( \!{ - \frac{{\delta  \!+\! {H_{\rm{E}}}\alpha A}}{{{\sigma _{\rm{E}}}}}} \!\right) \!-\! {\cal Q}\left(\! {\frac{{\delta \! +\! (1 \!-\! \alpha ){H_{\rm{E}}}A}}{{{\sigma _{\rm{E}}}}}}\! \right)} \!\right]\nonumber\\
 &\times&\ln \left[ {\frac{{{H_{\rm{E}}}A}}{{\sqrt {2\pi } {\sigma _{\rm{E}}}\mu }}.\frac{{{e^{\frac{{\mu \delta }}{{{H_{\rm{E}}}A}}}} - {e^{ - \mu \left( {1 + \frac{\delta }{{{H_{\rm{E}}}A}}} \right)}}}}{{\left( {1 - 2{\cal Q}\left( {\frac{\delta }{{{\sigma _{\rm{E}}}}}} \right)} \right)}}} \right]\nonumber\\
 &+& \mu \alpha \left( {1 - 2{\cal Q}\left( {\frac{{\delta  + \frac{{{H_{\rm{E}}}A}}{2}}}{{{\sigma _{\rm{E}}}}}} \right)} \right) \nonumber\\
& +& \left. { \frac{{\mu {\sigma _{\rm{E}}}}}{{{H_{\rm{E}}}A\sqrt {2\pi } }}\left( {{e^{ - \frac{{{\delta ^2}}}{{2\sigma _{\rm{E}}^2}}}} - {e^{ - \frac{{{{({H_{\rm{E}}}A + \delta )}^2}}}{{2\sigma _{\rm{E}}^2}}}}} \right)} \right\}\nonumber\\
&\triangleq& I_6-I_7,
\label{eqh1}
\end{eqnarray}
where $\tilde \mu $ is the solution of (\ref{eq23}), $\mu $ and $\delta $ are given by (\ref{eq24}). Moreover, we have
\begin{eqnarray}
\mathop {\lim }\limits_{A \to \infty } I_6 - \mathop {\lim }\limits_{A \to \infty } \ln A \!\!\!&=&\!\!\! \frac{1}{2}\ln \left( {\frac{{H_{\rm{B}}^2}}{{2\pi e\sigma _{\rm{B}}^2}}} \right) \nonumber\\
&-&\!\!\! (1 - \alpha )\tilde \mu  - \ln ({\rm{1}} - \alpha \tilde \mu ).
\label{eqh4}
\end{eqnarray}
Furthermore, we have
\begin{eqnarray}
\mathop {\lim }\limits_{A \to \infty } I_7 \!-\! \mathop {\lim }\limits_{A \to \infty } \ln A \!\!\!\!&=&\!\!\!\!\ln \left( {\frac{{{H_{\rm{E}}}}}{{{\sigma _{\rm{E}}}}}} \right)-\frac{1}{2}\ln \left( {2\pi e} \right) \nonumber\\
&-&\!\!\!\!  (1 \!-\! \alpha )\tilde \mu  \!-\! \ln (1 \!-\! \alpha \tilde \mu ).
\label{eqh5}
\end{eqnarray}
From (\ref{eqh1}) and (\ref{eqh5}), we can get
\begin{eqnarray}
\mathop {\lim }\limits_{A \to \infty } C_{\rm s} &\ge& \ln \left( {\frac{{{H_{\rm{B}}}{\sigma _{\rm{E}}}}}{{{H_{\rm{E}}}{\sigma _{\rm{B}}}}}} \right).
 \label{eqh6}
\end{eqnarray}

When $0.5 \le \alpha  < 1$ in (\ref{eq19}), we have
\begin{eqnarray}
&&\!\!\!\!\!\!\!\!\!\mathop {\lim }\limits_{A \to \infty } {C_{{\rm s},2}}= \underbrace{\mathop {\lim }\limits_{A \to \infty } \frac{1}{2}\ln \!\!\left( {1 \!+\! \frac{{H_{\rm{B}}^2{A^2}}}{{2\pi e\sigma _{\rm{B}}^2}}} \right)}_{I_8} \nonumber\\
&&- \mathop {\lim }\limits_{A \to \infty }\!\! \left\{\!\! {\left[\! {1 \!-\! 2{\cal Q}\!\left(\!\! {\frac{{\delta \! +\! \frac{{{H_{\rm{E}}}A}}{2}}}{{{\sigma _{\rm{E}}}}}}\! \right)} \!\!\right]\!\ln \!\!\left[\!\! {\frac{{{H_{\rm{E}}}A \!+\! 2\delta }}{{\sqrt {2\pi } {\sigma _{\rm{E}}}\!\left(\! {1 \!-\! 2{\cal Q}\!\left(\! {\frac{\delta }{{{\sigma _{\rm{E}}}}}} \!\right)} \!\right)}}}\!\! \right]} \right.\nonumber \\
&&\left. { + {\cal Q}\left( {\frac{\delta }{{{\sigma _{\rm{E}}}}}} \right) + \frac{\delta }{{\sqrt {2\pi } {\sigma _{\rm{E}}}}}{e^{ - \frac{{{\delta ^2}}}{{2\sigma _{\rm{E}}^2}}}} - \frac{1}{2}} \right\}\nonumber\\
&&\triangleq I_8-I_9,
 \label{eqh7}
\end{eqnarray}
where $\delta $ is given by (\ref{eq24}). Moreover, we have
\begin{eqnarray}
\mathop {\lim }\limits_{A \to \infty } I_8 - \mathop {\lim }\limits_{A \to \infty } \ln A
 = \frac{1}{2}\ln \left( {\frac{{H_{\rm{B}}^2}}{{2\pi e\sigma _{\rm{B}}^2}}} \right).
  \label{eqh8}
\end{eqnarray}
Furthermore, we can obtain
\begin{eqnarray}
\mathop {\lim }\limits_{A \to \infty } I_9 - \mathop {\lim }\limits_{A \to \infty } \ln A
= \ln \left( {\frac{{{H_{\rm{E}}}}}{{\sqrt {2\pi } {\sigma _{\rm{E}}}}}} \right).
\label{eqh9}
\end{eqnarray}
According to (\ref{eqh7}), (\ref{eqh8}) and (\ref{eqh9}), eq. (\ref{eqh6}) can also be derived. This indicates that for all $\alpha  \in (0,1)$,
eq. (\ref{eqh6}) always holds.

In this section, the asymptotic expression of the upper bound on secrecy capacity is based on (\ref{eq26}) in \emph{Theorem \ref{them4}}.
According to (\ref{eq26}), when $A \to \infty $, we have
\begin{equation}
\mathop {\lim }\limits_{A \to \infty } C_{\rm s} \le \ln \left( {\frac{{{H_{\rm{B}}}{\sigma _{\rm{E}}}}}{{{H_{\rm{E}}}{\sigma _{\rm{B}}}}}} \right).
\label{eqh12}
\end{equation}
From (\ref{eqh6}) and (\ref{eqh12}), eq. (\ref{eq26_1}) can be derived.

\begin{IEEEbiography}[{\includegraphics[width=1in,height=1.25in,clip,keepaspectratio]{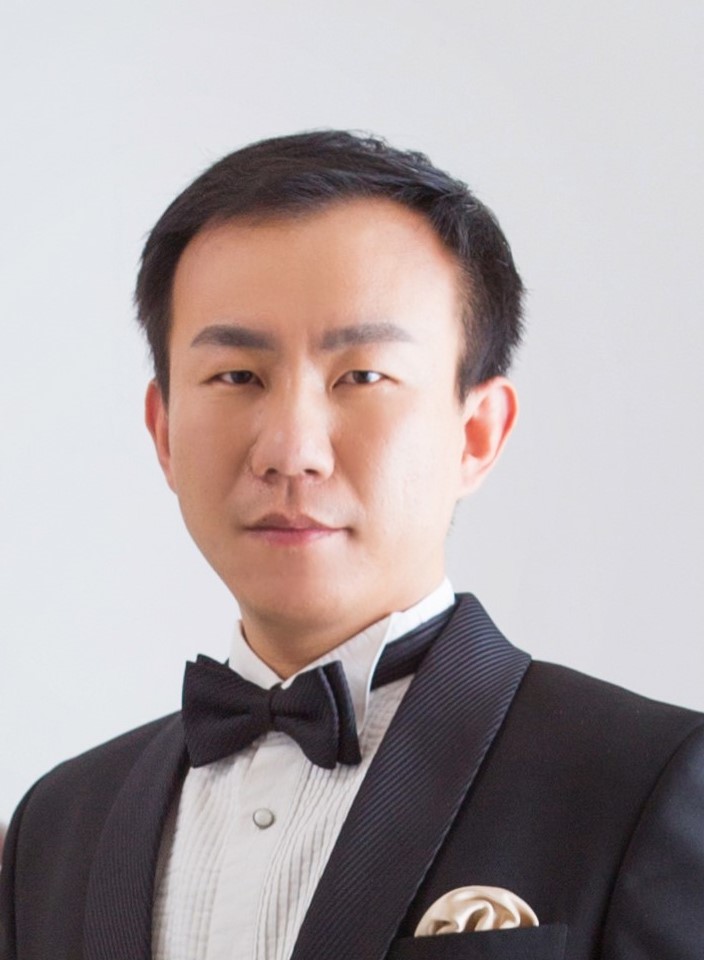}}]
{Jin-Yuan Wang} (S'12--M'16) received the B.S. degree in communication engineering from the College of Information and Electrical Engineering, Shandong University of Science and Technology, Qingdao, China, in 2009, the M.S. degree in electronic and communication engineering from the College of Electronic and Information Engineering, Nanjing University of Aeronautics and Astronautics, Nanjing, China, in 2012, and the PhD degree in Information and Communication Engineering from the National Mobile Communications Research Laboratory, Southeast University, Nanjing, China, in 2015.  He is currently a lecturer at College of Telecommunications and Information Engineering, Nanjing University of Posts and Telecommunications, Nanjing, China. His current research interest is visible light communications. He has authored/coauthored over 80 journal/conference papers. He has been a Technical Program Committee member for many international conferences, such as IEEE ICC and WTS. He also serves as a reviewer for many journals.
\end{IEEEbiography}

\begin{IEEEbiography}[{\includegraphics[width=1in,height=1.25in,clip,keepaspectratio]{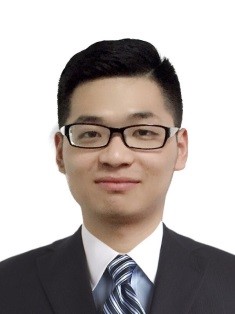}}]
{Cheng Liu} received the B.S. degree in communication engineering from the School of Information and Electrical Engineering, Hunan University of Science and Technology, China, in 2016.
He is currently studying for the M.S. degree in communication and information system from the National Mobile Communications Research Laboratory, Southeast University, Nanjing, China. His current research interest is physical-layer security in visible light communications.
\end{IEEEbiography}

\begin{IEEEbiography}[{\includegraphics[width=1in,height=1.25in,clip,keepaspectratio]{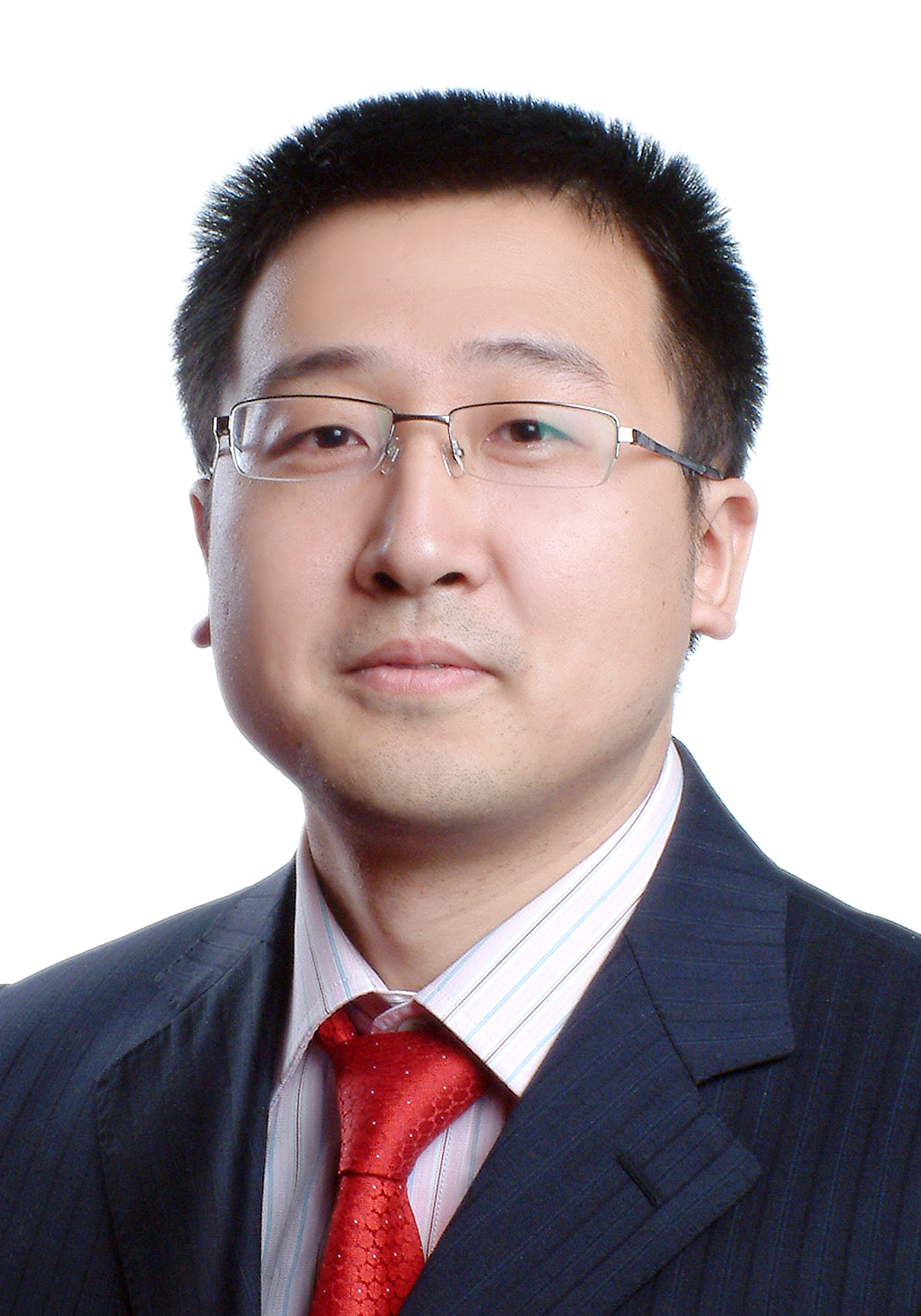}}]
{Jun-Bo Wang} (M'11) received the B.S. degree in computer science from the Hefei University of Technology, Hefei, China, in 2003, and the Ph.D. degree in communications engineering from the National Mobile Communications Research Laboratory, Southeast University, Nanjing, China, in 2008. He is currently an Associate Professor at National Mobile Communications Research Laboratory, Southeast University. From October of 2008 to August of 2013, he was with the Nanjing University of Aeronautics and Astronautics, China. From March of 2011 to February of 2013, he was a Postdoctoral Fellow at the National Laboratory for Information Science and Technology, Tsinghua University, Beijing, China. From September of 2016 to now, he is a Marie Sk\l odowska-Curie visiting scholar at the University of Kent, Kent, United Kingdom. His current research interests are wireless communications, signal processing, information theory and coding.
\end{IEEEbiography}

\begin{IEEEbiography}[{\includegraphics[width=1in,height=1.25in,clip,keepaspectratio]{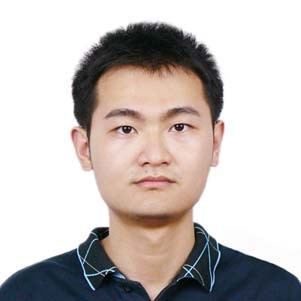}}]
{Yongpeng Wu} (S'08--M'13--SM'17) received the B.S. degree in telecommunication engineering from Wuhan University, Wuhan, China, in July 2007, the Ph.D. degree in communication and signal processing with the National Mobile Communications Research Laboratory, Southeast University, Nanjing, China, in November 2013. Dr. Wu is currently a senior research fellow with Institute for Communications Engineering, Technical University of Munich, Germany. Previously, he was the Humboldt research fellow and the senior research fellow with Institute for Digital Communications, University Erlangen-Nurnberg, Germany. During his doctoral studies, he conducted cooperative research at the Department of Electrical Engineering, Missouri University of Science and Technology, USA. His research interests include massive MIMO/MIMO systems, physical layer security, signal processing for wireless communications, and multivariate statistical theory. Dr. Wu was awarded the IEEE Student Travel Grants for IEEE International Conference on Communications (ICC) 2010, the Alexander von Humboldt Fellowship in 2014, the Travel Grants for IEEE Communication Theory Workshop 2016, and the Excellent Doctoral Thesis Awards of China Communications Society 2016. He was an Exemplary Reviewer of the IEEE Transactions on Communications in 2015, 2016. He is the lead guest editor for the special issue "Physical Layer Security for 5G Wireless Networks" of the IEEE Journal on Selected Areas in Communications. He is currently an editor of the IEEE Communications Letters. He has been a TPC member of various conferences, including Globecom, ICC, VTC, and PIMRC, etc.
\end{IEEEbiography}

\begin{IEEEbiography}[{\includegraphics[width=1in,height=1.25in,clip,keepaspectratio]{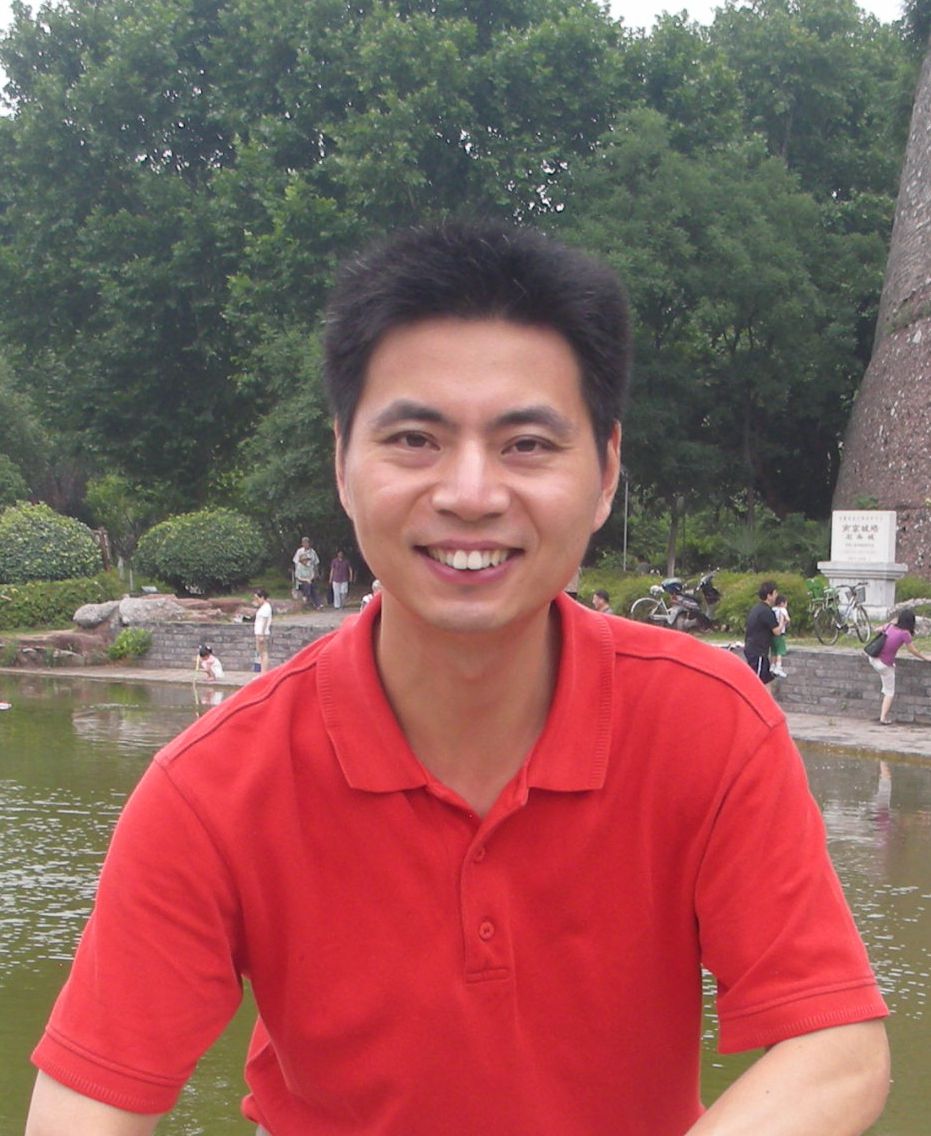}}]
{Min Lin} (M'13) received the B.S. degree from National University of Defense Technology, Changsa, China, in 1993, the M.S. degree from Nanjing Institute of Communication Engineering, Nanjing, China, in 2000, and the Ph.D. degree from Southeast University, Nanjing, in 2008, all in electrical engineering. He is currently a Professor with Nanjing University of Posts and Telecommunications. He has authored or coauthored over 100 papers. His current research interests include wireless communications and array signal processing. Dr. Lin has served as the TPC of many IEEE sponsored conferences, including IEEE ICC, Globecom, etc.
\end{IEEEbiography}

\begin{IEEEbiography}[{\includegraphics[width=1in,height=1.25in,clip,keepaspectratio]{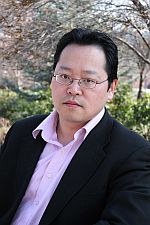}}]
{Julian Cheng} (S'96--M'04--SM'13) received the B.Eng. degree (Hons.) in electrical engineering from the University of Victoria, Victoria, BC, Canada, in 1995, the M.Sc. (Eng.) degree in mathematics and engineering from Queens University, Kingston, ON, Canada, in 1997, and the Ph.D. degree in electrical engineering from the University of Alberta, Edmonton, AB, Canada, in 2003. He is currently a Full Professor in the School of Engineering, Faculty of Applied Science, The University of British Columbia, Kelowna, BC, Canada. He was with Bell Northern Research and NORTEL Networks. His current research interests include digital communications over fading channels, statistical signal processing for wireless applications, optical wireless communications, and 5G wireless networks. He was the Co-Chair of the 12th Canadian Workshop on Information Theory in 2011, the 28th Biennial Symposium on Communications in 2016, and the 6th EAI International Conference on Game Theory for Networks (GameNets 216). He currently serves as an Area Editor for the IEEE TRANSACTIONS ON COMMUNICATIONS, and he was a past Associate Editor of the IEEE TRSACTIONS ON COMMUNICATIONS, the IEEE TRANSACTIONS ON WIRELESS COMMUNICATIONS, the IEEE COMMUNICATIONS LETTERS, and the IEEE ACCESS. Dr. Cheng served as a Guest Editor for a Special Issue of the IEEE JOURNAL ON SELECTED AREAS IN COMMUNICATIONS on Optical Wireless Communications. He is also a Registered Professional Engineer with the Province of British Columbia, Canada. Currently he serves as the President of the Canadian Society of Information Theory.
\end{IEEEbiography}

\end{document}